\documentclass[11pt]{article}

\usepackage{amsmath}
\usepackage{enumitem}
\setlist[description]{style=nextline, labelindent=0pt}
\usepackage{amssymb}
\usepackage{amsfonts}
\usepackage{textcomp}
\usepackage{algorithm}
\usepackage{authblk}
\usepackage[noend]{algpseudocode}
\usepackage{amsthm}
\usepackage{xspace}
\usepackage{fullpage}
\usepackage[english]{babel}
\usepackage[utf8]{inputenc}
\let\originalS\S 
\usepackage[classfont=bold]{complexity}
\renewcommand{\S}{\originalS}
\usepackage[pagebackref=false, colorlinks=true, citecolor=DarkGreen, linkcolor=Navy]{hyperref}
\usepackage{mathtools}
\usepackage[svgnames]{xcolor}
\usepackage{bussproofs}
\usepackage{subfig}
\usepackage{todonotes}
\usepackage{thmtools}
\usepackage{thm-restate}
\usepackage[capitalise, noabbrev]{cleveref}
\usepackage{bm}
\usepackage{bbm}
\usepackage[theorems,skins]{tcolorbox}

\usepackage[T1]{fontenc}
\usepackage{lmodern}

\usepackage{titlesec}
\titleformat{\section}
  {\large\bfseries} %
  {\thesection}{1em}{}
\titleformat{\subsection}
  {\bfseries} %
  {\thesubsection}{1em}{}

\usepackage[%
activate={true,nocompatibility},
selected=true,
tracking=true,
factor=1100,
babel=true
]%
{microtype}

\usepackage
[%
labelformat=default, %
labelsep=colon, %
textformat=simple, %
font={small,sf}, %
labelfont=bf %
]%
{caption}

\patchcmd{\thebibliography}
  {\labelsep}
  {\labelsep \itemsep=0pt \parsep=0pt \parskip=0pt}
  {}{}

\makeatletter 
 \renewenvironment{abstract}{%
   \small%
   \begin{center}%
     {\bfseries %
     \abstractname\vspace{-.5em}\vspace{\z@}}%
   \end{center}%
   \quotation}
 {\endquotation}
\makeatother

\theoremstyle{plain}
\newtheorem{theorem}{Theorem}[section]
\newtheorem{lemma}[theorem]{Lemma}
\newtheorem{corollary}[theorem]{Corollary}
\newtheorem{proposition}[theorem]{Proposition}

\theoremstyle{definition}
\newtheorem{remark}[theorem]{Remark}

\newtheorem{definition}[theorem]{Definition}

\newtheorem{fact}[theorem]{Fact}

\newtheorem*{integrality-assumption}{Lemma~\ref{lem:ipr-alekhnovich-razborov}}

\newtheorem*{reducibility-gives-reduction-operator}{Lemma~\ref{lem:reducibility-gives-reduction-operator}}

\usepackage{todonotes}

\newcommand{\newauthor}[3]{
    \newcounter{#1comment}
    \setcounter{#1comment}{1}
    \expandafter\newcommand\csname #1\endcsname[1]{%
            \todo[size = \footnotesize, backgroundcolor = {#3}, caption = {}]{
                \arabic{#1comment}:
                {##1} --~\textbf{#2}
            }
            \addtocounter{#1comment}{1}
    }
    \expandafter\newcommand\csname #1changed\endcsname[1]{%
        \ifdraft
            \colorbox{#3}{
                ##1
            }%
        \else
            ##1%
        \fi
    }
}

\usepackage{fancyhdr}

\newauthor{jc}{Jonas}{orange}
\newauthor{shp}{Shuo}{yellow}
\newauthor{yg}{Yassine}{LimeGreen}

\newcommand{\PC}{\mathrm{PC}}
\newcommand{\DPC}{\operatorname{Deg}_{\PC,\F}}
\newcommand{\res}[2]{#1\!\upharpoonright_{#2}}
\newcommand{\Vars}{\mathrm{Vars}}

\DeclareMathOperator{\LM}{LM}
\DeclareMathOperator{\ML}{ML}

\DeclareMathOperator{\OR}{OR}
\DeclareMathOperator{\AND}{AND}
\DeclareMathOperator{\NOT}{NOT}
\renewcommand{\MOD}{\mathrm{MOD}}
\DeclareMathOperator{\App}{App}
\DeclareMathOperator{\Rule}{R}

\newcommand{\intersection}{\cap}

\newcommand{\union}{\cup}

\newcommand{\eqperiod}{\enspace .}
\newcommand{\eqcomma}{\enspace ,}

\newcommand{\expectation}{\mathbb{E}}
\newcommand{\N}{\mathbb{N}}

\newcommand{\F}{\mathbb{F}}

\renewcommand{\epsilon}{\varepsilon}

\renewcommand{\epsilon}{\varepsilon}
\newcommand{\nbh}[1]{N(#1)}

\newcommand{\pseudred}{R}

\newcommand{\nvars}{n}

\newcommand{\ideal}[1]{\langle #1 \rangle}
\newcommand{\redop}{R}

\newcommand{\degstd}{D}

\newcommand{\setP}{P}

\newcommand{\vars}[1]{\operatorname{Vars}(#1)}

\newcommand{\cl}[1]{\operatorname{cl}(#1)}

\newcommand{\polyp}{p}
\newcommand{\polyq}{q}

\newcommand{\refstd}{\pi}

\newcommand{\B}{\mathbb B}
\newcommand{\im}{\operatorname{im}}
\newcommand{\Res}[1]{\operatorname{Res}(#1)}

\newcommand{\polyspace}[1]{V_{#1}}
\newcommand{\rankshort}{\mathrm{rk}}
\newcommand{\rules}{\mathcal{R}}
\newcommand{\Resol}{\operatorname{Resol}}
\newcommand{\sdpolyp}{f}

\providecommand{\sdFp}{\mathbb{F}_p}
\providecommand{\sdPC}{\operatorname{Deg}_{\mathrm{PC},\mathbb{F}_p}}
\providecommand{\sdAC}{\mathrm{AC}^{0}[p]}
\providecommand{\sdMOD}{\mathrm{MOD}}
\providecommand{\sdBool}{\mathrm{Bool}}
\providecommand{\sdML}{\mathrm{ML}}

\providecommand{\sdRank}{\operatorname{rank}}
\providecommand{\sdCD}{\PC_{\dthresh}}

\providecommand{\sdE}{\mathbb{E}}

\providecommand{\dthresh}{D}
\newcommand{\ACzero}{\mathsf{AC}^0}

\DeclareMathOperator{\mult}{\operatorname{mult}}

\usepackage[letterpaper,
   top=1in,
    bottom=1in,
    left=1in,
    right=1in
]{geometry}

\title{Random ($\log{n}$)-CNF Formulas Are Hard for Depth-$2$ Frege with Modular Gates}

\author{Jonas Conneryd\thanks{Lund University. \texttt{\href{mailto:jonas.conneryd@cs.lth.se}{jonas.conneryd@cs.lth.se}
}}}

\date{\today}

\begin{document}
\maketitle

\thispagestyle{empty}

\begin{abstract}
    \noindent
     We prove that, with high probability, refuting a random $k$-CNF formula on $n$ variables with ${M = n2^{k}}$ clauses for $\omega_n(1) \leq k \leq O(\log{n})$ requires polynomial calculus degree at least ${n/2-O(n/\sqrt{k})}$ over every field. Through Pang's recent tradeoff between $\ACzero[p]$\nobreakdash-Frege size and polynomial calculus degree~\cite[Section~8]{Pang26SizeDegree}, this implies a size lower bound of $\exp(\Omega_p(\sqrt{k} \log{n}))$ for depth-$2$ $\ACzero[p]$-Frege. In this range of $k$, the bound is superpolynomial in $M$ while $M \leq \poly(n)$. 
     We also prove a nearly matching degree upper bound of $n/2-n\ln(k)/{2k} +O(n/k)$, which shows that other formula families are needed to obtain lower bounds for full $\ACzero[p]$-Frege through Pang's tradeoff.

\end{abstract}

\flushbottom

\paragraph{AI Disclosure:} The proofs of the results in this paper were mainly conceived by ChatGPT-6 Astra and ChatGPT-6.1 Sol after suggestions by and resulting discussions with the author. Both models also contributed to the writing and editing of the manuscript. The author checked and revised all AI-generated results, and takes full responsibility for the content of the paper.

\newpage
\thispagestyle{empty}
\tableofcontents

\newpage

\setcounter{page}{1}

\section{Introduction}
\label{sec:introduction}

Ever since the seminal work of Cook and Reckhow~\cite{CR79Relative} that founded the field of proof complexity, the so-called \emph{Frege} proof systems have played a fundamental role in this area. These systems reason using any sound and complete set of propositional inference rules, and each line in a \emph{$\mathcal{C}$-Frege} proof is a Boolean circuit from the class $\mathcal{C}$. 

Historically, proof complexity lower bounds for $\mathcal{C}$-Frege have followed developments in lower bounds for $\mathcal{C}$-circuits. Perhaps the most notable example is when $\mathcal{C}$ is the class $\ACzero$, which consists of polynomial-size constant-depth circuits with gates in the de Morgan basis. In the 1980s, a celebrated series of works~\cite{Ajtai83Formulae, FSS84ParityCircuits, Yao85Separating,  Hastad86AlmostOptimal} culminated in nearly optimal lower bounds for low-depth circuits computing, e.g., the parity function. These works identified the power of \emph{random restrictions} for simplifying $\ACzero$ circuits, as captured by so-called \emph{switching lemmas}. Building on these works as well as more recent strengthenings such as~\cite{Hastad14Correlation, HRST17AverageCase}, a long sequence of papers, e.g.,~\cite{Ajtai94Complexity, PBI93ExponentialLowerBounds, KPW95ExponentialLowerBound, PRST16Polylogarithmic, Hastad20SmallDepth, RPT22Tradeoffs, HR25BoundedDepth, Hastad25PHP}, has successfully ported these powerful techniques to $\ACzero$-Frege and obtained strong lower bounds for this proof system. 

Since parity does not simplify under random restrictions, a natural next step in circuit complexity was to allow the circuit gates to compute parity functions or, more generally, modular predicates. The resulting class of polynomial-size constant-depth circuits with AND, OR, NOT, or $\MOD_{p, b} \coloneqq \mathbbm{1}[\text{sum of inputs is $b\bmod{p}$}]$ gates is denoted $\ACzero[p]$. For prime~$p$, lower bounds for $\ACzero[p]$ circuits were shown by Razborov~\cite{Razborov87LowerBounds} and Smolensky~\cite{Smolensky87Algebraic} using the so-called \emph{approximation method}. Here, the idea is to approximate each circuit gate by a low-degree polynomial over the finite field $\F_p$ and then use the depth assumption to build a good low-degree polynomial approximation of any function computed by an $\ACzero[p]$ circuit. By showing that a Boolean function $f$ cannot be approximated by low-degree polynomials, one thus concludes that $f\not\in \ACzero[p]$.  

Although the switching lemmas involved in lower bounds for $\ACzero$-Frege are notoriously technical, the method of random restrictions is fundamentally compatible with proof complexity since reasonable proof systems retain valid inferences when restricting variables in each proof line. By contrast, the approximation method seems much harder to implement in this setting, and therefore it has been an open problem since \cite{Razborov87LowerBounds, Smolensky87Algebraic} to prove lower bounds for $\ACzero[p]$-Frege. %
There have, however, been attempts to do so. 
Buss et al.~\cite{BIKPRS97ProofCplx} introduced a version of the approximation method into proof complexity already in the 1990s: they showed that lower bounds for $\ACzero[p]$-Frege follow from degree lower bounds for the algebraic proof system \emph{Nullstellensatz}~\cite{BIKPP94LowerBounds} augmented with extension axioms that, roughly speaking, formalize the approximation method. This direction has been further explored by, e.g., Kraj{\'i}{\v{c}}ek~\cite{Krajicek15Reduction, Krajicek24ExtendedNullstellensatz, Krajicek26PseudoSolutions} but has thus far not led to $\ACzero[p]$-Frege lower bounds.

Very recently, there has been exciting progress in this line of work. In an AI-assisted breakthrough, Braun~\cite{Braun26LowerBound} proved the first superpolynomial (in fact, exponential) lower bounds for the \emph{resolution over parities} proof system~\cite{IS14Resolution}, also denoted $\Res{\oplus}$, where each line is a disjunction of $\F_2$-linear equations (and hence a particularly simple $\ACzero[2]$-circuit). His lower bound is for the so-called \emph{bit pigeonhole principle}. His proof uses extension axioms as in~\cite{BIKPRS97ProofCplx} to replace each line in a small $\Res{\oplus}$ proof by a low-degree polynomial. He then argues that there is a variable substitution that removes all extension variables of lines that were previously low-rank. To deal with high-rank lines, he introduces the \emph{common-multiplier method}: he shows that there is a polynomial $f$ of not-too-high degree such that all $f$-multiples of high-rank lines are derivable in low degree from the input axioms in the \emph{polynomial calculus} proof system~\cite{CEI96Groebner,ABRW02SpaceComplexity}, while~$f$ itself is not. Thus, multiplying all lines of the substituted refutation by $f$, he obtains a low-degree polynomial calculus derivation of $f$---a contradiction. To argue that such an $f$ exists, Braun extends Razborov's particularly strong polynomial calculus degree lower bound for the pigeonhole principle~\cite{Razborov98LowerBound}.

Shortly thereafter, the common-multiplier method was adapted and strengthened by Pang \cite[Section~8]{Pang26SizeDegree} in another AI-assisted paper. The approach in \cite{Pang26SizeDegree} instead replaces each line $L$ in an $\ACzero[p]$-Frege proof directly by (the negation of) its Razborov--Smolensky approximation $\App_L$ in the original variables. Each inference $L_1, \ldots, L_s \vdash L$ then becomes an approximate inference $\App_{L_1}, \ldots, \App_{L_s} \widetilde{\vdash} \App_{L}$ with an associated \emph{inference error polynomial $E$}, which is $1$ when the approximated premises are true but the approximated conclusion is false. Pang then argues that in a small proof, a strong enough polynomial calculus degree lower bound yields a low-degree common multiplier $f$ that turns all $f$-multiples of the error polynomials into low-degree consequences of the input axioms, while $f$ itself is not derivable in low degree. As in~\cite{Braun26LowerBound}, one  can then convert the proof into a low-degree polynomial calculus derivation of $f$ to obtain a contradiction. This yields a \emph{generic} size-degree tradeoff, showing that strong enough polynomial calculus degree lower bounds imply $\ACzero[p]$-Frege lower bounds in a black-box fashion.

Currently, the drawback is that the parameters in Pang's tradeoff are very restrictive: for $\ell \geq 3$ and a formula $F$ with $n$ variables, his results imply superpolynomial lower bounds for refuting $F$ in depth-$\ell$ $\ACzero[p]$-Frege whenever the polynomial calculus degree required to refute $F$ is at least $n/2 - o(n/\log^{\ell-2}(n))$; see \cref{thm:sd-fixed-depth}. For $\ell \geq 3$, such strong degree lower bounds are not known for any $\poly(n)$-size formula.\footnote{The same arguments as for \cref{thm:main} applied to suitably large $k \gg \log{n}$ let us go even beyond degree $n/2$ for superpolynomially large randomly sampled formulas. However, any minimally unsatisfiable subformula then turns out to be larger than the lower bound obtained from Pang's tradeoff, making the result trivial.} However, it turns out that for depth-$2$ $\ACzero[p]$-Frege, a degree lower bound of $n/2 - o(n)$ suffices. Such a bound is the main result of this paper. 

\begin{theorem}\label{thm:main}
For every sufficiently large integer $n$ and $\omega_n(1)\leq k \leq O(\log{n})$, it holds with probability $1-o_n(1)$ that a random $k$-CNF formula on $n$ variables with $M=n2^k$ clauses is unsatisfiable but requires polynomial calculus degree strictly greater than $n/2-4n/\sqrt{k}$ to refute over every field. 
\end{theorem}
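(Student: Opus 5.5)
We would follow the Alekhnovich--Razborov route and build a degree-$d$ reduction operator for $d=\lfloor n/2-4n/\sqrt{k}\rfloor$. The operator sends each multilinear monomial $m$ to its normal form modulo the ideal generated by the clause polynomials (plus Boolean axioms) that are supported inside a \emph{closure} $\cl{\Vars(m)}$. The key feature is that closures are allowed to have size up to roughly $n/2+O(n/\sqrt{k})$, not merely $\epsilon n$. By \cref{lem:reducibility-gives-reduction-operator} and \cref{lem:ipr-alekhnovich-razborov}, it then suffices to show three things:
\begin{itemize}
\item[(i)] the closure is monotone;
\item[(ii)] for $S\subseteq S'$, reducing modulo the clauses inside $\cl{S}$ or inside $\cl{S'}$ agrees on polynomials supported in $\cl{S}$;
\item[(iii)] $1$ does not reduce to $0$.
\end{itemize}
All three reduce to field-independent combinatorial properties of the clause hypergraph: the clauses living in a closed set must be satisfiable, and every partial assignment to a closed set $T$ that satisfies the clauses inside $T$ must extend to a satisfying assignment of the clauses inside any larger closed $T'\supseteq T$. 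Because these conditions are combinatorial, the bound holds over every field simultaneously. Unsatisfiability is the standard first-moment count: $2^n(1-2^{-k})^{n2^k}\le 2^n e^{-n}=o(1)$.

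The closure would be defined greedily. Starting from $S$, repeatedly add the variables of any clause having at least $k-\sqrt{k}$ variables already in the current set, and take the unique minimal closed superset. Closed sets then have a useful property: every clause not yet inside $T$ has at least $\sqrt{k}$ free variables outside $T$. Under an expansion property of the clauses outside $T$, those free variables let any assignment of $T$ extend, via a Hall-type matching argument, exactly as in the usual boundary-expansion proofs. The clauses inside $T$ are satisfiable by the same matching argument applied to the (few) clauses inside $T$.

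The heart of the proof, and what I expect to be the main obstacle, is the probabilistic estimate that for $|S|\le d$ the closure stays below the threshold where the formula becomes dense. We need $|\cl{S}|\le n/2-2n/\sqrt{k}$, say, and the relevant clause families must be expanding. Fix a set $T$ of size $\beta n$ with $\beta\le 1/2-2/\sqrt{k}$. A random clause has at least $k-\sqrt{k}$ variables in $T$ with probability about $\binom{k}{\sqrt{k}}\beta^{k-\sqrt{k}}$. Multiplying by $M=n2^k$ gives an expected count $n\cdot 2^k\beta^{k-\sqrt k}\binom{k}{\sqrt k}$. Since $(2\beta)^k\le e^{-4\sqrt{k}}$, this count is at most $n e^{-c\sqrt{k}}$ once we check that the $\binom{k}{\sqrt{k}}(1/\beta)^{\sqrt{k}}$ factor is $e^{O(\sqrt{k}\log k)}$. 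That factor is too large, so the parameters must be tuned: I would instead use threshold $k-k/\log k$ or similar, and absorb the loss into the constant $4$.

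To rule out runaway growth, I would take a union bound over all $\binom{n}{\beta n}\le 2^n$ sets $T$. This needs a Chernoff tail of the form $e^{-\Omega(n\cdot\mathrm{something})}$ beating $2^n$ on the event that many clauses are nearly inside $T$. The saving is that growth from $|S|\le d$ to $|S|+\Delta$ requires about $\Delta/\sqrt{k}$ such clauses. I expect the delicate balancing to be precisely this: the union bound over $2^n$ sets against a binomial tail whose mean is only $n e^{-\Theta(\sqrt{k})}$. This is where the $O(n/\sqrt{k})$ loss should come from, with the range $\omega(1)\le k\le O(\log n)$ used to make the error terms $o(1)$.
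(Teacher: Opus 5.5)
\textbf{Gap in the closure.} The problem sits exactly at the step you flag as the main obstacle. The greedy closure (absorb every clause with at most $t$ variables outside the current set) cannot be tuned to work at scale $n/2-\Theta(n/\sqrt{k})$ with the estimates you sketch, because growth control and extension pull $t$ in opposite directions.

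\emph{Growth control needs $t=O(1)$.} If the closure of $S$ grows by $\Delta$, some set of size at most $n/2-\Theta(n/\sqrt{k})$ contains about $\Delta/t$ absorbed clauses. Each lies inside it with probability only $2^{-k}e^{-\Theta(\sqrt{k})}$, so the union bound is roughly $2^n\exp(-\Theta(\sqrt{k}\,\Delta/t))$. With $\Delta=\Theta(n/\sqrt{k})$ this is $o(1)$ only for $t=O(1)$. Your choice $t=\sqrt{k}$ gives about $n/k$ absorbed clauses and a saving of merely $e^{-\Theta(n/\sqrt{k})}$ against $2^n$. As you noticed, even the expected number of absorbable clauses, $n\binom{k}{\sqrt{k}}e^{-\Theta(\sqrt{k})}$, is already $\gg n$. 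Your proposed repair, threshold $k-k/\log k$, allows $k/\log k>\sqrt{k}$ outside variables, so it makes $\binom{k}{t}$ larger, not smaller.

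\emph{Extension needs $t$ large.} With $t=O(1)$, the guarantee of $t+1$ free variables per unabsorbed clause is too weak for your Hall-type extension. The first-moment bound that some $j$ clauses each have $t+1$ variables in a common set of fewer than $j$ variables is about $\bigl[e^2 2^k\binom{k}{t+1}(j/n)^{t-1}\bigr]^j$. This is small only for $j\le n2^{-\Theta(k/t)}$, while up to $n/\sqrt{k}$ clauses may lie in a closed set. To use the fact that the remaining variables lie in $T$, you need a union bound over the adversarial $T$, which costs $2^n$, and small $j$ cannot pay for that. So your key combinatorial claim, that every assignment to a closed $T$ extends, is never established.

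\textbf{The missing idea.} Define the closure via matching deficiency rather than a per-clause threshold. Take $\cl{T}$ to be the inclusion-wise largest maximizer of $h_T(A)=|A|-|N(A)\setminus T|$ over clause sets with $|A|\le 2s$, where $s=n/\sqrt{k}$, and reduce $m$ modulo the clauses of $\cl{\Vars(m)}$. Uniqueness, monotonicity and absorption follow from supermodularity of $h_T$. Maximality alone gives Hall's condition for $B\setminus\cl{T}$ on the variables outside $T\cup N(\cl{T})$, for every $B\supseteq\cl{T}$ with $|B|\le 2s$ (\cref{lem:matching}). So extension is deterministic and needs no probabilistic statement about $T$. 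Randomness enters only through two $T$-free properties: $|N(A)|>|A|$ for $|A|\le 2s$, and $|N(A)|>D+|A|$ for $s\le|A|\le 2s$. The second forces $|\cl{T}|<s$. Its union bound succeeds because $h_T\ge 0$ means at most one new variable per absorbed clause, effectively $t\le 1$. Then $j\ge n/\sqrt{k}$ clauses inside a set of size $D+j\le n/2-2n/\sqrt{k}$ each save $e^{-4\sqrt{k}}$, for a total of roughly $e^{-4n}$. That beats the $2^n$ choices of the ambient set, and this per-clause accounting is what your greedy threshold cannot achieve.
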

As an immediate corollary of \cref{thm:main} and Pang's tradeoff, we obtain superpolynomial size lower bounds for depth-$2$ $\ACzero[p]$-Frege. 
\begin{corollary}\label{cor:size-lower-bound}
    For any prime $p$, with probability $1-o_n(1)$ a random $n$-variable $k$-CNF formula as in \cref{thm:main} requires size $\exp(\Omega_p(\sqrt{k} \log{n}))$ to refute in depth-$2$ $\ACzero[p]$-Frege. 
\end{corollary}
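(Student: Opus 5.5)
The plan is to combine \cref{thm:main} with Pang's size--degree tradeoff for depth~$2$ $\ACzero[p]$-Frege~\cite[Section~8]{Pang26SizeDegree}, used as a black box. I expect that tradeoff to state roughly the following. Suppose an $n$-variable formula $F$ has a depth-$2$ $\ACzero[p]$-Frege refutation of size $S$. Then $F$ has a polynomial calculus refutation over $\F_p$ of degree at most $n/2 - \Omega_p(n \log\log S/\log S)$, or something of comparable shape in which the deficit from $n/2$ is $n$ times a quantity that decays in $\log S$ with the depth-$2$ parameters. Equivalently, a degree lower bound of $n/2 - \delta n$ forces $\log S \geq \Omega_p(\cdot)$ of a function of $1/\delta$. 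The intuition is that the Razborov--Smolensky approximations of depth-$2$ lines have degree about $O_p(\log S)$ once the error is pushed below $1/S$. The common multiplier $f$ then needs degree about $n/2 - c\log S$-type slack so that each $f\cdot E$ stays below the degree threshold.

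First, I fix $p$ and condition on the event of probability $1-o_n(1)$ from \cref{thm:main}. On this event $F$ is unsatisfiable, and every polynomial calculus refutation over every field, in particular over $\F_p$, has degree greater than $n/2 - 4n/\sqrt{k}$. So the deficit is $\delta n$ with $\delta = 4/\sqrt{k}$.

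Second, I plug $\delta = 4/\sqrt{k}$ into the tradeoff and solve for $S$. If the tradeoff has the form ``degree $> n/2 - \delta n$ implies $S \geq \exp(\Omega_p(\log n/\delta))$'', or the same with the number of clauses $M$ entering only polynomially, this gives $S \geq \exp(\Omega_p(\sqrt{k}\log n))$ directly. The $\log n$ factor should come from the fact that the refutation must at least have size about $n$, or from the approximation error parameter being measured relative to $\poly(n,M)$. Since $M = n2^k \leq \poly(n)$ when $k = O(\log n)$, we have $\log M = \Theta(\log n)$, so any dependence on $M$ is absorbed into the constants. Because $k = \omega_n(1)$, we have $\delta = o(1)$, which places us in the regime where the tradeoff is nontrivial; the hypothesis ``$n/2 - o(n)$ suffices'' stated in the introduction is exactly this.

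The main obstacle is bookkeeping. I need to make sure the precise quantitative form of Pang's depth-$2$ statement gives a $\sqrt{k}\log n$ exponent and not, for example, $\sqrt{k}$ alone or $k^{1/4}\log n$. If the tradeoff is stated as degree $\leq n/2 - \Omega_p(n\, \log\log S/\log S)$ or similar, I would invert it carefully, keeping track of the lower-order factors and checking they are dominated under $k \leq O(\log n)$. If instead the tradeoff is only stated for a fixed target size $n^{c}$, I would apply it with $S = n^{c\sqrt{k}}$ and take the contrapositive, choosing the constant $c$ small enough depending on $p$. Since the event from \cref{thm:main} holds uniformly over fields, no additional union bound over $p$ is needed.
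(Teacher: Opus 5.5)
Your route is the paper's route, but as written it depends on a tradeoff you never pin down, and that invocation is the whole content of the corollary. The statement needed is \cref{thm:sd-depth-two}. For $\omega_n(1)\le k\le O(\log n)$, an unsatisfiable $F$ of width at most $k$ with $\sdPC(F)>n/2-an/\sqrt{k}$ needs depth-$2$ size $\exp(\Omega_{p,a}(\sqrt{k}\log n))$. Taking $a=4$ finishes the proof, since the sampled formula is unsatisfiable of width $k$ and \cref{thm:main} holds over every field, in particular $\F_p$.

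Of your candidate forms, only the second one matches this statement in the relevant regime. The others fail:
\begin{itemize}
\item A deficit of $\Omega_p(n\log\log S/\log S)$ would give only $\log S=\Omega(\sqrt{k}\log k)$. This is weaker, since $k\le O(\log n)$.
\item Your fallback is not valid. It plugs the non-constant exponent $c\sqrt{k}$ into a fixed-exponent statement, and the fixed-depth tradeoff \cref{thm:sd-fixed-depth} at $\ell=2$ yields only $n^{\omega_n(1)}$.
\end{itemize}

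Your account of the $\log n$ factor is also wrong. It has nothing to do with a trivial size bound or with $M$, which the tradeoff never sees. With $\lambda=\ln(2S)$, the bottom layer has $\Delta_1=O_p(\lambda)$. By \eqref{eq:sd-binomial-ratio}, the top-layer factor only satisfies $\eta_K(\Delta_1)\ge\Omega_p(1)\cdot\exp(-O_{p,a}(\lambda/\sqrt{k}))$. Hence $\Delta_2\le C_p\lambda^2\exp(C_{p,a}\lambda/\sqrt{k}+o(1))$, not $O_p(\log S)$ as your intuition suggests. The slack $\dthresh-K=2n/\sqrt{k}$ is polynomial in $n$, so $\Delta_2$ may be as large as about $n^{1/4}$. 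That is what lets $\lambda$ reach $c\sqrt{k}\ln n$.

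Second, citing Pang's Section~8 theorem as a black box proves the corollary only for the specific Frege rules Pang works with. The corollary is meant for an arbitrary fixed rule set $\rules$, together with the clausal modular recursion axioms of \cref{sec:AC0pFrege}. You cannot transfer between rule sets by Reckhow-style simulation: it increases depth additively, which is fatal at depth~$2$. This is why the paper reproves the tradeoff for arbitrary $\rules$ in Appendix~\ref{sec:sd}. The ingredient that must be redone is \cref{lem:sd-inference-errors}: the inference error of every sound rule instance is derivable from the approximation errors $e_{A,j}$ and the Boolean axioms in degree $B_{\rules,p}\Delta$. The extension is routine given Pang's proof, but a black-box citation of his statement does not supply it.
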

Pang's tradeoff is established for a specific set of inference rules. By Reckhow's Theorem~\cite{Reckhow75Thesis} (see also \cite[Chapter~2.3]{Krajicek19ProofComplexity}), Frege systems over different sets of inference rules polynomially simulate each other, but at the cost of an additive increase in depth. Since we want to establish depth-$2$ $\ACzero[p]$-Frege lower bounds for any fixed rule set, we provide a straightforward extension of Pang's tradeoff to this setting in Appendix~\ref{sec:sd}. This can be inferred from Pang's proof, but we give the full proof for completeness. The statement we need for \cref{cor:size-lower-bound} is recorded there as \cref{thm:sd-depth-two}. 

As previously noted, for $\ell \geq 3$ the degree lower bound needed for Pang's result to apply to depth-$\ell$ $\ACzero[p]$-Frege is $n/2-o(n/\log^{\ell-2}(n))$. Unfortunately, such lower bounds cannot be obtained from applying Pang's tradeoff to random $k$-CNF formulas as in \cref{thm:main} due to the following upper bound.
\begin{theorem}\label{thm:upper-bound}
With probability $1-o_n(1)$, an $n$-variable random $k$-CNF formula as in \cref{thm:main} can be refuted in polynomial calculus degree $n/2 -n\ln(k)/2k + O(n/k)$ over every field.
\end{theorem}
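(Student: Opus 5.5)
The plan is to give an explicit degree-$d$ refutation with $d = n/2 - n\ln(k)/2k + O(n/k)$. It combines a ``meet-in-the-middle'' trick, which brings the degree of a brute-force semantic refutation down to about $n/2$, with a probabilistic argument that recovers the extra $n\ln(k)/2k$ from clauses lying almost entirely inside one half. This is a plan rather than a checked argument. The half-cube identity in the second paragraph is the least certain step, and the probabilistic calculation in the third paragraph depends on it.

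\textbf{Step 1 (setup).} Partition the variables into two halves $A$ and $B$ with $|A|=|B|=n/2$. For a partial assignment $\alpha$ to $A$, write $\mathbb{1}_\alpha$ for the degree-$|A|$ cube indicator $\prod_{i\in A}(x_i \text{ or } 1-x_i)$; define $\mathbb{1}_\beta$ for $\beta$ on $B$ similarly. Since $1=\sum_{\alpha,\beta}\mathbb{1}_\alpha\mathbb{1}_\beta$, it suffices to derive every $\mathbb{1}_\alpha\mathbb{1}_\beta$ in low degree. The point is that this product should never be formed explicitly, since it has degree $n$.

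\textbf{Step 2 (the half-cube identity; least certain step).} The idea is to use the falsified clause to cancel one half. If $C$ is falsified by $(\alpha,\beta)$ and $C$ has $a$ variables in $A$ and $k-a$ in $B$, then $C$'s polynomial equals $\mathbb{1}_{\alpha|_{C\cap A}}\mathbb{1}_{\beta|_{C\cap B}}$. I would then try to establish a local identity of roughly the following form:
\[
\sum_{\beta}\mathbb{1}_\alpha\mathbb{1}_\beta = \mathbb{1}_\alpha,
\]
with the sum over $\beta$ reorganized as a telescoping sum over a decision tree on $B$. The tree queries $B$-variables until some clause $C$ is falsified, and the leaf polynomial is a multiple of $C$ times a cube on the queried $B$-variables outside $C$. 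If this works, each leaf polynomial has degree at most $|A| + (\text{depth of the tree on } B) - (\#\text{variables of } C \text{ in } B)$. Symmetrically, the roles of $A$ and $B$ can be swapped, so only the smaller of the two sides ever needs to be kept as a full cube. This is the mechanism that should cap the degree near $n/2$, and it needs to be checked carefully before anything else.

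\textbf{Step 3 (probabilistic savings of $n\ln(k)/2k$).} First show that with high probability, for every $\alpha$ on $A$ and $\beta$ on $B$, some clause falsified by $(\alpha,\beta)$ has at least $k-O(1)$ of its variables on one side, say $A$, touching only a few $B$-variables. The expected number of clauses with exactly $j$ variables in $B$ is about $M\binom{k}{j}2^{-k}=n\binom{k}{j}$, so there are many such near-monochromatic clauses. I would then shrink the cube on $A$: I would only enumerate $\alpha$ on a subset $A'\subseteq A$ of size $n/2-t$. The remaining $t$ variables of $A$ would be handled through clauses with $k-O(1)$ variables in $A$; whenever such a clause is falsified, it supplies the missing cube factor for free. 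Choosing $t = n\ln(k)/2k - O(n/k)$, I would verify by a Chernoff and union-bound computation that, for every assignment to $A'\cup B$, the restriction to the remaining $t$ variables leaves the formula refutable by falsifying one of these nearly-$A$ clauses within depth $O(n/k)$. The $\ln k$ should come from a coupon-collector effect: there are $n\binom{k}{O(1)}$ near-monochromatic clauses, each falsified with probability $2^{-O(1)}$ after restriction, and they must cover all $2^{t}$ completions. Unsatisfiability itself holds with high probability since $M=n2^k$ is well above the satisfiability threshold.

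\textbf{Step 4 (assembly).} Summing the leaf contributions over all $\alpha$ gives $1$, with every multiplier of degree at most $n/2-t+O(n/k)$. Boolean axioms are used only for multilinearization, and no division occurs, so the refutation is valid over every field.

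\textbf{Main obstacle.} I expect the main obstacle to be the degree accounting in Step 2, namely justifying that the cube on one half can be avoided while still summing to $1$. If that identity fails, I would fall back on the dual formulation. There I would show that no degree-$d$ pseudo-expectation exists, by showing that the relevant clause multiples span all multilinear polynomials of degree at most $d$ through a pairing between degree-$\le d$ and degree-$\ge n-d$ monomials.
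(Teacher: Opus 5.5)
Your plan has a genuine gap at the step you flagged, and it undermines everything after it. In polynomial calculus the degree of a line is the degree of the polynomial itself, including the factor $p_C$. It is not the degree of the multiplier in front of $p_C$.

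Every line your scheme produces is a product of factors $x_i$ or $1-x_i$ over all variables fixed so far. This covers $\mathbbm{1}_\alpha$, the internal nodes $\mathbbm{1}_\alpha\mathbbm{1}_\pi$ of the tree on $B$, and the leaves $p_C$ times a cube. The leading monomial of such a line is the product of all those variables, so its degree is $|A|+|\pi|\ge n/2$. Your leaf estimate $|A|+\text{depth}-|\vars{C}\cap B|$ counts only the multiplier.

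The telescoping sum over a decision tree is just the standard simulation of a depth-$h$ decision tree in PC degree $h$. A tree that fixes all of one half on every branch has $h\ge n/2$, and swapping the roles of $A$ and $B$ changes nothing. So Step~2 forces degree \emph{at least} $n/2$ rather than capping it there.

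Step~3 inherits the problem, since leaves indexed by assignments to $A'\cup B$ fix $n-t$ variables. If you enumerate only $A'$, you instead need a decision tree of depth $O(n/k)$ for $F|_{\alpha'}$ for every $\alpha'$. That amounts to a decision-tree refutation of $F$ of depth below $n/2$. Nothing in the plan supports this, and it is implausible: for typical $\alpha'$ the residual formula is a random expanding CNF on about $n/2$ free variables.

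The first claim of Step~3 is also false with high probability. For constant $c$ there are only $O(nk^{c})$ clauses with at least $k-c$ variables on one side. That is far too few to be jointly unsatisfiable, so some assignment falsifies none of them.

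What is missing is a non-tree-like, algebraic source of cancellation, which the paper supplies through a deterministic support-covering lemma. Suppose an unsatisfiable $G$ of width at most $r$ has every $r$-set $T$ of variables containing $\vars{C}$ for some $C\in G$. Then each monomial $x_T$ with $|T|\in\{r,r+1\}$ is the leading monomial of $p_C\,x_{T\setminus\vars{C}}$. Hence the reduction $R$ modulo the space $\mathcal V$ of degree-$(r+1)$ derivable polynomials maps into degree at most $r-1$. Closure of $\mathcal V$ under multiplication then makes $X_iw=R(x_iw)$ commuting idempotents on $\mathcal W=\im R$, and each $p_C(X_1,\ldots,X_n)$ annihilates $\mathcal W$. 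If $1\notin\mathcal V$, a common eigenvector gives a satisfying assignment, a contradiction.

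Randomness is used only to show that the width-$(2k-2)$ resolvents $\Resol(F)$ cover all $r$-sets with $r=n/2-(n/2k)\ln(k/32)$. The clauses are split into two batches. For a fixed $T$, a constant fraction of the outside variables $z$ are the unique outside variable of some first-batch clause. A second-batch clause then resolves against one of these except with probability $e^{-\Omega(n)}$, which beats the union bound over all $T$. The $\ln k$ comes from needing $k(2r/n)^{k-1}=\Theta(1)$ for this step, not from a coupon-collector effect.

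Your fallback of showing that clause multiples span all low-degree polynomials is undeveloped. It also asks for a static, Nullstellensatz-type statement that is stronger than what is needed. The paper's argument relies essentially on closure of $\mathcal V$ under multiplication and on unsatisfiability of $G$.
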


\subsection{Proof Overview}\label{sec:proof-overview}

\paragraph{Lower bounds.}  

We prove the lower bound in \cref{thm:main} by constructing a so-called \emph{degree-$\degstd$ pseudo-reduction operator}~\cite{Razborov98LowerBound} for a random $k$-CNF formula $F$. This is an $\F$-linear operator $\pseudred$ on polynomials that sends the constant monomial $1$ to $1$ but sends all polynomials derivable from the input in degree $\degstd$ to $0$; hence, if such an operator exists there can be no degree-$\degstd$ refutation.
To construct $\pseudred$, we use the \emph{Alekhnovich--Razborov method}~\cite{AR03LowerBounds} based on the notion of \emph{reduction operators} from Gr\"obner basis theory. 

For a set $\setP$ of polynomials and a so-called \emph{admissible} total ordering $\prec$ on monomials (where in particular $1$ is smallest), the reduction operator $R^\prec_{\ideal{P}}$ applied to a polynomial $p$ finds a polynomial $q$ in the ideal $\ideal{\setP}$ generated by $\setP$ whose largest monomial with respect to $\prec$ is in~$p$, subtracts $q$ from $p$, and then iterates on $p-q$ until there is no such $q$, similarly to a matrix being reduced to row-echelon form. The idea behind the Alekhnovich--Razborov method is that if the formula $F$ were satisfiable, then $R_{\ideal{F}}$ would be the best possible pseudo-reduction operator: $1$ is then not in $\ideal{F}$ and therefore mapped to $1$, while all polynomials derived from $F$, of any degree, are in $\ideal{F}$ and hence sent to $0$. A pseudo-reduction operator for unsatisfiable $F$ should therefore resemble $\pseudred_{\ideal{F}}^\prec$ as closely as possible while sending $1$ to $1$.

Toward this goal, Alekhnovich and Razborov associate to each monomial $m$ a \emph{satisfiable} subset $S(m) \subseteq F$ and define $\pseudred$ on monomials by $\pseudred(m) \coloneqq \smash{R^\prec_{\ideal{S(m)}}(m)}$, linearly extended to arbitrary polynomials. Since each $S(m)$ is satisfiable, we automatically get $R(1)=1$ with this definition. The challenge is to pick $S(m)$ so that (i) each input polynomial $p_C$ for $C\in F$ is sent to $0$, and (ii) $R$ is compatible with multiplication, in the sense that $R(x_im) = R(x_iR(m))$ whenever $\deg{m} \leq \degstd -1$. 

As in many previous works, the property of random $k$-CNF formulas on $n$ variables that we use to show (i) and (ii) is that their \emph{constraint-variable incidence graphs (CVIGs)}---that is, the bipartite graph with clauses on the left, variables on the right, and edges whenever a variable appears as a literal in a clause---are excellent \emph{expanders}. This property takes many concrete forms depending on the application, but in our case it can be formulated as follows: ``small'' subformulas of $F$ of size smaller than $s=n/\sqrt{k}$ are satisfiable, while ``medium-size'' subformulas in the interval $s$ to $2s$ have large neighborhoods in the CVIG.

With this notion of expansion, we formulate our lower bound arguments using the well-known notion of \emph{(matching) deficiency}~\cite[Chapter~14.2]{HandbookSAT21}. For a variable set $T$, the deficiency of a clause set $A$ with respect to $T$ is the quantity
\begin{equation}
  \label{eq:definciency}
  h_T(A)=|A|-|\Vars(A)\setminus T|\eqperiod
\end{equation}
A set $A$ with positive deficiency with respect to $T$ is an obstruction to Hall's condition being satisfied in the CVIG $G_F$ of $F$ with the variables in $T$ removed. Note that if Hall's condition were satisfied in $G_F$, then $F$ would be satisfiable since assigning each literal to satisfy their matched clause is a consistent satisfying assignment for $F$. 

Using expansion, we can show that all medium-sized sets have negative deficiency with respect to any variable set $T$ of size at most $n/2-4s$. Thus, any obstruction to Hall's condition holding locally in subformulas of size at most $2s$ after removing $T$ must be small. Using this observation, we can find a small clause set called \emph{closure} of $T$ and denoted $\cl{T}$ which, roughly speaking, simultaneously captures all obstructions to Hall's condition with respect to $T$ on medium-size subformulas. In more detail, we show that for any superset $B\supseteq \cl{T}$ of size at most $2s$, every nonempty subset of $B\setminus \cl{T}$ has negative deficiency with respect to $T\union \vars{\cl{T}}$. Hence, by Hall's condition, we can satisfy $B\setminus \cl{T}$ without assigning any variable in this variable set. Thus, if we let the set $S(m)$ in the definition of $\pseudred$ equal $\cl{m} \coloneqq \cl{\vars{m}}$, we have for any such $B$ that \smash{$\pseudred^\prec_{\ideal{\cl{m}}}(m) = R^\prec_{\ideal{B}}(m)$} since a partial assignment $\rho$ satisfying $B\setminus \cl{T}$ sets the generators of $B\setminus \cl{T}$ to $0$ while leaving $\vars{m} \union \vars{\cl{m}}$ untouched. 

With this fact established, we can prove (i) and (ii) for polynomials of degree at most $n/2 - 4s$ by a set of arguments originating in~\cite{AR03LowerBounds} and further refined in, e.g., \cite{GL10Optimality,MN24Generalized}. In both cases, the idea is to consider $\pseudred$ applied to a polynomial $p = \sum_i a_i m_i$ and replace each ideal in the summands of 
\begin{equation}
  \label{eq:r-op-evaluation}
  R(p) = \sum_{i} a_i R^\prec_{\ideal{\cl{m_i}}}(m_i)
\end{equation}
by a single (satisfiable) ideal $\ideal{B} \supseteq \bigcup_{i} \ideal{\cl{m_i}}$, so that \smash{$R(p) = R^\prec_{\ideal{B}}(p)$} is an actual reduction modulo $\ideal{B}$. For~(i), $p$ is a clause polynomial $p_C$ for $C\in F$, and then we can take $B = \cl{\vars{p_C}}$. For~(ii), $p$ is instead the polynomial $x_iR(m)$ for a monomial $m$ with $\deg{m} \leq \degstd-1$. In this case we can show that $B = \cl{x_im}$ suffices.

\paragraph{Upper bounds.} The upper bound in \cref{thm:upper-bound} is based on a deterministic criterion. Say that a clause set~$G$ of width at most $\degstd$ \emph{covers the $\degstd$-element variable sets} if for each multilinear monomial $m$ of degree at least $\degstd$, there exists a clause $C\in G$ such that the monomial $m_C \coloneqq \prod_{x \in \vars{C}} x$ divides $m$. We show that if $G$ is unsatisfiable and covers the $\degstd$-element variable sets, then we can derive $1$ from $G$ in degree at most~$\degstd+1$. The proof idea is, roughly speaking, to perform Gaussian elimination in the vector space $\PC_{\degstd +1}$ of polynomials derivable from $G$ in degree at most $\degstd +1$, in a manner quite similar to the algorithm in~\cite[Section~4]{CEI96Groebner}. By the covering hypothesis, every polynomial of degree at most $\degstd+1$ reduces to one of degree at most $\degstd-1$. As we show in \cref{ub:covering}, this bound, the unsatisfiability of $G$, and the fact that $\PC_{ \degstd +1 }$ is closed under low-degree variable multiplications together force $1$ to eventually appear in the elimination.

Given this result, we prove that with high probability, the set of (polynomial representations of) resolvents of clauses in a random $k$-CNF formula $F$ covers the $\degstd$-element sets for $\degstd = n/2 -n\ln(k)/2k + O(n/k)$. This follows from a rather simple argument where we sample the clauses of $F$ in two equally large batches. After sampling both batches, we match clauses between batches by resolution compatibility and show that the resulting resolvents cover all $\degstd$-element variable sets with high probability. Since the set of resolvents is derivable in low degree from the polynomial representation of $F$, the upper bound follows.

\subsection{Related Work}
\label{sec:related-work}
The proof of \cref{thm:main} follows techniques developed in a long line of work investigating degree lower bounds for algebraic proof systems. The first superconstant lower bounds for Nullstellensatz were proved in \cite{BIKPP94LowerBounds}, which also introduced the proof system. The first superconstant degree lower bounds for polynomial calculus were proved by Razborov~\cite{Razborov98LowerBound} (who also achieved optimal, linear lower bounds in the same paper), followed by \cite{BGIP01LinearGaps, BI10Random} where techniques based on binomial ideals were developed. Based on a similar philosophy as~\cite{Razborov98LowerBound}, Alekhnovich and Razborov~\cite{AR03LowerBounds} in turn introduced a more general approach, which is used in this paper. These techniques were further developed and generalized by, e.g., \cite{GL10Automatizability, GL10Optimality,Filmus19AlekhnovichRazborov, MN24Generalized,CdRNPR25GraphColouring}. (Variants of) expansion of the constraint-variable incidence graph is the same property as in previous polynomial calculus degree lower bounds for random CNF formulas~\cite{BI10Random,AR03LowerBounds,MN24Generalized} as well as in lower bounds for random formulas in many other proof systems, e.g.,~\cite{CS88ManyHard,SBI04SwitchingLemma,Alekhnovich11LowerBoundsk-DNF3-CNF, Sokolov24Random}. Arguments relating expansion to matchings in the constraint-variable incidence graph after removing a closure go back at least to \cite{AR03LowerBounds,ABRW04Pseudorandom} and are by now well known.

There is also a line of work investigating polynomial calculus over variable domains different from $\{0,1\}$. Since $\ACzero[p]$-Frege can perform variable substitutions from $\{0, 1\}$ to other domains in $\F_p$ within the proof system, proving lower bounds over other domains is a necessary step toward $\ACzero[p]$\nobreakdash-Frege lower bounds. The first size lower bounds in this setting were shown by Sokolov~\cite{Sok20SemiAlgebraic}, and follow-up work by Impagliazzo, Mouli, and Pitassi~\cite{IMP23PCLowerBounds} and Dahiya, Mahajan, and Mouli~\cite{DMM24NewLowerBounds} establishes lower bounds for all variable domains over $\F_p$.\footnote{The lower bounds there are stated for the more general setting of partially overlapping extension variables, but some of these stronger results have been retracted~\cite{Mouli26AcknowledgementIMP23, Mouli26AcknowledgementDMM24}.} These results seem incomparable to those in this paper since it is not clear how to perform variable substitutions in depth $2$. 

Braun's lower bound for resolution over parities~\cite{Braun26LowerBound} is in turn the culmination of a long line of works investigating fragments and variants of this proof system. Raz and Tzameret~\cite{RT08ResolutionLinear} studied resolution over integer linear equations and proved lower bounds for a fragment of this proof system. Itsykson and Sokolov~\cite{IS14Resolution} defined the version over $\F_2$ and proved lower bounds for tree-like such proofs, which were extended to a stronger fragment by Garl\'ik and Ko{\l}odziejczyk~\cite{GK18SomeSubsystems}. Part and Tzameret~\cite{PT21Resolution} proved lower bounds for tree-like proofs over any field. They also obtain lower bounds for full, DAG-like proofs, but for instances that cannot be expressed as unsatisfiable CNF formulas.  Following a breakthrough of Efremenko, Garl\'ik, and Itsykson~\cite{EGI25LowerBounds}, CNF formula lower bounds and tradeoffs for \emph{regular} and \emph{bounded-depth} variants of resolution over $\F_2$-linear equations have recently been investigated by multiple authors~\cite{AI25Lifting, EI25Amortized, BBCI26LowerBounds, EI26StrongETH, IK26SupercriticalTradeoff, AN26NewPolynomial}.

Pang's paper~\cite{Pang26SizeDegree} mainly concerns \emph{resolution over polynomials}, which generalizes its counterpart over linear equations by allowing higher polynomial degree than $1$ in each disjunct. This proof system was introduced by Kraj\'{i}\v{c}ek~\cite{Krajicek18Randomized} and subsequently studied by Khaniki~\cite{Khaniki22ProofComplexity}, who established almost-quadratic lower bounds through a size-degree tradeoff different from Pang's.

\subsection{Paper Outline}\label{sec:paper-outline}
Following preliminaries in \cref{sec:preliminaries}, we prove \cref{thm:main} in \cref{sec:criterion} and \cref{thm:upper-bound} in \cref{sec:upper-bound}. We provide some concluding remarks in \cref{sec:concluding-remarks}. In Appendix~\ref{sec:sd}, we provide a straightforward extension of the results in \cite[Section~8]{Pang26SizeDegree} to arbitrary sets of inference rules for completeness and the convenience of the reader.

\section{Preliminaries}\label{sec:preliminaries}

We assume that clauses contain distinct variables and are non-tautological. The \emph{width} of a clause is the number of variables in it. For a
clause set $A$, let $N(A)$ be the union of its variable sets, identified
with a subset of $[n]$. We use the convention that Boolean value $1$ is
true. The polynomial encoding of a clause $C$ is
\begin{equation}\label{eq:clause}
p_C(x)=\prod_{x_i\in C}(1-x_i)
          \prod_{\neg x_i\in C}x_i\eqperiod
\end{equation}
Thus $p_C$ is $1$ precisely on assignments falsifying $C$ and is $0$
on the other Boolean assignments. Given a field $\F$ and a set of polynomials $P = \{p_1, \ldots, p_m\} \subseteq \F[x_1, \ldots, x_n]$, the set of polynomials of the form $\sum_i q_i p_i$ for arbitrary $q_i$ is called the \emph{ideal generated by $P$} and denoted $\ideal{P}$. We use
$\ln$ for natural logarithms and $\log$ for base-two logarithms.
All asymptotic statements concern $n\to\infty$.

\subsection{Algebra and Reduction Modulo Ideals}\label{sec:normal}

Fix a field $\F$ and write
$$
 \B\coloneqq \F[x_1,\ldots,x_n]/\langle x_i^2-x_i:i\in[n]\rangle.
$$
Each element of $\B$ has a unique multilinear representative. We
identify it with this representative and write $\ML(f)$ for the
image of an ordinary polynomial $f$ in $\B$. Ideals carry over from $\F[x_1, \ldots, x_n]$ to $\mathbb{B}$ by multilinearizing all polynomials in each ideal. Products in this section
and the next are products in $\B$, so they are implicitly
multilinearized. For a clause set $A$, define the ideal
$$
 J_A\coloneqq \langle p_C:C\in A\rangle\subseteq\B.
$$
An assignment satisfying $A$ defines an evaluation homomorphism
annihilating $J_A$, so $1\not\in J_A$ whenever $A$ is satisfiable.

Fix once and for all a degree-lexicographic order on the squarefree
monomials. For a nonzero polynomial $q\in\B$, let $\LM(q)$ be its
largest monomial in this order. A squarefree monomial is
\emph{reducible modulo} an ideal $J$ if it is $\LM(q)$ for some
nonzero $q\in J$; otherwise it is \emph{irreducible modulo}~$J$.

The rest of the section establishes some standard facts about reducibility modulo ideals. For proofs of the claims in~\cref{lem:normal}, see for instance~\cite[Section~2.2]{MN24Generalized}. 

\begin{lemma}\label{lem:normal}
For any ideal $J\subseteq\B$ and any $f\in\B$, there is a unique
linear combination $\redop_J(f)$ of irreducible monomials such that
$f-\redop_J(f)\in J$. The operator $f\mapsto \redop_J(f)$ is $\F$-linear and does not increase
degree. Moreover,
\begin{equation}\label{eq:normal-kernel}
\redop_J(f)=0\quad\Longleftrightarrow\quad f\in J.
\end{equation}
If $J\subseteq J'$ and $f,g\in\B$, then
\begin{equation}\label{eq:nested}
\redop_{J'}\bigl(g\,\redop_J(f)\bigr)=\redop_{J'}(gf).
\end{equation}
\end{lemma}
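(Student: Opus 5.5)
The plan is to treat $\B$ as a finite-dimensional $\F$-vector space with basis the squarefree monomials, totally ordered by the fixed degree-lexicographic order, and to view an ideal $J$ as a subspace. Two facts drive everything: the leading monomials of nonzero elements of a subspace can be normalized away, and degree-lex is compatible with degree.

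\emph{Existence.} Induct on the leading monomial of $f$ (well-founded, since there are finitely many squarefree monomials). If $f=0$, set $\redop_J(f)=0$. Otherwise, if every monomial of $f$ is irreducible, take $\redop_J(f)=f$. If not, let $m$ be the largest reducible monomial of $f$, with coefficient $c$. Choose $q\in J$ with $\LM(q)=m$, normalized to be monic, and pass to $f-cq$. This removes $m$ and changes only monomials smaller than $m$; monomials above $m$ are untouched, and they were already irreducible. Iterating terminates because the largest reducible monomial strictly decreases. Since degree-lex compares degree first, each $q$ used has degree at most $\deg m\le\deg f$, so the degree never increases.

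\emph{Uniqueness.} If $r,r'$ are both combinations of irreducible monomials with $f-r,f-r'\in J$, then $r-r'\in J$. If it were nonzero, its leading monomial would be reducible by definition, yet it is one of the irreducible monomials. Hence $r=r'$.

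\emph{Linearity and the kernel.} Linearity follows from uniqueness: $a\redop_J(f)+b\redop_J(g)$ is a combination of irreducible monomials, and $af+bg$ minus it lies in $J$. For \eqref{eq:normal-kernel}, if $f\in J$ then $0$ is a valid candidate, so uniqueness gives $\redop_J(f)=0$. Conversely, $\redop_J(f)=0$ means $f=f-0\in J$.

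\emph{The nested identity \eqref{eq:nested}.} Note that $gf-g\redop_J(f)=g\,(f-\redop_J(f))\in J\subseteq J'$, using that $J$ is an ideal of $\B$ and that products are taken in $\B$. By linearity and the kernel property of $\redop_{J'}$,
\[
\redop_{J'}(gf)-\redop_{J'}\bigl(g\redop_J(f)\bigr)=\redop_{J'}\bigl(gf-g\redop_J(f)\bigr)=0 .
\]

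The only mildly delicate point is the termination and degree argument in the existence step. This is exactly where we use that degree-lex is a total order on a finite set and refines degree. Everything else is formal from uniqueness.
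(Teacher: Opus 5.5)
Your proof is correct, and it is the standard argument the paper relies on. The paper does not prove this lemma itself but defers to \cite[Section~2.2]{MN24Generalized}; it uses the same reasoning (existence by successively eliminating leading monomials, uniqueness via the leading monomial of a difference, and no degree increase because the order refines degree) when it builds the projection $R$ onto $\mathcal W$ along $\mathcal V$ in the proof of Lemma~\ref{ub:covering}, and your derivation of \eqref{eq:nested} from $g\bigl(f-\redop_J(f)\bigr)\in J\subseteq J'$ plus linearity and the kernel property is the expected one.
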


We abbreviate $\redop_{J_A}$ to $\redop_A$. 
Given $f \in \mathbb{B}$, we call the irreducible linear combination $R_J(f)$ in \cref{lem:normal} the \emph{reduction of $f$ modulo $J$} and the operator $R_J \colon f\mapsto R_J(f)$ the \emph{reduction operator modulo $J$}.

\begin{lemma}[Support locality]\label{lem:support}
For every clause set $A$ and polynomial $f\in\B$,
\begin{equation}\label{eq:support}
\Vars(\redop_A(f))\subseteq \Vars(f)\cup N(A).
\end{equation}
\end{lemma}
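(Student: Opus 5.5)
We will prove that the reduction $\redop_A(f)$ involves only variables from $V\coloneqq \Vars(f)\cup N(A)$. The plan is to construct a candidate reduction that lives in the subring $\B_V$ of multilinear polynomials in the variables of $V$, and then use the uniqueness part of \cref{lem:normal} to identify it with $\redop_A(f)$.

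\textbf{Reduction inside $\B_V$.} Consider the ideal $J_A^V\coloneqq\langle p_C : C\in A\rangle\subseteq \B_V$. All generators $p_C$ lie in $\B_V$ because $\Vars(p_C)\subseteq N(A)$. By \cref{lem:normal} applied inside $\B_V$ with the restriction of the fixed degree-lexicographic order, there is a polynomial $g\in\B_V$ such that $g$ is a combination of monomials irreducible modulo $J_A^V$ and $f-g\in J_A^V\subseteq J_A$.

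\textbf{Matching the two notions of irreducibility.} We must show that every monomial $m$ over $V$ that is irreducible modulo $J_A^V$ is also irreducible modulo $J_A$ in $\B$. Suppose instead that $m=\LM(q)$ for some nonzero $q=\sum_C h_C p_C\in J_A$ with $h_C\in\B$.

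Write each $h_C$ as $\sum_u u\cdot h_{C,u}$, where $u$ ranges over monomials in the variables outside $V$ and $h_{C,u}\in\B_V$. Since the $p_C$ involve only variables of $V$, multilinearization does not mix the $u$-parts. Hence $q=\sum_u u\,q_u$ with $q_u=\sum_C h_{C,u}p_C\in J_A^V$.

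The monomial $m$ has no variables outside $V$, so it appears only in $q_1$ (the part with $u=1$). In particular $q_1\neq 0$. We claim $\LM(q_1)=m$. Every monomial of $q_1$ appears in $q$, because the terms $u\,q_u$ for different $u$ have disjoint monomial supports. Therefore every monomial of $q_1$ is $\preceq\LM(q)=m$, and $m$ itself occurs in $q_1$, so $\LM(q_1)=m$. This contradicts the irreducibility of $m$ modulo $J_A^V$.

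\textbf{Conclusion.} The polynomial $g$ is a combination of monomials irreducible modulo $J_A$ and satisfies $f-g\in J_A$. By the uniqueness in \cref{lem:normal}, $\redop_A(f)=g$, which gives \eqref{eq:support}.

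The main obstacle is the decomposition step. It works because multiplying by a monomial $u$ in the variables outside $V$ is injective on monomials over $V$ and commutes with multilinearization, which is what allows $\LM(q_1)=m$ to be read off from $q$.
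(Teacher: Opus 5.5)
Your proof is correct and follows essentially the same route as the paper: you reduce $f$ inside the subalgebra on $\Vars(f)\cup N(A)$, show that irreducibility modulo the smaller ideal implies irreducibility modulo $J_A$, and conclude by the uniqueness in \cref{lem:normal}. Your extraction of the component $q_1$ from $q=\sum_u u\,q_u$ is exactly the paper's step of setting all variables outside this set to zero, which maps $J_A$ into the smaller ideal while keeping $m$ as the leading monomial.
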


\begin{proof}
Put $S=\Vars(f)\cup N(A)$ and let $\B_S$ be the subalgebra on
variables in $S$. Let $J_A^S$ be the ideal of $\B_S$ generated by
the clause polynomials in $A$. Setting variables outside $S$ to zero
maps $J_A$ into $J_A^S$ and fixes all polynomials in $\B_S$.

A monomial $m$ supported on $S$ is reducible modulo $J_A$ if and only
if it is reducible modulo $J_A^S$. One implication follows from
$J_A^S\subseteq J_A$. For the other, take $q\in J_A$ with
$\LM(q)=m$ and set the outside variables to zero. This preserves
$m$ and deletes some other monomials, so the resulting element of
$J_A^S$ still has leading monomial $m$.

Consider the reduction of $f$ inside $\B_S$. It is congruent to $f$
modulo $J_A$ and uses only monomials irreducible modulo $J_A$.
Uniqueness in Lemma~\ref{lem:normal} therefore shows that it equals $\redop_A(f)$.
\end{proof}

\subsection{Polynomial Calculus}\label{sec:pc}

Given a field $\F$, \emph{polynomial calculus}~\cite{CEI96Groebner,ABRW02SpaceComplexity} is a proof system that proves that the polynomials in a set $\setP \subseteq \F[x_1, \ldots, x_n]$ have no common root by iteratively deriving new polynomials in the ideal $\ideal{\setP}$ until reaching the constant polynomial $1$. There are two derivation rules: 
  \begin{subequations}
    \begin{align}
        \label{eq:pc-lin-combn}
      \textit{Linear combination: }
      &\frac{\polyp \quad \polyq}{\alpha\polyp + \beta\polyq},\; \alpha, \beta \in \F \, ;\\
        \label{eq:pc-mult}
      \textit{Multiplication: }
      & \frac{\polyp}{x_i \polyp},\; x_i \text{ any variable.}
    \end{align}
    \end{subequations}
A \emph{(polynomial calculus) derivation} of a polynomial $\polyq$ from $\setP$ is a sequence $\refstd = (\polyp_1, \ldots, \polyp_\tau)$ where $\polyp_\tau = \polyq$ and each $\polyp_i$ in $\refstd$ is either in $\setP$ or obtained by applying one of \eqref{eq:pc-lin-combn}\nobreakdash--\eqref{eq:pc-mult} to polynomials $p_j, p_k \in \refstd$ with $j, k < i$. A \emph{refutation} of $\setP$ is a derivation of $1$. The \emph{degree} of a derivation is the maximum degree among the polynomials in it. We denote the minimum degree required to refute $\setP$ in polynomial calculus over $\F$ by $\DPC(\setP)$, and write $\setP \vdash_\degstd q$ if a polynomial $q$ can be derived from $\setP$ in degree at most $\degstd$. We write $\PC_{\degstd, P}$ to denote the set of polynomials derivable from $\setP$ in degree at most $\degstd$, and $\PC_{\degstd, P}(k)$ for $\PC_{\degstd, P}\intersection \F[x_1, \ldots, x_n]_{\leq k}$. We frequently drop the $\setP$ in the subscript when clear from context.

Polynomial calculus extends to $\mathbb{B}$ by representing all polynomials in a refutation modulo the Boolean axioms $\sdBool \coloneqq \{x_i^2 - x_i : i \in [n]\}$. It is clear that a degree lower bound for polynomial calculus over $\mathbb{B}$ extends to one over $\F$. 

Polynomial calculus naturally becomes a refutation system for unsatisfiable CNF formulas through the polynomial translation \eqref{eq:clause}. It is worth mentioning that in this setting, one usually introduces \emph{dual variables} $\overline{x}_i$ for each $x_i$ by including axioms $\overline{x}_i + x_i -1$ in $\setP$, but as we are only interested in degree lower bounds and each $\overline{x}_i$ in a refutation can be substituted by $1-x_i$ without increasing degree, we will not use this convention.

We frequently use the following basic facts of polynomial calculus derivations:
\begin{align}
 u\in\sdCD,\quad \deg u+\deg v\le D
 &\quad\Longrightarrow\quad uv\in\sdCD,
 \label{eq:sd-absorption}\\
 \sdBool&\vdash_{\deg u}u-\sdML(u).
 \label{eq:sd-multilinearization}
\end{align}
In particular, a polynomial vanishing on the Boolean cube has a
Boolean-axiom derivation in degree at most its own degree, and if a polynomial is derivable in degree $\degstd$ then so is its multilinearization. %
For an arbitrary degree-$\degstd$ polynomial $q$, we also have that
\begin{equation}
 \sdBool\vdash_{p\degstd}q^p-q
 \label{eq:sd-frobenius-identity}
\end{equation}
over $\F_p$. To see this, first note that
\begin{equation}
  \label{eq:frobenius-variable}
x_i^p-x_i
=(x_i^2-x_i)(1+x_i+\cdots+x_i^{p-2}) \eqperiod
\end{equation}
Over $\F_p$, we have for $q = \sum_j c_j m_j$ that 
\begin{equation}
  q^p - q = \sum_j c_j^p m_j^p - c_j m_j =  \sum_j c_j (m_j^p - m_j) \eqcomma
\end{equation}
and we can derive each $m_j^p - m_j$ from the Boolean axioms, one factor at a time, using \eqref{eq:frobenius-variable}. Equation~\eqref{eq:sd-frobenius-identity} follows.

Polynomial calculus degree lower bounds for refuting a set $\setP$ are most often proved by constructing a \emph{pseudo-reduction} operator for $\setP$. We define it over $\mathbb{B}$ for simplicity.
 \begin{definition}[Pseudo-reduction operator, \cite{Razborov98LowerBound}]
  \label{def:r-op}
  Let~$\degstd \in \N^+$ and let $\setP\subseteq \F[x_1, \ldots, x_\nvars]$ be unsatisfiable. An~$\F$-linear
  operator~$\pseudred$ on $\mathbb{B}$ 
  is a \emph{degree-$\degstd$ pseudo-reduction operator for~$\setP$} if
  \begin{enumerate}
  \item~$\pseudred(1) = 1$, \label{thesis:eq:rprop1}%
    \label{item:rop-property-1}
  \item~$\pseudred(\polyp) = 0$ for every polynomial~$\polyp\in\setP$,
    \label{item:rop-property-2}
  \item
  ~$\pseudred(x_im) =
    \pseudred\bigl(x_i
    \pseudred(m)\bigr)$ for every multilinear monomial~$m$ of degree at most
  $\degstd-1$ and every variable~$x_i$.
    \label{item:rop-property-3}
  \end{enumerate}
\end{definition}
The kernel of a pseudo-reduction operator overapproximates what can be derived from $\setP$ in degree at most~$\degstd$, but does not contain~$1$. It follows from a straightforward induction on refutation length that the existence of a degree-$\degstd$ pseudo-reduction operator for a set~$\setP$ implies a polynomial calculus degree lower bound of $\degstd+1$ for refuting~$\setP$.

\subsection{Constant-Depth Frege with Modular Gates}
\label{sec:AC0pFrege}

Fix a prime~$p$. The connective $\sdMOD_{p,a}(A_1,\ldots,A_m)$ is true
precisely when the number of true inputs is congruent to $a$ modulo~$p$.
We use a syntactic Frege system as in \cite[Section~2.1]{Krajicek19ProofComplexity} in the language
$\{\bot,\top,\neg,\wedge,\vee,\sdMOD_{p,a}:a\in\sdFp\}$.
Its inference rules are any fixed finite, sound, and implicationally complete
collection $\rules$ of Frege rule schemes, which we fix going forward. 

The modular connectives are governed by the standard recursive counting
axioms specifying how modular gates change with additional inputs; see, for instance, \cite[Section~10.1]{Krajicek19ProofComplexity}. We specify their
presentation because the depth of a proof depends on this choice.
For a sequence~$\Gamma$ of formulas and a formula~$B$, put
\[
 U=\sdMOD_{p,a}(\Gamma,B),\qquad
 V=\sdMOD_{p,a}(\Gamma),\qquad
 W=\sdMOD_{p,a-1}(\Gamma),
\]
where residues are taken modulo~$p$. To keep the axioms low-depth, we supply the recursion
$U\leftrightarrow((V\wedge\neg B)\vee(W\wedge B))$ in its
constant-size clausal form:
\[
\begin{aligned}
 B\vee\neg U\vee V,\qquad &B\vee U\vee\neg V,\\
 \neg B\vee\neg U\vee W,\qquad &\neg B\vee U\vee\neg W.
\end{aligned}
\]
We also supply $\sdMOD_{p,0}(\emptyset)$ and
$\neg\sdMOD_{p,a}(\emptyset)$ for $a\ne0$.
Explicitly, proof lines are Boolean formulas over
\[
\{\bot,\top,\neg,\wedge,\vee,\mathrm{MOD}_{p,a}; a\in\mathbb F_p\} \eqcomma
\]
Instances of rules and axioms are obtained by substituting formulas for their schematic variables.

An $\sdAC$-Frege refutation of depth $\ell$ of a CNF formula~$F$ is a sequence of formulas
ending with~$\bot$, where each line is an input clause, an axiom instance,
or follows from earlier lines by a rule in $\rules$, and each line has depth at most $\ell$ as specified below. Its \emph{size} is the
total number of symbol occurrences in the sequence. 

We measure the depth of a proof line in the gate convention. Literals and constants have depth
zero; negation does not increase depth; and each modular gate increases
depth by one. Each maximal block of consecutive conjunctions, or of consecutive
disjunctions, is counted toward depth as one unbounded-fan-in gate. Such a gate has depth
one plus the maximum depth of its inputs. %

A size-$s$ circuit of depth $\ell$ can be unfolded into an equivalent depth-$\ell$ formula of size $s^{O(\ell)}$, which for constant $\ell$ is polynomial in $s$. Hence, all bounds we obtain hold up to polynomial factors also for the version of the proof system where the lines are circuits instead of formulas.

\section{Degree Lower Bounds Through Matching Deficiency}\label{sec:criterion}

The goal in this section is to construct a pseudo-reduction operator as in \cref{def:r-op} for random CNF formulas. The properties needed to do so are captured in \cref{thm:criterion}. In \cref{sec:random} we prove that these conditions are satisfied by random CNF formulas with high probability. 

\begin{theorem}\label{thm:criterion}
Let $F$ be a CNF formula on $n$ variables, let $2\le \degstd\le n$, and let
$s\ge1$ be an integer. Suppose all clauses in $F$ have width at most $\degstd$
and the following conditions hold:
\begin{enumerate}[label=\textup{(\roman*)},leftmargin=2.5em,itemsep=0.2em]
   
 \item Every subformula of $F$ with fewer than $s$ clauses is satisfiable.
  \label{cond:local-satisfiability}
 \item For each \label{cond:matching-surplus}clause set $A\subseteq F$ with $s\le |A|\le2s$,
 \begin{equation}\label{eq:surplus}
|N(A)|>\degstd+|A| \eqperiod
 \end{equation}

\end{enumerate}
Then $\DPC(F)>\degstd$ over every field $\F$.
\end{theorem}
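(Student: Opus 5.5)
We construct a degree-$\degstd$ pseudo-reduction operator $\pseudred$ for $F$ in the sense of \cref{def:r-op}, following the Alekhnovich--Razborov method sketched in \cref{sec:proof-overview}. For a variable set $T$ with $|T|\le\degstd$, call a clause set $A$ \emph{$T$-deficient} if $h_T(A)=|A|-|N(A)\setminus T|\ge 0$ and $|A|<s$. Condition~\ref{cond:matching-surplus} gives $h_T(A)\le |A|+|T|-|N(A)|<0$ for every $A$ with $s\le|A|\le 2s$. Deficiency is supermodular-like, i.e.\ $h_T(A\cup A')+h_T(A\cap A')\ge h_T(A)+h_T(A')$, since $N$ is submodular. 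Hence the union of two $T$-deficient sets $A,A'$ has $h_T\ge h_T(A)+h_T(A')-h_T(A\cap A')$. I will take $\cl{T}$ to be a maximal set with $h_T\ge 0$ and $|\cl{T}|<s$; concretely, the union of all such sets. To see this union is still small, grow it one set at a time: the union of two sets of size $<s$ has size $<2s$, so if it reached size $\ge s$ it would lie in the medium range and have negative deficiency, contradicting supermodularity (using $h_T(A\cap A')\le 0$ when that intersection is nonempty and non-deficient, which needs care). The key property to establish is the following: for every $B\supseteq\cl{T}$ with $|B|\le 2s$, every nonempty $A\subseteq B\setminus\cl{T}$ satisfies $|A|<|N(A)\setminus(T\cup N(\cl{T}))|$, so Hall's theorem yields a matching of $B\setminus\cl{T}$ into variables outside $T\cup N(\cl{T})$. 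If $|A\cup\cl{T}|\ge s$ this follows from~\ref{cond:matching-surplus}; otherwise it follows from maximality of $\cl{T}$. We also need $\cl{T}$ itself to be satisfiable, which holds by~\ref{cond:local-satisfiability} since $|\cl{T}|<s$.

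Next, define $\pseudred(m)\coloneqq\redop_{\cl{\Vars(m)}}(m)$ for multilinear $m$ of degree at most $\degstd$ (and, say, $\pseudred(m)=0$ for higher degree, or any extension), extended linearly. Then $\pseudred(1)=1$ because $\cl{\emptyset}$ is satisfiable, so $1\notin J_{\cl{\emptyset}}$ by \eqref{eq:normal-kernel}. The main technical lemma is: if $T\subseteq T'$ have size at most $\degstd$ and $B\supseteq\cl{T'}\cup\cl{T}$ with $|B|\le 2s$, then $\redop_{\cl{T}}(m)=\redop_{B}(m)$ for $\Vars(m)\subseteq T$. To prove it, use the matching above to build a partial assignment $\rho$ on variables outside $T\cup N(\cl{T})$ satisfying $B\setminus\cl{T}$. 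Applying $\rho$ maps $J_B$ into $J_{\cl{T}}$ and fixes $m$ and $\redop_{\cl{T}}(m)$, by \cref{lem:support}. Consequently $\redop_{\cl{T}}(m)-m\in J_B$, and irreducibility modulo $J_{\cl{T}}$ transfers to irreducibility modulo $J_B$ for monomials on the relevant support. This transfer follows from the same restriction argument used in the proof of \cref{lem:support}, noting that restriction preserves leading monomials on the surviving variables. Uniqueness in \cref{lem:normal} then gives the claim.

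With the lemma, the two remaining properties follow. For property~2, take $T=\Vars(p_C)$ with $|T|\le\degstd$ and $B=\cl{T}\cup\{C\}\cup\bigcup_{m}\cl{\Vars(m)}$ over the monomials of $p_C$. The closures of subsets are contained in $\cl{T}$ by monotonicity of $h_T$ in $T$, so $B=\cl{T}\cup\{C\}$ has size $<s+1\le 2s$. Thus $\pseudred(p_C)=\redop_B(p_C)=0$. For property~3, with $\deg m\le\degstd-1$, set $T=\Vars(x_im)$ and $B=\cl{T}$. Every monomial $m'$ in $\redop_{\cl{\Vars(m)}}(m)$ has variables in $T\cup N(\cl{\Vars(m)})$, and its closure is contained in $B$. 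Showing this containment requires a monotonicity statement of the form $\cl{\Vars(m')}\subseteq\cl{T}$, which I expect to obtain only after possibly enlarging $B$ to $\cl{T}\cup\cl{\Vars(m')}$, still of size $<2s$. Then \eqref{eq:nested} gives $\pseudred(x_i\pseudred(m))=\redop_B(x_i\redop_{\cl{\Vars(m)}}(m))=\redop_B(x_im)=\pseudred(x_im)$. The degree bound $\deg m'\le\degstd$ is needed here and holds by the degree non-increase in \cref{lem:normal}.

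The main obstacle is the closure construction. We must show the union of deficient sets stays below size $s$, and that closures are monotone in $T$. Supermodularity handles the pairwise union only when intersections behave, so I would first try defining $\cl{T}$ instead as the unique maximal maximum-deficiency set among sets of size $<s$. Uniqueness and monotonicity come from the lattice property of maximizers of a supermodular function, with~\ref{cond:matching-surplus} and the range $[s,2s]$ ruling out escape.
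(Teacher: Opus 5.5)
\textbf{Gap in Property~3.} Property~3 is the crux of the argument, and this is where your proof breaks. The monomials $m'$ of $\pseudred(m)=\redop_{\cl{\Vars(m)}}(m)$ are supported on $\Vars(m)\cup N(\cl{\Vars(m)})$. This set is in general not contained in $T=\Vars(x_im)$, so monotonicity of the closure tells you nothing about $\cl{\Vars(x_im')}$.

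Enlarging $B$ to $\cl{T}\cup\cl{\Vars(x_im')}$ handles one monomial at a time. Your final chain, however, applies \eqref{eq:nested} in a single ideal $J_B$ that must contain all these closures at once. You have no bound on the size of their union over all monomials of $\pseudred(m)$, and once $|B|>2s$, neither condition~(ii) nor maximality of the closure gives you the Hall matching behind your compatibility lemma.

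Nor can you proceed monomial by monomial with that lemma as you stated it. It only relates $\redop_{\cl{T}}(f)$ and $\redop_B(f)$ when $\Vars(f)\subseteq T$, whereas here $\Vars(x_im')\not\subseteq T$.

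\textbf{The missing ingredient: absorption.} The statement you need is: if $|U|\le\degstd$ and $U\subseteq T\cup N(\cl{T})$, then $\cl{U}\subseteq\cl{T}$. To prove it, put $A=\cl{T}$ and $A'=\cl{U}$, and suppose $A'\not\subseteq A$. Since $U\cup N(A\cap A')\subseteq T\cup N(A)$, you get
$h_U(A')-h_U(A\cap A')\le|A'\setminus A|-|N(A')\setminus(T\cup N(A))|=h_T(A\cup A')-h_T(A)<0$.
The last inequality holds because $A$ is the largest maximizer and $A\cup A'$ is admissible. This contradicts maximality of $A'$ for $h_U$.

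Now use $\cl{\Vars(m)}\subseteq\cl{T}$. Every $U=\Vars(x_im')$ then lies in $T\cup N(\cl{T})$ and has $|U|\le\degstd$. So all the closures lie in the single set $B=\cl{T}$, and your chain goes through with your lemma as stated.

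\textbf{An alternative repair.} You can avoid absorption instead. State compatibility for every $f$ with $\Vars(f)\subseteq T\cup N(\cl{T})$; your restriction argument already proves this, since $\rho$ avoids $T\cup N(\cl{T})$. Then show $\pseudred(x_im')=\redop_{\cl{T}}(x_im')$ term by term, passing through the intermediate set $\cl{\Vars(x_im')}\cup\cl{T}$, which has fewer than $2s$ clauses. Finally, sum and apply \eqref{eq:nested}.

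\textbf{Two smaller points.}
\begin{itemize}
\item Commit to defining $\cl{T}$ as the largest maximizer of $h_T$. The union of all small sets with $h_T\ge0$ can itself have negative deficiency, and then your Hall-type surplus outside $T\cup N(\cl{T})$ no longer follows.
\item Monotonicity in $T$ does not follow from the lattice structure of maximizers of a single supermodular function. You also need that $h_{T'}(I)-h_T(I)=|N(I)\cap(T'\setminus T)|$ is nondecreasing in $I$. Combined with supermodularity, this yields $h_T(X)+h_{T'}(Y)\le h_T(X\cap Y)+h_{T'}(X\cup Y)$, and from that $\cl{T}\subseteq\cl{T'}$.
\end{itemize}
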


The proof of \cref{thm:criterion} occupies the rest of this section. As noted in the introduction, we follow the approach of Alekhnovich and Razborov~\cite{AR03LowerBounds}, where a pseudo-reduction operator $\pseudred$ is defined by associating to each low-degree multilinear monomial $m$ a \emph{satisfiable} clause set, usually called the \emph{closure} of $m$ and denoted $\cl{m}$. We then define $\pseudred$ on monomials by $\pseudred \colon m \mapsto R_{\ideal{\cl{m}}}(m)$ and extend linearly to arbitrary polynomials. Since closures of low-degree monomials are satisfiable, it is immediate with this definition that $R(1) = 1$. The challenge is to establish Properties \ref{item:rop-property-2} and~\ref{item:rop-property-3}, which we do in \cref{sec:r-op-properties}. The closure we define is field-independent, and the lower bound we establish holds over every field.

\subsection{Matching Deficiency and Closure}

In this section, we define the closure of a variable set and establish its basic properties. We identify variables with their indices. For $T\subseteq[n]$ with $|T|\le \degstd$ and a set $A\subseteq F$ with $|A|\le 2s$, define the \emph{matching deficiency} of $A$ with respect to $T$ as
\begin{equation}
  \label{eq:deficiency}
h_T(A)=|A|-|N(A)\setminus T|\eqperiod
\end{equation}
We call sets of size at most $2s$ \emph{admissible}. 

\begin{lemma}\label{lem:closure}
Among the admissible maximizers of $h_T$, there is a unique inclusion-wise largest
one. It has size strictly smaller than $s$.
\end{lemma}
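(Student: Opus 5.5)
The plan is to use supermodularity of $h_T$ together with conditions \ref{cond:local-satisfiability} and \ref{cond:matching-surplus} of \cref{thm:criterion}. Throughout, $T\subseteq[n]$ with $|T|\le\degstd$ is fixed. Since $h_T(\emptyset)=0$, the maximum value $h^*$ of $h_T$ over admissible sets satisfies $h^*\ge 0$.

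\emph{Maximizers are small.} First I would show that every admissible set $A$ with $s\le|A|\le 2s$ has $h_T(A)<0$. Indeed, by \eqref{eq:surplus},
\[
|N(A)\setminus T|\ge |N(A)|-|T|>\degstd+|A|-\degstd=|A|.
\]
Since $h^*\ge 0$, every admissible maximizer therefore has size strictly less than $s$. This proves the second claim for every maximizer, and in particular for the largest one once we know it exists.

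\emph{Closure under union.} Next I would show that the union of two maximizers is again a maximizer. The function $A\mapsto|N(A)\setminus T|$ is a coverage function, hence submodular:
\[
|N(A\cup B)\setminus T|+|N(A\cap B)\setminus T|\le|N(A)\setminus T|+|N(B)\setminus T|.
\]
Since $|A|$ is modular, $h_T$ is supermodular, i.e.,
\[
h_T(A\cup B)+h_T(A\cap B)\ge h_T(A)+h_T(B)=2h^*.
\]
Now let $A,B$ be admissible maximizers. By the first step each has size $<s$, so $|A\cup B|<2s$ and both $A\cup B$ and $A\cap B$ are admissible. Hence each of $h_T(A\cup B)$ and $h_T(A\cap B)$ is at most $h^*$, and the inequality above forces both to equal $h^*$. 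So $A\cup B$ is again a maximizer.

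\emph{Conclusion.} Because $F$ is finite, the union of all admissible maximizers is obtained by finitely many pairwise unions. By the previous step it is itself an admissible maximizer, and it contains every other maximizer, so it is the unique inclusion-wise largest one. By the first step it has size $<s$.

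The only delicate point is admissibility of $A\cup B$. It is handled by applying the smallness bound before taking unions, since sizes below $s$ keep the union below $2s$. Condition \ref{cond:local-satisfiability} is not needed here.
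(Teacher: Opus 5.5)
Your proposal is correct and takes essentially the same approach as the paper. It uses condition~\ref{cond:matching-surplus} to show $h_T<0$ on sets of size between $s$ and $2s$, so that every maximizer has fewer than $s$ clauses, and then uses supermodularity to show that the union of two maximizers stays admissible and is again a maximizer, taking repeated unions to obtain the largest one. Your derivation of supermodularity from submodularity of the coverage function, and your remark that condition~\ref{cond:local-satisfiability} is not needed, are both accurate.
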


\begin{proof}
The maximum is at least zero because $h_T(\varnothing)=0$. If
$s\le |A|\le2s$, condition~\eqref{eq:surplus} implies
$$
 h_T(A)\le |A|-|N(A)|+|T|<0.
$$
Thus every maximizer has fewer than $s$ clauses.

By recalling that $\nbh{U \intersection V} \subseteq \nbh{U} \intersection \nbh{V}$ and $\nbh{U \union V} = \nbh{U} \union \nbh{V}$, it is straightforward to show that $h_T$ is \emph{supermodular}:
\begin{equation}\label{eq:supermodular}
h_T(A\cup B)+h_T(A\cap B)\ge h_T(A)+h_T(B)\eqperiod
\end{equation}
If $A$ and $B$ maximize $h_T$, their union has size (strictly) smaller than $2s$.
Both sets on the left in~\eqref{eq:supermodular} are therefore admissible, and each has value at
most the maximum. Hence, \eqref{eq:supermodular} forces both to
attain the maximum. In particular, $A\cup B$ is again a maximizer
and thus has size smaller than $s$. Repeatedly taking unions gives a largest
maximizer, which is necessarily unique.
\end{proof}

We call the maximizer in \cref{lem:closure} the \emph{closure} of $T$ and denote it by $\cl{T}$.

\begin{lemma}[Monotonicity]\label{lem:monotone}
If $T\subseteq T'$ and $|T'|\le \degstd$, then
$\cl{T}\subseteq\cl{T'}$.
\end{lemma}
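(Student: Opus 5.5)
Write $A=\cl{T}$ and $A'=\cl{T'}$. The plan is to show that $A\cup A'$ is also a maximizer of $h_{T'}$ among admissible sets. Since $A'$ is the unique inclusion-wise largest such maximizer by \cref{lem:closure}, this forces $A\cup A'\subseteq A'$, that is, $A\subseteq A'$.

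First we check admissibility. By \cref{lem:closure}, both $A$ and $A'$ have size strictly smaller than $s$. Hence $A\cup A'$ and $A\cap A'$ have size smaller than $2s$ and are admissible. Both $T$ and $T'$ have size at most $\degstd$, so $h_T$ and $h_{T'}$ are both defined on these sets.

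The key inequality compares how the two deficiency functions change when passing from $A\cap A'$ to $A$. Directly from \eqref{eq:deficiency},
\begin{equation*}
h_{T'}(X)-h_T(X)=|N(X)\setminus T|-|N(X)\setminus T'|=|N(X)\cap(T'\setminus T)| \eqcomma
\end{equation*}
which is monotone nondecreasing in $X$ with respect to inclusion. Applying this to $A\cap A'\subseteq A$ gives
\begin{equation*}
h_{T'}(A)-h_{T'}(A\cap A')\ \ge\ h_T(A)-h_T(A\cap A')\ \ge\ 0 \eqcomma
\end{equation*}
where the last step holds because $A$ maximizes $h_T$ and $A\cap A'$ is admissible.

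Next we combine this with supermodularity \eqref{eq:supermodular} of $h_{T'}$:
\begin{equation*}
h_{T'}(A\cup A')\ \ge\ h_{T'}(A)+h_{T'}(A')-h_{T'}(A\cap A')\ \ge\ h_{T'}(A') \eqperiod
\end{equation*}
So $A\cup A'$ attains the maximum of $h_{T'}$ over admissible sets. By the maximality and uniqueness in \cref{lem:closure}, $A\cup A'\subseteq A'$, and the lemma follows.

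The only nontrivial step is the monotonicity of the difference $h_{T'}-h_T$, which is what links the two different deficiency functions; everything else is bookkeeping of admissibility.
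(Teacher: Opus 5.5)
Your proof is correct and follows essentially the same route as the paper: both use the monotonicity of $h_{T'}-h_T$ on $A\cap A'\subseteq A$, supermodularity of $h_{T'}$, and admissibility of $A\cup A'$ to show that $A\cup A'$ maximizes $h_{T'}$, whence $A\subseteq A'$ by the largest-maximizer property. The only difference is cosmetic: you apply maximality of $A$ for $h_T$ before supermodularity, while the paper first combines everything into the single inequality $h_T(A)+h_{T'}(A')\le h_T(A\cap A')+h_{T'}(A\cup A')$ and then uses both maximality conditions.
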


\begin{proof}
Let $A=\cl{T}$ and $B=\cl{T'}$. The difference
$$
 h_{T'}(I)-h_T(I)=|N(I)\cap(T'\setminus T)|
$$
is nondecreasing under inclusion of $I$. Combining this with
supermodularity gives
\begin{equation}
    \label{eq:ineq-deficiency}
 h_T(A)+h_{T'}(B)
 \le h_T(A\cap B)+h_{T'}(A\cup B).
\end{equation}
The union is admissible because both closures have size less than $s$.
Maximality of $A$ and $B$ bounds the right side of \eqref{eq:ineq-deficiency} by the
left side. Equality follows, so $A\cup B$ is a maximizer for
$h_{T'}$. Since $B$ is the largest such maximizer, we conclude that $A\subseteq B$.
\end{proof}

\begin{lemma}[Absorption]\label{lem:absorption}
Let $A=\cl{T}$. If $|U|\le \degstd$ and $U\subseteq T\cup N(A)$,
then $\cl{U}\subseteq A$.
\end{lemma}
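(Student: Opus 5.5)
Let $A=\cl{T}$ and $B=\cl{U}$. The goal is to show that $A\cup B$ is a maximizer of $h_U$. Since $B$ is the unique largest admissible maximizer of $h_U$ by \cref{lem:closure}, it then follows that $A\cup B\subseteq B$, that is, $B\subseteq A$. The union is admissible because both closures have fewer than $s$ clauses, so $|A\cup B|<2s$.

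The key step is to compare $h_U$ on $A\cup B$ with $h_U$ on $B$ by splitting $N(A\cup B)$ into $N(B)$ and $N(A)\setminus N(B)$. This gives
\[
h_U(A\cup B)-h_U(B)=|A\setminus B|-\bigl|(N(A)\setminus N(B))\setminus U\bigr|.
\]
Because $U\subseteq T\cup N(A)$, any variable of $N(A)$ that lies in $U$ but not in $T$ is already counted. The cleanest route is to show directly that
\[
h_T(A)-h_T(A\cap B)\le h_U(A\cup B)-h_U(B),
\]
and then use that $A$ maximizes $h_T$, so the left side is at least $0$. Expanding, the left side equals
\[
|A\setminus B|-|N(A)\setminus T|+|N(A\cap B)\setminus T|.
\]
For the right side it suffices to show
\[
\bigl|(N(A)\setminus N(B))\setminus U\bigr|\le |N(A)\setminus T|-|N(A\cap B)\setminus T|.
\]
The right-hand quantity is at least $|(N(A)\setminus N(A\cap B))\setminus T|$, which in turn is at least $|(N(A)\setminus N(B))\setminus T|$, because $N(A\cap B)\subseteq N(B)$. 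So the inequality reduces to
\[
(N(A)\setminus N(B))\setminus U\subseteq (N(A)\setminus N(B))\setminus T.
\]
This inclusion is where the hypothesis must be used, and it is the main obstacle. It requires $T\cap N(A)\subseteq U$, which does not follow from $U\subseteq T\cup N(A)$ as stated.

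I would therefore restructure the argument to use the hypothesis in the other direction. Variables in $N(A)\setminus U$ are harmless only if they are removed on both sides, so I would compare $h_U$ with $h_{T\cup N(A)}$ instead. Write $T^*=T\cup N(A)$. First, $\cl{T^*}\supseteq A$, and in fact $\cl{T^*}=A$: this is a standard fact, checked by noting that any $I$ has
\[
h_{T^*}(I)=h_T(I)+|N(I)\cap (N(A)\setminus T)|,
\]
together with supermodularity applied with $A$. Second, $U\subseteq T^*$. Monotonicity (\cref{lem:monotone}), adapted to sets $T^*$ that may exceed size $\degstd$, then gives $\cl{U}\subseteq\cl{T^*}=A$.

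To handle the size issue, I would check that the proofs of \cref{lem:closure} and \cref{lem:monotone} only use $|T|\le\degstd$ in order to get $h_{T^*}(I)<0$ for medium-size $I$. For $T^*$, I would instead argue via $h_{T^*}(I)\le h_T(I\cup A)-h_T(A)+|\ldots|$. Alternatively, I would verify $\cl{T^*}=A$ directly: for $I$ admissible,
\[
h_{T^*}(I)\le h_T(I\cup A)-h_T(A)\le 0=h_{T^*}(A)-h_T(A)+h_T(A)-h_T(A),
\]
where the first inequality follows from supermodularity and $h_{T^*}(A)=|A|$. I expect this verification, that is, controlling $T^*$ beyond size $\degstd$, to be the delicate point.
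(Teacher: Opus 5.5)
\textbf{Genuine gap.} Your reduction to monotonicity via $T^*=T\cup N(A)$ is a viable idea. However, the step that carries all the content, namely that $A$ plays the role of $\cl{T^*}$, is not established, and the verification you give is wrong. Since $N(I\cap A)\subseteq N(A)$, the correct identity is $h_{T^*}(I)=|I\cap A|+h_T(I\cup A)-h_T(A)$. Your first inequality therefore drops the term $|I\cap A|$. For $I=A\neq\varnothing$ it reads $|A|\le 0$. More generally, your chain would force $h_{T^*}\le 0$ on every admissible set, contradicting $h_{T^*}(A)=|A|$, which you state yourself. The right end of your display also simplifies to $|N(A)\setminus T|$, not to $0$.

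Even the corrected identity only lets you invoke maximality of $A$ for $h_T$ when $I\cup A$ is admissible. That fails for $I$ with close to $2s$ clauses. You leave these sets, and the extension of \cref{lem:closure} and \cref{lem:monotone} to $|T^*|>\degstd$, at ``I would check''. As written, neither the well-definedness of $\cl{T^*}$ nor the equality $\cl{T^*}=A$ is proved.

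\textbf{How to repair it.} You never need $\cl{T^*}$ itself. In the proof of \cref{lem:monotone}, maximality of the larger closure is used only at the union $A\cup B$ with $B=\cl{U}$, and this union has fewer than $2s$ clauses. There the corrected identity gives $h_{T^*}(A\cup B)-h_{T^*}(A)=h_T(A\cup B)-h_T(A)$. This is negative if $B\not\subseteq A$, because $A$ is the largest maximizer of $h_T$. Combine it with three facts:
\begin{itemize}
\item supermodularity of $h_U$;
\item $h_{T^*}-h_U$ is nondecreasing under inclusion, since $U\subseteq T^*$;
\item $h_U(A\cap B)\le h_U(B)$.
\end{itemize}
Together these give $0\le h_U(B)-h_U(A\cap B)\le h_{T^*}(A\cup B)-h_{T^*}(A)<0$, a contradiction.

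Unwound, this is exactly the paper's proof. Assuming $B\not\subseteq A$, it shows directly that
\[
h_U(B)-h_U(A\cap B)=|B\setminus A|-|N(B)\setminus(U\cup N(A\cap B))|\le h_T(A\cup B)-h_T(A)<0,
\]
using $U\cup N(A\cap B)\subseteq T\cup N(A)$. This is your first attempt with the roles of $T$ and $U$ swapped: $B$ is compared with $A\cap B$ under $h_U$, and $A\cup B$ with $A$ under $h_T$. That swap is what makes the hypothesis point in the right direction.

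If you do want $\cl{T^*}=A$ as a standalone fact, you can handle medium-size $I$ with condition~\ref{cond:matching-surplus} of \cref{thm:criterion}. You also need $|T^*|\le\degstd+|A|$, which holds because $|N(A)\setminus T|=|A|-h_T(A)\le|A|$. Together these give $h_{T^*}(I)<|T^*|-\degstd\le|A|$.
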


\begin{proof}
Put $B=\cl{U}$ and suppose $B\not\subseteq A$. The admissible set
$A\cup B$ is a proper superset of the largest maximizer $A$, so
\begin{equation}
\label{eq:strict-deficiency}
h_T(A\cup B)-h_T(A)<0.
\end{equation}
On the other hand, $U\cup N(A\cap B)\subseteq T\cup N(A)$, and
therefore
$$
\begin{aligned}
h_U(B)-h_U(A\cap B)
 &=|B\setminus A|-|N(B)\setminus(U\cup N(A\cap B))|\\
 &\le |B\setminus A|-|N(B)\setminus(T\cup N(A))|\\
 &=h_T(A\cup B)-h_T(A)\\
 &<0 \eqperiod
\end{aligned}
$$
This contradicts maximality of $B$ for $h_U$, since $A\cap B$ is
admissible.
\end{proof}

We now show the local extendability property of the closure described in \cref{sec:proof-overview} that lets us establish Properties~\ref{item:rop-property-2} and \ref{item:rop-property-3} of a pseudo-reduction operator. 

\begin{lemma}[Local extendability]\label{lem:matching}
Let $A=\cl{T}$ and suppose $A\subseteq B\subseteq F$ with
$|B|\le2s$. There is a partial Boolean assignment $\rho$ which
satisfies every clause in $B\setminus A$ and assigns no variable in
$T\cup N(A)$.
\end{lemma}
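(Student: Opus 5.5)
The plan is to apply Hall's theorem to the bipartite graph $H$ with left side $B\setminus A$, right side $[n]\setminus(T\cup N(A))$, and an edge between a clause $C$ and a variable $x$ whenever $x\in N(C)$. If $H$ has a matching saturating $B\setminus A$, let $\rho$ assign each matched variable the value that satisfies the literal of that variable in its matched clause. Distinct clauses receive distinct variables, so $\rho$ is consistent, it satisfies every clause of $B\setminus A$, and it assigns no variable in $T\cup N(A)$. By Hall's theorem, it therefore suffices to show that every nonempty $I\subseteq B\setminus A$ satisfies
\[
|N(I)\setminus(T\cup N(A))|\ge |I|.
\]

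Fix such an $I$ and compare $A\cup I$ with $A$. Since $N(A\cup I)\setminus T$ is the disjoint union of $N(A)\setminus T$ and $N(I)\setminus(T\cup N(A))$, and $A$ and $I$ are disjoint, we get
\[
h_T(A\cup I)-h_T(A)=|I|-|N(I)\setminus(T\cup N(A))|.
\]
The set $A\cup I\subseteq B$ has size at most $2s$, so it is admissible. It strictly contains $A$, and $A=\cl{T}$ is by \cref{lem:closure} the unique inclusion-wise largest admissible maximizer of $h_T$. Hence $A\cup I$ is not a maximizer, so $h_T(A\cup I)<h_T(A)$. This is the same strict inequality \eqref{eq:strict-deficiency} used in the proof of \cref{lem:absorption}. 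It gives $|N(I)\setminus(T\cup N(A))|>|I|$, which is stronger than Hall's condition requires.

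The only point needing care is the admissibility of $A\cup I$, which is exactly why the statement assumes $|B|\le 2s$. Note also that $h_T$ is defined only for $|T|\le\degstd$, which is implicit in $\cl{T}$ being defined. With the strict inequality in hand, the Hall argument and the construction of $\rho$ are routine.
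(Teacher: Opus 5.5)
Your proof is correct and matches the paper's argument: maximality of $A=\cl{T}$ gives $h_T(A\cup I)\le h_T(A)$, which yields Hall's condition in the incidence graph with $T\cup N(A)$ removed, and each clause of $B\setminus A$ is then satisfied through its matched variable. The paper uses only this non-strict inequality, which is all Hall's theorem needs; your strict version is also valid, since $A$ contains every admissible maximizer and so no admissible proper superset of $A$ can be one.
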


\begin{proof}
For every $L\subseteq B\setminus A$, maximality of $A$ gives
$$
 0\ge h_T(A\cup L)-h_T(A)
   =|L|-|N(L)\setminus(T\cup N(A))|\eqperiod
$$
Consequently, Hall's condition holds in the bipartite graph whose left side is the clauses of
$B\setminus A$ and the right side is the variables outside $T\cup N(A)$, with an edge between a clause and a variable if the variable or its negation is contained in the clause.
Choose a matching covering these clauses and set each matched variable
to satisfy the literal in its matched clause. The matched variables
 are distinct, so the assignments are consistent and together satisfy all clauses in $B\setminus A$. 
\end{proof}

\begin{lemma}[Compatibility of local reductions]\label{lem:compatibility}
Let $A=\cl{T}$ and suppose $A\subseteq B\subseteq F$ with
$|B|\le2s$. If $\Vars(f)\subseteq T\cup N(A)$, then
$
\redop_B(f)=\redop_A(f)
$.
\end{lemma}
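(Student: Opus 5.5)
By uniqueness in \cref{lem:normal}, it suffices to show two things: $f-\redop_A(f)\in J_B$, and every monomial occurring in $\redop_A(f)$ is irreducible modulo $J_B$. The first is immediate, since $f-\redop_A(f)\in J_A\subseteq J_B$ because $A\subseteq B$. So the whole proof reduces to showing that a monomial irreducible modulo $J_A$ and supported in $T\cup N(A)$ stays irreducible modulo $J_B$. By \cref{lem:support}, every monomial of $\redop_A(f)$ is supported in $\Vars(f)\cup N(A)\subseteq T\cup N(A)$.

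To prove this, I take the partial assignment $\rho$ from \cref{lem:matching}. It satisfies every clause in $B\setminus A$ and assigns no variable in $T\cup N(A)$. Let $\phi$ denote the substitution homomorphism on $\B$ that applies $\rho$. It is a well-defined algebra homomorphism, because substituting Boolean constants respects $x_i^2=x_i$.

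I then record three properties of $\phi$. First, $\phi(p_C)=0$ for $C\in B\setminus A$, since $\rho$ satisfies $C$ and $p_C$ has a factor that $\rho$ sets to zero. Second, $\phi(p_C)=p_C$ for $C\in A$, since $N(A)$ is untouched by $\rho$. Hence $\phi(J_B)\subseteq J_A$, because $\phi(\sum_C q_C p_C)=\sum_{C\in A}\phi(q_C)p_C$. Third, $\phi$ fixes every monomial supported in $T\cup N(A)$. Any other monomial $m'$ is mapped either to $0$ or to a monomial that divides $m'$ properly, and hence has strictly smaller degree.

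Now suppose for contradiction that some monomial $m$ supported in $T\cup N(A)$ is irreducible modulo $J_A$ but equals $\LM(q)$ for some nonzero $q\in J_B$. Then $\phi(q)\in J_A$, and I claim that $\LM(\phi(q))=m$.

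This claim is the main obstacle, and it relies on the order being degree-lexicographic. The coefficient of $m$ in $\phi(q)$ can receive contributions only from $m$ itself, which contributes its original coefficient, and from monomials $m'\in q$ that $\phi$ sends to $m$. Such an $m'$ has the form $m\cdot u$ with $u$ a nonconstant monomial in assigned variables, so $m'\succ m$, contradicting $m=\LM(q)$. Hence the coefficient of $m$ is unchanged and nonzero. Every other monomial of $\phi(q)$ is the image of some $m'\prec m$, and that image has degree at most $\deg m'$. If its degree is strictly smaller than $\deg m$, it is below $m$ in the order. If its degree equals $\deg m$, then $\deg m'=\deg m$ as well, so either the image is $m'$ itself, which is below $m$, or the image has degree strictly smaller than $\deg m'$, which contradicts the equality. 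Thus $\LM(\phi(q))=m$ and $m$ is reducible modulo $J_A$, a contradiction.

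This proves irreducibility modulo $J_B$, and uniqueness in \cref{lem:normal} then gives $\redop_B(f)=\redop_A(f)$.
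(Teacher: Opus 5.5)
Your proof is correct and follows essentially the same route as the paper's: you restrict by the assignment $\rho$ from \cref{lem:matching} to map $J_B$ into $J_A$, use the degree-lexicographic order to check that a leading monomial supported on $T\cup N(A)$ survives the restriction, and then conclude via \cref{lem:support} and uniqueness in \cref{lem:normal}. One small imprecision is in your ``third property'': every monomial avoiding all variables assigned by $\rho$ is fixed, not only those supported in $T\cup N(A)$, but your leading-monomial argument already covers this case by allowing the image of $m'$ to be $m'$ itself.
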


\begin{proof}
Put $S=T\cup N(A)$ and take the assignment $\rho$ from
Lemma~\ref{lem:matching}. Since $\rho$ satisfies $B\setminus A$
and leaves all generators from $A$ unchanged, restriction by $\rho$ maps
$J_B$ into $J_A$.

Suppose a monomial $m$ is supported on $S$ and $q\in J_B$ has leading monomial
$m$. Then $\res q\rho\in J_A$ still has leading monomial $m$.
Indeed, $m$ is untouched, while every other monomial either is untouched
 and remains smaller than $m$, vanishes, or loses degree. 

It follows that reducibility modulo $J_B$ and modulo $J_A$ coincide
on monomials supported on $S$; the reverse implication uses
$J_A\subseteq J_B$. By Lemma~\ref{lem:support}, $\redop_A(f)$ is
supported on $S$. It is therefore also irreducible modulo $J_B$ and
is congruent to $f$ modulo $J_B$. Uniqueness of reductions establishes the claim.
\end{proof}

\subsection{The Degree Lower Bound}
\label{sec:r-op-properties}

Let us now prove the degree lower bound claimed in \cref{thm:criterion}. For every multilinear monomial $m$ with $\deg m\le \degstd$, define
\begin{equation}
\label{eq:R}
R(m)\coloneqq\redop_{\cl{\Vars(m)}}(m)
\end{equation}
and extend linearly to the degree-at-most-$\degstd$ part of $\B$. Extend linearly and arbitrarily to higher degrees. We now show that $\pseudred$ is a degree-$\degstd$ pseudo-reduction operator for $F$.

\paragraph{Property \ref{item:rop-property-1}: $R(1)=1$.}
By Lemma~\ref{lem:closure}, $\cl{\varnothing}$ has fewer than $s$
clauses. Condition~\ref{cond:local-satisfiability} therefore implies that
$1\not\in J_{\cl{\varnothing}}$. Hence, since $1$ is minimal in the degree-lexicographic order, $R(1)=1$.

\paragraph{Property \ref{item:rop-property-2}: clause axioms vanish.}
Fix $C\in F$, put $U=\Vars(C)$, and set $B=\cl{U}$.
The closure is defined because $|U|\le \degstd$. We have $C\in B$:
otherwise $B\cup\{C\}$ is admissible and
$
 h_U(B\cup\{C\})=h_U(B)+1,
$
contradicting maximality. For every monomial $t$ occurring in $p_C$,
monotonicity gives $\cl{\Vars(t)}\subseteq B$.
Lemma~\ref{lem:compatibility} applied with the smaller closure and
with $f=t$ yields $R(t)=\redop_B(t)$. Therefore
\begin{equation}
\label{eq:Raxiom}
R(p_C)=\redop_B(p_C)=0 \eqperiod
\end{equation}

\paragraph{Property \ref{item:rop-property-3}: multiplication is compatible with reduction.}
Let $m$ be multilinear with $\deg m<\degstd$, and let $x_i$ be a variable.
Set
$$
 T=\Vars(m),\quad T'=T\cup\{i\},\quad
 A=\cl{T},\quad B=\cl{T'}.
$$
Here $\Vars$ is identified with variable indices. %
We have $A\subseteq B$ by monotonicity.

Every monomial $u$ of $R(m)=\redop_A(m)$ has degree at most $\deg m$
and is supported on $T\cup N(A)$. Thus the squarefree monomial
$v=x_i u$ in $\B$ has degree at most $\degstd$ and is supported on
$T'\cup N(B)$. Absorption applied to $B=\cl{T'}$ gives
$
 \cl{\Vars(v)}\subseteq B.
$
Lemma~\ref{lem:compatibility} consequently gives $R(v)=\redop_B(v)$.
Applying this term by term, and then using~\eqref{eq:nested}, yields
\begin{equation}
\label{eq:multiplication}
\begin{aligned}
R\bigl(x_iR(m)\bigr)
 &=\redop_B\bigl(x_i\redop_A(m)\bigr)\\
 &=\redop_B(x_i m)\\
 &=R(x_i m) \eqperiod
\end{aligned}
\end{equation}
The last equality uses $\Vars(x_i m)=T'$.
Thus, we have shown that $R$ is a degree-$\degstd$ pseudo-reduction operator for $F$, and \cref{thm:criterion} follows.

\begin{remark}[Provenance of techniques]
    The key notions in this section appeared before in several works. Franco and Van Gelder~\cite{FVG03Perspective} seem to be the first to study matching deficiency in the context of random CNF formulas. Proof-complexity applications were investigated at length by Kullmann~\cite{Kullmann00Application, Kullmann11ConstraintSatisfaction}. Notably, Kullmann uses deficiency to prove resolution upper bounds, while we employ it for lower bounds. Variations of the observations in Lemmas~\ref{lem:closure}\nobreakdash--\ref{lem:absorption} are present already in \cite[Section~4]{Kullmann11ConstraintSatisfaction}. The construction of the closure as a union of bounded-size sets is also an established technique in proof complexity, present already in \cite{ABRW04Pseudorandom, AR03LowerBounds} and later works such as~\cite{MN24Generalized}, and in other contexts including lower bounds for sum of squares~\cite{KMOW2017Sumofsquares} and for other algorithms for constraint satisfaction~\cite{CNP24Hierarchies,CN25Hierarchies}. 
\end{remark}

\subsection{Matching Deficiency and Local Satisfiability of Random CNF Formulas}\label{sec:random}

For sufficiently large $n$, let
\begin{equation}\label{eq:parameters}
\omega_n(1) \leq k \leq O(\log{n})\eqcomma \qquad
 M=n2^k \eqcomma \qquad
 s=\frac{n}{\sqrt{k}} \eqcomma \qquad
 \degstd=\frac n2-\frac{4n}{\sqrt{k}}\eqperiod
\end{equation}
In particular, $2\le k\le \degstd$ and $2s<n$.
Sample $M$ independent uniform signed $k$-clauses with replacement by choosing a uniformly random $k$-element subset of $[n]$ and then choosing independent uniform signs for the picked variables. The goal in this section is to show that, with high probability, a randomly sampled $k$-CNF formula with these parameters is unsatisfiable and satisfies the criteria in \cref{thm:criterion}.

\begin{proposition} \label{prop:random-cnf-properties} With probability $1-o_n(1)$ and parameters as in \eqref{eq:parameters}, a random $k$-CNF formula on $n$ variables with $M$ sampled clauses is unsatisfiable and satisfies Conditions~\ref{cond:local-satisfiability} and~\ref{cond:matching-surplus} in \cref{thm:criterion}. 
\end{proposition}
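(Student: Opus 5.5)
The plan is to prove the three properties separately, each by a first-moment (union bound) argument over the $M=n2^k$ independently sampled clauses. Throughout I treat the formula as the indexed sequence of $M$ samples. A clause set $A$ is then a set of indices, and repeated clauses cause no problems: $N(A)$ only depends on the clauses themselves, and a satisfying assignment for the distinct clauses satisfies all copies.

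\emph{Unsatisfiability.} Fix an assignment $\alpha\in\{0,1\}^n$. Each sampled clause is falsified by $\alpha$ with probability exactly $2^{-k}$, independently. Hence $\alpha$ satisfies $F$ with probability
$$(1-2^{-k})^{M}\le e^{-M2^{-k}}=e^{-n}.$$
A union bound over all $2^n$ assignments gives failure probability at most $(2/e)^n=o(1)$.

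\emph{Condition~\ref{cond:local-satisfiability}.} I will show the stronger expansion statement that with high probability every index set $A$ with $1\le|A|<s$ satisfies $|N(A)|\ge|A|$. Applying this to every subset of a given $A$ gives Hall's condition in the CVIG of $A$. As in the proof of \cref{lem:matching}, matching each clause to a distinct variable and setting that variable to satisfy its literal yields a satisfying assignment. For a fixed $t<s$, the bad event is that some $t$ clauses have all their variables inside some $t$-set of variables. (If $t<k$ this is impossible, since a single clause already has $k$ variables.) Its probability is at most
$$\binom{M}{t}\binom{n}{t}\Bigl(\frac tn\Bigr)^{kt}\le\Bigl(\frac{eM}{t}\cdot\frac{en}{t}\cdot\Bigl(\frac tn\Bigr)^{k}\Bigr)^t=\Bigl(e^2\,2^k\Bigl(\frac tn\Bigr)^{k-2}\Bigr)^t .$$
For $t\le s=n/\sqrt k$ the base is at most $e^2 2^k k^{-(k-2)/2}$, which tends to $0$ superexponentially in $k$ because $k=\omega_n(1)$. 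For small $t$ the factor $(t/n)^{k-2}$ is polynomially small in $n$. So the sum over $t$ from $k$ to $s$ is $o(1)$; I will split it at, say, $t\le\sqrt n$ versus larger $t$ to make this clean.

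\emph{Condition~\ref{cond:matching-surplus}.} Since $|A|\le 2s$, we have $\degstd+|A|\le n/2-4s+2s=n/2-2s$. So it suffices to show that every set of at least $s$ clauses has $|N(A)|>n/2-2s$. By monotonicity of $N$ it is enough to check sets of exactly $s$ clauses (rounding $s$ to an integer is harmless). Let $v=\lfloor n/2-2s\rfloor$. The bad event is that some $s$ clauses lie inside some $v$-set of variables. One clause lies inside a fixed $v$-set with probability
$$\binom{v}{k}\Big/\binom nk\le (v/n)^k\le 2^{-k}(1-4/\sqrt k)^k\le 2^{-k}e^{-4\sqrt k}.$$
Hence the expected number of bad pairs is at most
$$2^n\binom{M}{s}\bigl(2^{-k}e^{-4\sqrt k}\bigr)^{s}\le 2^n\Bigl(\frac{eM}{s}\,2^{-k}e^{-4\sqrt k}\Bigr)^s=2^n\bigl(e\sqrt k\,e^{-4\sqrt k}\bigr)^{n/\sqrt k}=2^n e^{-4n+O(n\ln k/\sqrt k)},$$
which is $o(1)$.

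\emph{Main obstacle.} The hard step is this last estimate, where the constants are tight. The entropy factor $2^n$ from choosing the variable set must be beaten by the deficit $e^{-4\sqrt k\,s}=e^{-4n}$. This only works because $s=n/\sqrt k$ makes the $\binom Ms$ factor contribute just $(e\sqrt k)^{s}=e^{o(n)}$. I will double-check two points here. First, that $(1-4/\sqrt k)^k\le e^{-4\sqrt k}$ really has the stated slack. Second, that the rounding of $s$ and $v$ and the condition $2s<n$ cause no issues. The upper bound $k=O(\log n)$ is not needed in these estimates; it only ensures $M\le\poly(n)$. A final union bound over the three failure events proves the proposition.
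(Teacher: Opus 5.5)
Your proof is correct and takes essentially the same approach as the paper. Both use separate first-moment union bounds for unsatisfiability, for Hall-type expansion of small clause sets, and for the surplus condition, with the same key estimate $p\le 2^{-k}e^{-4\sqrt{k}}$ beating the $2^n$ entropy factor. The differences are cosmetic: you prove expansion only below size $s$ (the paper goes up to $2s$), you use monotonicity of $N$ to reduce Condition~(ii) to sets of exactly $\lceil s\rceil$ clauses rather than summing over $s\le j\le 2s$, and your planned split at $t\le\sqrt{n}$ in Condition~(i) is unnecessary because the uniform bound on the base already gives a geometric series that is $o(1)$.
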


At first sight, the parameter choices for $s$ and $\degstd$ in \eqref{eq:parameters} seem rather arbitrary, but they turn out to be close to best possible for our applications; we elaborate on this in \cref{rmk:parameter-choice}. The rest of this section establishes \cref{prop:random-cnf-properties}.

\begin{proof}[Proof of \cref{prop:random-cnf-properties}] To show unsatisfiability, observe that a fixed Boolean assignment satisfies a random $k$-clause with probability
$1-2^{-k}$. Independence and a union bound over all assignments imply
\begin{equation}
\label{eq:unsat}
\begin{aligned}
\Pr[F\text{ is satisfiable}]
 &\le 2^n(1-2^{-k})^M\\
 &\le \exp(n\ln2-M2^{-k})\\
 &=\exp\bigl(-(1-\ln2)n\bigr)\eqperiod
\end{aligned}
\end{equation}

Let us now establish Condition~\ref{cond:local-satisfiability}. We show that, with probability $1-o(1)$,
\begin{equation}
\label{eq:local-Hall}
|N(A)|>|A|\qquad
 \text{for every nonempty clause set $A$ with $|A|\le2s$}.
\end{equation}
This implies Hall's condition in the constraint-variable incidence graph for every
subformula of at most $2s$ clauses. A matching there
then yields a satisfying assignment by setting
each matched variable to satisfy its matched clause.

If $1\le j<k$, any $j$ clauses have at least $k>j$ neighboring
variables. For $k\le j\le2s$, failure of~\eqref{eq:local-Hall} for
$j$ clauses means that their supports all lie inside some
$j$-element variable set. Therefore
\begin{equation}\label{eq:small-union}
\begin{aligned}
\Pr[\exists A:\ |A|=j,\ |N(A)|\le j]
 &\le \binom Mj\binom nj
       \left(\frac{\binom jk}{\binom nk}\right)^j\\
 &\le
       \left[e^2 2^k\left(\frac jn\right)^{k-2}\right]^j.
\end{aligned}
\end{equation}
The last inequality follows from $\binom ab\le(ea/b)^b$ and
\begin{equation}
\label{eq:hypergeom}
\frac{\binom jk}{\binom nk}
 =\prod_{a=0}^{k-1}\frac{j-a}{n-a}
 \le\left(\frac jn\right)^k
 \qquad(k\le j\le n)\eqperiod
\end{equation}
Since $j/n\le2/\sqrt{k}$, the bracket in~\eqref{eq:small-union}
is at most
$$
 B_k=e^2 2^k\left(\frac2{\sqrt{k}}\right)^{k-2}.
$$
We have
$$
 \ln B_k
 =2+(2k-2)\ln2-\frac{k-2}{2}\ln k\le-k
$$
for all sufficiently large $k$. Consequently, a union bound gives
\begin{equation}
\label{eq:small-failure}
\Pr[\eqref{eq:local-Hall}\text{ fails}]
 \le\sum_{j=k}^{2s}e^{-kj}
 \le\frac{e^{-k^2}}{1-e^{-k}}=o(1)\eqperiod
\end{equation}
Thus, Condition \ref{cond:local-satisfiability} is established. 

Finally, we turn to Condition~\ref{cond:matching-surplus}. 
Fix $s\le j\le2s$ and put $u=\degstd+j$. By~\eqref{eq:parameters},
\begin{equation}
\label{eq:u-bound}
u\le\frac n2-\frac{2n}{\sqrt{k}}.
\end{equation}
For a fixed $u$-element variable set $U$, the probability that a
random clause is supported inside $U$ is
$$
 p_u=\frac{\binom uk}{\binom nk}.
$$
Equations~\eqref{eq:hypergeom} and~\eqref{eq:u-bound} give
\begin{equation}\label{eq:medium-p}
p_u\le 2^{-k}\left(1-\frac4{\sqrt{k}}\right)^k
     \le 2^{-k}e^{-4\sqrt{k}}.
\end{equation}
(Here $k$ is sufficiently large that $1-4/\sqrt{k}>0$.)

If a set of $j$ clauses has at most $\degstd+j$ neighboring variables,
its supports all lie inside some $u$-element set. A union bound,
using $\binom nu\le2^n$ and $M=n2^k$, therefore yields
\begin{equation}\label{eq:medium-union}
\begin{aligned}
&\Pr[\exists A:\ |A|=j,\ |N(A)|\le \degstd+j]\\
 &\qquad\le\binom nu\binom Mj p_u^j\\
 &\qquad\le
 \exp\left(n\ln2+j\left[\ln\frac{en}{j}-4\sqrt{k}\right]\right) \eqperiod
\end{aligned}
\end{equation}
For sufficiently large $n$,
$$
 j\ge s\ge\frac{n}{2\sqrt{k}},\qquad
 \ln\frac{en}{j}\le\ln(2e\sqrt{k})\le\sqrt{k} 
 \eqperiod
$$
The right-hand side of~\eqref{eq:medium-union} is thus at most
\begin{equation}\label{eq:medium-exponent}
\exp(n\ln2-3j\sqrt{k})
 \le\exp\bigl(-({\tfrac32}-\ln2)n\bigr).
\end{equation}
Summing over $s\le j\le2s$, of which there are at most $n$ possible
values, proves
\begin{equation}\label{eq:medium-failure}
\Pr[\exists A:\ s\le|A|\le2s,\ |N(A)|\le \degstd+|A|]
 \le n\exp\bigl(-({\tfrac32}-\ln2)n\bigr)=o(1) \eqperiod
\end{equation}
This verifies Condition~\ref{cond:matching-surplus} with high probability.

Let us now finish the proof. Combining~\eqref{eq:unsat}, \eqref{eq:small-failure}, and
\eqref{eq:medium-failure}, the total
failure probability is at most
\begin{equation}\label{eq:total-failure}
e^{-(1-\ln2)n}
 +\frac{e^{-k^2}}{1-e^{-k}}
 +n e^{-({\tfrac32}-\ln2)n}
 =o(1)\eqperiod
\end{equation}
On the complementary event, the sampled CNF is unsatisfiable and the two hypotheses of
Theorem~\ref{thm:criterion} hold. 
The
resulting formula after deleting repeated clauses has width exactly $k$ and at most
$
 M
$
clauses, as desired. \qedhere
\end{proof}

\begin{remark}[Parameter choice]\label{rmk:parameter-choice}
    Due to the upper bound in \cref{thm:upper-bound}, the degree lower bound cannot be pushed beyond $n/2 - n\ln(k)/2k +O(n/k)$ and thus has no chance to contribute to higher-depth $\ACzero[p]$-Frege lower bounds, but even reaching this threshold seems non-trivial to achieve. The key parameter is the denominator in $s$, which is what determines the depth of the circuits in the proof lines for which the size lower bound applies. The value $\sqrt{k}$ comes from \eqref{eq:medium-union}\nobreakdash--\eqref{eq:medium-exponent} and is best possible with these calculations. %
    To see this, write the desired degree threshold as
$$
\degstd=\frac n2-\Delta\eqperiod
$$
For a $j$-clause set, the largest
permitted neighborhood in a bad event has size
$$
u=\degstd+j=\frac n2-(\Delta-j)\eqperiod
$$
Provided $j<\Delta$, the probability that a random clause is supported
inside a fixed $u$-element variable set satisfies
$$
\begin{aligned}
p_u
=\frac{\binom uk}{\binom nk}
&\le
\left(\frac12-\frac{\Delta-j}{n}\right)^k\\
&=
2^{-k}\left(1-\frac{2(\Delta-j)}{n}\right)^k\\
&\le
2^{-k}\exp\left(-\frac{2k(\Delta-j)}{n}\right).
\end{aligned}
$$
(If $u<k$, the containment probability is zero and there is
nothing to prove.)
Since $M=n2^k$, the union bound in \eqref{eq:medium-union} in this more general form gives
$$
\begin{aligned}
&\Pr\bigl[\exists A:\ |A|=j,\ |N(A)|\le \degstd+j\bigr]\\
&\qquad\le
\binom nu\binom Mj p_u^j\\
&\qquad\le
2^n
\left(\frac{en2^k}{j}\right)^j
\left[
2^{-k}\exp\left(-\frac{2k(\Delta-j)}{n}\right)
\right]^j\\
&\qquad=
\exp\left(
n\ln2+j\ln\frac{en}{j}
-\frac{2k}{n}j(\Delta-j)
\right).
\end{aligned}
$$
The saving depends on
$
j(\Delta-j), 
$
and even at its optimal value $j=\Delta/2$ we have
$$
j(\Delta-j)
=
\frac{\Delta^2}{4}
-\left(j-\frac{\Delta}{2}\right)^2
\le \frac{\Delta^2}{4}\eqperiod
$$
Consequently, the magnitude of the negative term in the exponent
is at most
$
k\Delta^2/{2n}.
$
For this union-bound estimate to overcome the $n\ln2$ term, we
therefore need to pick $\Delta$ so that
$
k\Delta^2/n = \Omega(n),
$
and hence
$
\Delta = \Omega(n/\sqrt{k}).
$
\end{remark}

\section{Polynomial Calculus Degree Upper Bounds for Random CNF Formulas}
\label{sec:upper-bound}

In this section we prove \cref{thm:upper-bound}. The key idea is a deterministic
criterion for a degree upper bound: if the supports of the clauses of an
unsatisfiable formula are sufficiently widely distributed, then all
monomials of large enough degree can be reduced to smaller degree. One
additional degree suffices to turn these reductions into a refutation.
We then show that for a random CNF formula $F$, the clauses obtainable by a
single resolution step from $F$ satisfy this criterion with high probability. 

\begin{theorem}[Quantitative form of Theorem~1.3]
\label{ub:theorem}
Let $n$ be a sufficiently large integer, $\omega_n(1) \leq k \leq O(\log{n})$, and put
\begin{equation}
M=n2^k,
\qquad
r=\frac n2-\frac{n}{2k}\ln\frac{k}{32}\eqperiod
\label{ub:parameters}
\end{equation}
Let $F$ consist of $M$ independently sampled uniform signed $k$-clauses
on $n$ variables. With probability $1-o_n(1)$, the formula $F$ is
unsatisfiable and
\begin{equation}
\operatorname{Deg}_{\mathrm{PC},\mathbb F}(F)\le r+1
\label{ub:main-bound}
\end{equation}
over every field $\mathbb F$. In particular, with the same probability, the bound
\begin{equation}
\operatorname{Deg}_{\mathrm{PC},\mathbb F}(F)
\le \frac n2-\frac{n\ln k}{2k}+O\!\left(\frac nk\right) 
\label{ub:asymptotic-bound}
\end{equation}
holds over every field.
\end{theorem}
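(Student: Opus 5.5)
Unsatisfiability with probability $1-e^{-(1-\ln 2)n}$ is exactly the computation in \eqref{eq:unsat}, so the work is the degree bound. I would split the argument into a deterministic covering criterion and a probabilistic covering estimate, following the outline in \cref{sec:proof-overview}.

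\emph{Deterministic criterion.} Set $\degstd=\lceil r\rceil$ and let $G$ be an unsatisfiable clause set of width at most $\degstd$. Say that $G$ covers the $\degstd$-element variable sets if every $\degstd$-subset of $[n]$ contains $\vars{C}$ for some $C\in G$. I would show that then $1\in\PC_{\degstd+1,G}$ over $\B$.

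\begin{enumerate}
\item Every multilinear monomial $m$ of degree $\degstd$ or $\degstd+1$ can be rewritten, within $\PC_{\degstd+1}$, as a combination of lower-degree monomials. Pick a covering clause $C$ with $\vars{C}\subseteq\vars{m}$. Then $(m/m_C)\,p_C\in\PC_{\degstd+1}$ by \eqref{eq:sd-absorption}, and its leading term is $\pm m$ with all other terms of lower degree.
\item Hence every polynomial of degree at most $\degstd+1$ is congruent modulo $V\coloneqq\PC_{\degstd+1}(\degstd+1)$ to one of degree at most $\degstd-1$.
\item Suppose $1\notin V$. Take the subspace $V_{\le\degstd-1}$ of elements of $V$ of degree at most $\degstd-1$, and the reduction modulo it in the spirit of \cref{lem:normal}. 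Define $\phi$ on polynomials of degree at most $\degstd$ as ``reduce to degree at most $\degstd-1$ modulo $V$, then take the normal form modulo $V_{\le\degstd-1}$''. This is well defined because any two reductions differ by an element of $V$ of degree at most $\degstd-1$.
\item Because $V$ is closed under multiplication by variables on elements of degree at most $\degstd$, I expect $\phi$ to satisfy the properties of \cref{def:r-op} for arbitrary degree. It would then induce a multiplicative linear functional, i.e.\ an algebra homomorphism $\B/\langle G\rangle\to$ (the quotient of degree-$(\le\degstd-1)$ polynomials). That quotient is finite dimensional and nonzero since $1\notin V$, so some evaluation point should exist, contradicting unsatisfiability of $G$.
\end{enumerate}

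The cleanest way to execute the last step is probably the CEI-style argument that $V$ restricted to degree at most $\degstd-1$ is a multiplicatively closed ``ideal truncation'' whose quotient is a nonzero commutative $\F$-algebra. A maximal ideal of that algebra yields a common zero over a field extension, and Booleanity forces that zero to be a Boolean satisfying assignment of $G$. I expect this step to be the main technical obstacle: the quotient must be shown to be a genuine ring, namely that multiplication is well defined, which is where one extra degree of slack $\degstd+1$ is used.

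\emph{Probabilistic part.} Take $G=F\cup\{$resolvents of pairs in $F\}$; these have width at most $2k-2\le\degstd$ and are derivable in degree $2k\le\degstd+1$, so refuting $G$ suffices. Fix a $\degstd$-set $U$ and split $F$ into two batches of $M/2$ clauses. Covering $U$ needs two clauses that clash on exactly one variable $x\in U$ with all other variables in $U$. The number of clauses of the first batch inside $U\cup\{x\}$ with $x$ appearing is concentrated around $(M/2)(\degstd/n)^{k}\cdot k/\degstd\approx 2^{k}k\,e^{-\ln(k/32)}\cdot(\text{const})$, which is roughly $\Theta(2^k)$-scale. I would then compute the probability that no cross-batch pair resolves on some $x\in U$ with the resolvent inside $U$. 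Summing over $x$ and using independence of the batches, this should be at most $\exp(-\Omega(n\cdot 32\cdot\mathrm{const}))$, beating the union bound over $\binom{n}{\degstd}\le 2^n$ sets $U$. The constant $32$ in $r$ is chosen precisely so that $(\degstd/n)^{2k}M^2/n\gg n$, using $(1-\ln(k/32)/k)^{2k}\approx (32/k)^2$. Getting this exponent bound tight, with Poisson/Chernoff tails for the first-batch counts, is routine but delicate. Finally, \eqref{ub:asymptotic-bound} follows by expanding $r$.
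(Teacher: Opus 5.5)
Your decomposition is the paper's, and the deterministic half is sound in outline. Your quotient-ring/maximal-ideal finish is equivalent to the paper's common eigenvector of the commuting idempotents $X_iw=R(x_iw)$ on the image of the projection $R$ along $V$. The step you call the main obstacle takes only a few lines. Since $x_i(f-R(f))\in V$ for $\deg f\le D$, you get $R(x_if)=R(x_iR(f))$. Covering puts the image in degree at most $D-1$, so $X_iX_j=X_jX_i$, $X_i^2=X_i$ and $f(X)1=R(f)$ follow at once.

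The genuine gap is the probability estimate, which would not work as written. First, the resolved variable must lie \emph{outside} $U$, since it is eliminated from the resolvent. With $x\in U$ both parents lie inside $U$, which is no easier than covering $U$ by a clause of $F$ itself. For a fixed $U$ that fails with probability about $e^{-32n/k}$, hopeless against $\binom{n}{r}=2^{n-o(n)}$ sets. Second, for $x\notin U$ your own expression $(M/2)(r/n)^k k/r$ equals $k(2r/n)^{k-1}\approx 32$, a constant rather than $\Theta(2^k)$: the $2^k$ in $M$ is cancelled by $(r/n)^{k-1}$. Likewise $(r/n)^{2k}M^2/n\approx 1024n/k^2=o(n)$, so your stated reason for the constant $32$ would actually predict failure. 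The expected number of resolving cross-batch pairs inside $U$ is $\Theta(n)$ only because of a factor $k^2$ (the choice of the outside literal in each parent). It is merely a constant times $n$, so the constant in the exponent must be tracked against $n\ln 2$.

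Third, ``summing over $x$'' gives no exponential tail. The events for different $x$ share clauses, and each per-variable count vanishes with constant probability $\approx e^{-32}$, so what is needed is an aggregate statement. The missing mechanism is the paper's two-stage argument.
Put $q=n-r\ge n/2$, $m=M/2$ and $p=\binom{r}{k-1}/\binom{n}{k}$, so $mp=32+o(1)$. Call $z\notin U$ represented if the first batch contains a clause whose only variable outside $U$ is $z$. Such a clause represents exactly one variable, so a union bound over the at most $2^q$ sets of $\lceil q/2\rceil$ outside variables shows that at least $q/2$ are represented, except with probability $2^q(1-qp/2)^m\le 2^qe^{-12q}$.
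Now condition on the first batch and fix one such clause $\ell_z\vee A_z$ per represented $z$. Each second-batch clause independently matches one of them (opposite literal on $z$, remaining variables in $U$ and avoiding $A_z$) with probability at least $qp/8$, since the matching events for different $z$ are disjoint. Hence no resolvent lies in $U$ with probability at most $2^qe^{-12q}+e^{-mqp/8}\le 2e^{-3q}\le 2e^{-3n/2}$, and $2^{n+1}e^{-3n/2}\le 2^{-n}$. This is exactly where $mp\ge 24$, i.e.\ the $32$ in $r$, is used.

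(Minor: $D=\lceil r\rceil$ yields only $\lceil r\rceil+1$; take $\lfloor r\rfloor$, which the estimate tolerates.)
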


We use the notation and clause encoding from Section~2. Although it is
convenient to write multilinearized products in the Boolean quotient
$\mathbb B$, all degree bounds below hold for ordinary polynomial
 calculus with the explicit Boolean axioms $x_i^2-x_i$. %

\subsection{Degree Upper Bound from Support Covering}
\label{ub:covering-subsection}

For $T\subseteq[n]$, write $x_T=\prod_{i\in T}x_i$. We say that a
clause set $F$ \emph{covers the $r$-element variable sets} if every
$T\subseteq[n]$ with $|T|=r$ contains $\operatorname{Vars}(C)$ for some
clause $C\in F$. %

\begin{lemma}[Support covering]
\label{ub:covering}
Let $1\le r\le n$, and let $F$ be an unsatisfiable CNF formula on $n$
variables whose clauses have width at most $r$. If $F$ covers the
$r$-element variable sets, then
$
\operatorname{Deg}_{\mathrm{PC},\mathbb F}(F)\le r+1
$
over every field $\mathbb F$.
\end{lemma}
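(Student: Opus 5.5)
The plan is to argue by contradiction inside the degree-$(r+1)$ polynomial calculus space. Let $V\coloneqq\PC_{r+1,F}$, viewed in $\B$ (multilinearized, as allowed by \eqref{eq:sd-multilinearization}), and suppose $1\notin V$. The goal is to use $V$ to build a Boolean assignment satisfying $F$, contradicting unsatisfiability.

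\emph{Step 1 (degree collapse).} First I will show that every multilinear $f$ with $\deg f\le r+1$ is congruent modulo $V$ to some $g$ with $\deg g\le r-1$. For a monomial $x_T$ with $r\le |T|\le r+1$, the covering hypothesis yields a clause $C$ with $\Vars(C)\subseteq T$ (if $|T|=r+1$, apply the hypothesis to any $r$-subset of $T$). The polynomial $p_C$ has top-degree part $\pm x_{\Vars(C)}$. Hence $x_{T\setminus\Vars(C)}\,p_C\in V$, since it is derived by multiplications in degree $\le r+1$, and it equals $\pm x_T$ plus terms of strictly smaller degree. Induction on degree then reduces every monomial of degree $r$ or $r+1$ to degree $\le r-1$.

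\emph{Step 2 (a commutative algebra of multiplication operators).} Let $Q\coloneqq \B_{\le r-1}/(V\cap\B_{\le r-1})$. This space is nonzero because $1\notin V$. For $q\in\B_{\le r-1}$, define $M_i[q]$ as the class of any degree-$\le r-1$ polynomial congruent to $x_iq$ modulo $V$; such a polynomial exists by Step 1.

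This is well defined. Two such representatives differ by an element of $V$. Moreover, if $q\in V$ then $x_iq\in V$, since $\deg(x_iq)\le r\le r+1$.

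The key check is that the operators commute and satisfy $M_i^2=M_i$. Write $x_jq=g+v$ with $\deg g\le r-1$ and $v\in V$, $\deg v\le r$. Then $x_ix_jq-x_ig=x_iv\in V$, because this multiplication stays in degree $\le r+1$. Therefore $M_iM_j[q]$ is the reduction of the degree-$\le r+1$ polynomial $x_ix_jq$, which is symmetric in $i,j$. Idempotence $M_i^2=M_i$ follows from multilinearity ($x_i^2=x_i$ in $\B$). Iterating the same argument, for every monomial $x_T$ with $|T|\le r+1$ the class of its Step-1 reduction equals $\bigl(\prod_{i\in T}M_i\bigr)[1]$.

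I expect the main obstacle to be this step. One has to verify carefully that every intermediate product stays within degree $r+1$, so that closure of $V$ under multiplication applies. That is exactly why Step 1 must reduce all the way to degree $r-1$ rather than just $r$.

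\emph{Step 3 (extracting an assignment).} The transposes $M_i^{*}$ are commuting idempotents on the nonzero finite-dimensional dual space $Q^{*}$, so they are simultaneously diagonalizable. Hence there is a nonzero functional $\varphi$ and bits $a\in\{0,1\}^n$ with $\varphi\circ M_i=a_i\varphi$ for all $i$. Extend $\varphi$ to $\B_{\le r+1}$ by composing with the reduction of Step 1; this extension vanishes on $V\cap\B_{\le r+1}$.

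By Step 2, $\varphi(x_T)=\prod_{i\in T}a_i\cdot\varphi(1)$. Since $\B_{\le r-1}$ is spanned by monomials and $\varphi\ne0$, this forces $\varphi(1)\neq0$. Consequently $\varphi(f)=f(a)\varphi(1)$ for every $f$ of degree $\le r+1$.

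Each $p_C$ lies in $V$ and has degree $\le r$, so $p_C(a)\varphi(1)=0$, which gives $p_C(a)=0$. Thus $a$ satisfies every clause of $F$, contradicting unsatisfiability. Hence $1\in V$, i.e.\ $\DPC(F)\le r+1$. The argument never used the characteristic of $\F$, so the bound holds over every field.
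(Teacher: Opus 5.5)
Your proposal is correct and follows the paper's proof: the covering hypothesis collapses everything to degree at most $r-1$, multiplication by each $x_i$ gives commuting idempotent operators on (an isomorphic copy of) the normal-form space $\mathcal W$, and a common eigenvector yields a satisfying assignment. The only difference is cosmetic. You take a common eigenvector $\varphi$ of the transposed operators and use cyclicity of $1$ to get $\varphi(1)\neq 0$, whereas the paper takes an eigenvector in $\mathcal W$ itself and uses cyclicity of $1$ to show that each $p_C(X_1,\ldots,X_n)$ vanishes on all of $\mathcal W$.
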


\begin{proof}
Fix a field $\mathbb F$ and let $\mathbb B_{\le r+1}$ denote the vector
space of multilinear polynomials in $\F[x_1, \ldots, x_n]$ of degree at most $r+1$, and let
$\mathcal V\subseteq\mathbb B_{\le r+1}$ be the vector space of polynomials
whose multilinear representatives are derivable from $F$ and the
Boolean axioms in degree at most $r+1$. 
It is not hard to see that
\begin{equation}
f\in\mathcal V,\quad \deg f\le r
\quad\Longrightarrow\quad
\mathrm{ML}(x_i f)\in\mathcal V
\quad\text{for every }i\in[n] \eqperiod
\label{ub:bounded-multiplication}
\end{equation}

Fix the degree-lexicographic order on squarefree monomials. Let $\mathcal W$ be the span of those monomials of degree at most $r+1$ that are not leading monomials of nonzero elements of $\mathcal V$. Gaussian elimination gives
\[
\mathbb B_{\le r+1}=\mathcal V\oplus\mathcal W \eqperiod
\]
Define $R(f)$ to be the unique element of $\mathcal W$ such that $f-R(f)\in\mathcal V$. Thus $R$ is the linear projection onto $\mathcal W$ along $\mathcal V$. Indeed, existence follows by successively eliminating leading monomials using elements of $\mathcal V$. For uniqueness, a nonzero difference between two such representatives would belong to $\mathcal V$, while its leading monomial would belong to the spanning set defining $\mathcal W$---a contradiction. Since the monomial order respects degree, elimination never increases degree.
Moreover, since $1$ is the smallest
monomial in the ordering, we have that if $1\notin\mathcal V$ then $R(1)=1$. 

We claim that under the covering hypothesis, the image $\mathcal W=\operatorname{im}R$ consists of
polynomials of degree at most $r-1$. To see this, let $T$ have size $r$
or $r+1$. In the second case, first choose an $r$-element subset of
$T$. The covering hypothesis supplies a clause $C\in F$ with
$\operatorname{Vars}(C)\subseteq T$. Multiplying $p_C$ by
$x_{T\setminus\operatorname{Vars}(C)}$ gives, in degree at most
$|T|\le r+1$, a polynomial with leading monomial $x_T$ and nonzero
leading coefficient. Thus every monomial of degree $r$ or $r+1$ is
reducible modulo $\mathcal V$, proving the claim.

Now, suppose toward a contradiction that $1\notin\mathcal V$. We show that the claim above then yields a satisfying assignment for $F$, which contradicts unsatisfiability. The satisfying assignment will be built using a common eigenvector of a set of operators $\{X_i\}_{i\in [n]}$ on $\mathcal{W}$ that we now define. (As we note in \cref{rmk:eigenvalue-trick}, this eigenvector argument can be avoided in this particular setting, but we include it since it makes the proof rather neat and could have further applications.)

For every $f\in\mathbb B_{\le r}$, the polynomial $f-R(f)$ belongs to
$\mathcal V$ by definition and has degree at most~$r$. By
\eqref{ub:bounded-multiplication},
\begin{equation}
R(x_i f)=R\bigl(x_i R(f)\bigr)
\qquad(\deg f\le r) \eqcomma
\label{ub:reduction-compatibility}
\end{equation}
where products inside $R$ are multilinearized. Define linear
operators on $\mathcal W$ by
$$
X_i w=R(x_i w)\qquad(w\in\mathcal W) \eqperiod
$$
Because $\deg w\le r-1$, both $x_iw$ and $x_jw$ have degree at most
$r$. Applying \eqref{ub:reduction-compatibility} therefore gives
\begin{equation}
\begin{aligned}
X_iX_jw&=R\bigl(x_iR(x_jw)\bigr)
        =R(x_ix_jw)=R(x_jx_iw)=X_jX_iw \eqcomma\\
X_i^2w&=R(x_i^2w)=R(x_iw)=X_iw \eqperiod
\end{aligned}
\label{ub:commuting-idempotents}
\end{equation}
Thus the $X_i$ are commuting projections on $\mathcal{W}$.

For a polynomial $f$, let $f(X_1, \ldots, X_n)$ be the operator obtained from substituting the variables $x_i$ with $X_i$. A straightforward induction using \eqref{ub:reduction-compatibility} and \eqref{ub:commuting-idempotents} shows that
\begin{equation}
f(X_1,\ldots,X_n)1=R(f)
\qquad\text{for every }f\in\mathbb B_{\le r} \eqperiod
\label{ub:cyclic-vector}
\end{equation}

For each clause $C\in F$, its polynomial has degree at most $r$ and
belongs to $\mathcal V$, so
$$
p_C(X_1,\ldots,X_n)1=R(p_C)=0\eqperiod
$$
Since the operators commute and $1$ generates $\mathcal W$, it follows
that $p_C(X_1,\ldots,X_n)$ is zero on all of $\mathcal W$. Finally, we note that $\mathcal W\ne\{0\}$ because $R(1)=1$. 

To finish the proof, we use the following linear-algebraic fact. 
\begin{fact}
  \label{fact:commuting-projections}
  A finite set of commuting projections on a vector space $V \neq \{0\}$ has a common eigenvector $v$ where the corresponding eigenvalue for each operator is either $0$ or $1$.
\end{fact}
\begin{proof}
For simplicity, we prove the claim for two projections $\{P, Q\}$; the proof idea generalizes readily. Assume that both $P$ and $Q$ have nontrivial image, as otherwise the assertion is trivial. The projection $P$ splits $V$ as $V = \ker{P} \oplus \im{P}$. Note that if $v\in \ker{P}$ then $PQv = QPv = 0$, while 
if $v = Pu \in \im{P}$ then $Qv = QPu = PQu \in \im{P}$. 
Therefore, for any nonzero $v \in \im{P}$, $Qv$ is either nonzero and hence a common eigenvector of $P$ and $Q$ with eigenvalue $1$; or $Qv=0$, in which case $v$ itself is the desired common eigenvector.
\end{proof}

Choose a common eigenvector $v\ne0$ of the $X_i$ as in \cref{fact:commuting-projections} and write $X_iv=a_iv$ with
$a_i\in\{0,1\}$. For every $C\in F$,
$$
0=p_C(X_1,\ldots,X_n)v=p_C(a_1,\ldots,a_n)v \eqperiod
$$
Hence $p_C(a_1,\ldots,a_n)=0$ for every clause $C$, contradicting
unsatisfiability of $F$. We conclude that $1\in\mathcal V$, which establishes the
claimed degree upper bound. %
\end{proof}
\begin{remark}[Provenance of techniques]\label{rmk:eigenvalue-trick}
  The overall ideas for obtaining a refutation given the covering property are very similar to those underpinning the algorithm in~\cite[Section~4]{CEI96Groebner} that finds a polynomial calculus refutation of size $n^{O(\degstd)}$ given a degree upper bound of $\degstd$. 
  
  The eigenvector argument based on commuting projections in the second half of the proof of \cref{ub:covering} seems to be well known in, e.g., computational algebra. Similar arguments appear there in, e.g., \cite{Lazard81Resolution, AS88Elimination, KMH89Solving, YNT92Solutions}; see also~\cite[Sections 3 and 4]{Mourrain99Criterion}. 
  A survey paper by Cox~\cite{Cox21Stickelberger} dedicated to the eigenvector argument contains more discussion and references and relates it to classical results from the late 1800s. %
  
  Once \eqref{ub:cyclic-vector} is established, we can also prove the theorem without the eigenvector argument by instead appealing to the completeness of polynomial calculus: fix a refutation $\refstd$ of $F$ and replace each polynomial $f$ in it by $f(X_1, \ldots, X_n)1$. At the final proof line, we obtain a contradiction from linearity, $R(1)=1$, and \eqref{ub:reduction-compatibility}. %
\end{remark}

\subsection{Support Covering by Resolvents}
\label{ub:resolvents-subsection}

It remains to establish the covering property for a set of low-degree consequences of random $k$-CNF formulas. For a formula $F$, let $\Resol(F)$ be the set of clauses
obtained by resolving all clause pairs in $F$ whose variable sets intersect
in exactly the resolution variable. In more detail, the two clauses have the
form $z\vee A$ and $\neg z\vee B$, where $A$ and $B$ have disjoint
variable sets of size $k-1$, neither containing $z$. Their resolvent
$A\vee B$ then has width exactly $2k-2$, and it is straightforward to carry out the corresponding derivation in polynomial calculus by paying one extra degree. Since we will use this basic fact frequently, we state it formally below.

\begin{fact}
\label{ub:resolution-degree}
For $k\ge2$, every clause polynomial associated with a clause of
$\Resol(F)$ is derivable from $F$ in polynomial calculus degree
at most $2k-1$ over every field.
\end{fact}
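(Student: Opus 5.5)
We need to show that for clauses $z\vee A$ and $\neg z\vee B$ with $\Vars(A)$, $\Vars(B)$ disjoint, each of size $k-1$, and not containing $z$, the polynomial $p_{A\vee B}$ is derivable in degree at most $2k-1$. The plan is to mimic the resolution rule inside polynomial calculus, using that $p_{z\vee A}=(1-z)p_A$ and $p_{\neg z\vee B}=z\,p_B$, where $p_A$ and $p_B$ denote the encodings of the subclauses. Since the variable sets of $A$ and $B$ are disjoint, $p_{A\vee B}=p_Ap_B$ as an honest product, multilinear of degree $2k-2$, and no multilinearization issues arise.

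The derivation has two steps. First, multiply the axiom $(1-z)p_A$ successively by each variable of $B$; this introduces its $k-1$ variables one at a time. Every linear factor of $p_B$ has the form $x_i$ or $1-x_i$. The factor $1-x_i$ times a polynomial $q$ is the linear combination $q-x_iq$ of $q$ and $x_iq$, so each partial product is obtained from the previous one by one multiplication and one linear combination. This yields $(1-z)p_Ap_B$, of degree $1+(k-1)+(k-1)=2k-1$, and every intermediate polynomial has degree at most $2k-1$. Symmetrically, multiplying $z\,p_B$ by the factors of $p_A$ yields $z\,p_Ap_B$, also of degree $2k-1$.

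Second, add the two results to get $(1-z)p_Ap_B+z\,p_Ap_B=p_Ap_B=p_{A\vee B}$, which has degree $2k-2$. No Boolean axioms are needed, since all products are already multilinear by disjointness. The argument uses only the two rules \eqref{eq:pc-lin-combn} and \eqref{eq:pc-mult} with coefficients $\pm1$, so it works over every field.

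The only point needing care is the degree bookkeeping when introducing a factor $1-x_i$. Here $x_iq$ has degree at most $\deg q+1\le 2k-1$, so the bound holds. The statement is essentially routine, and I expect no real obstacle.
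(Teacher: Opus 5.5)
Your proposal is correct and is essentially the derivation the paper has in mind; the paper gives no written proof, only remarking that the resolution step ``is straightforward to carry out \ldots{} by paying one extra degree.'' Multiplying $(1-z)p_A$ by the factors of $p_B$ and $z\,p_B$ by the factors of $p_A$ (peak degree $2k-1$), then adding, is exactly that one-extra-degree simulation, and your degree bookkeeping for the $1-x_i$ factors and your use of disjointness to get $p_Ap_B=p_{A\vee B}$ without multilinearization are both sound.
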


Let us now state the lemma we prove in the rest of this section. 

\begin{lemma}
\label{ub:random-covering}
For the parameters in \eqref{ub:parameters}, let $F$ consist of $M$
independent uniform signed $k$-clauses. For all sufficiently large $n$,
\begin{equation}
\Pr\bigl[\Resol(F)\text{ does not cover the $r$-element variable sets}\bigr]
\le 2^{-n} \eqperiod
\label{ub:covering-probability}
\end{equation}
\end{lemma}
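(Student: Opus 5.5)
The plan is a union bound over all $r$-element sets $T\subseteq[n]$. There are at most $\binom nr\le 2^n$ of them, so for each fixed $T$ I need
\[
\Pr\bigl[\text{no clause of }\Resol(F)\text{ has variables inside }T\bigr]\le 4^{-n}.
\]

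The key observation is that a resolvent $A\vee B$ of $z\vee A$ and $\neg z\vee B$ lies inside $T$ as soon as $\Vars(A)\cup\Vars(B)\subseteq T$. The resolved variable $z$ may lie anywhere. This matters: requiring whole clauses of $F$ inside $T$ would be far too costly. By \eqref{eq:hypergeom} and the choice of $r$ in \eqref{ub:parameters}, such a clause occurs only with probability about $2^{-k}\cdot 32/k$, so the expected number of them is $\Theta(n/k)$. That gives failure probability only $e^{-\Theta(n/k)}$, which cannot beat $2^n$. I will therefore work with the event that a clause has its $k-1$ "non-pivot" variables inside $T$, which is a factor of roughly $k$ more likely. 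Indeed $(2r/n)^{k-1}\approx 32/k$ while $z$ has $k$ choices, so I expect $\Theta(n)$ such clauses, with a constant that can be made large thanks to the $32$ in $r$.

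Concretely, split the $M$ clauses into two independent batches $F_1,F_2$ of $M/2$ clauses each, and fix $T$. Call a clause of $F_1$ \emph{$T$-anchored} with pivot $z$ if its variable set is $\{z\}\cup A$ with $A\subseteq T$ of size $k-1$ and $z\notin A$. The number $N_1$ of $T$-anchored clauses in $F_1$ is binomial. By \eqref{eq:hypergeom} its mean is about
\[
\frac{M}{2}\cdot k\cdot\frac{\binom{r}{k-1}}{\binom nk}\cdot\frac{n-r}{n}\;\ge\; c_0 n,
\]
where $c_0$ is an absolute constant made large by the $32$ in \eqref{ub:parameters}. A Chernoff lower-tail bound then gives $\Pr[N_1<c_0n/2]\le e^{-c_0n/8}\le 4^{-n}/2$.

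Next, condition on $F_1$ with $N_1\ge c_0n/2$. Each clause of $F_2$ independently succeeds if it has the form $\neg z\vee B$ for some anchored clause $z\vee A$ of $F_1$, with $\Vars(B)\subseteq T\setminus\Vars(A)$ and the sign of $z$ opposite. The success probability $q$ is bounded below as follows. Sum over anchored clauses $C$ of the probability of being a partner of $C$, and subtract pairwise overlaps (a Bonferroni step). For a single $C$, this probability is about
\[
\frac{k}{n}\cdot\frac12\cdot\frac{\binom{r-k+1}{k-1}}{\binom{n-1}{k-1}}\approx \frac{k}{2n}\,2^{-(k-1)}\frac{32}{k}.
\]
Hence $q\gtrsim N_1\cdot 32\cdot 2^{-k}/n$, minus overlap terms. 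Then
\[
\Pr[\text{no success}\mid F_1]\le (1-q)^{M/2}\le e^{-qM/2}\le e^{-c_1 c_0 n},
\]
which is at most $4^{-n}/2$ once the constants are large. Every success produces a clause of $\Resol(F)$ inside $T$: the variable sets meet exactly in $z$ because $B\cap A=\varnothing$ and $z\notin T$ or $z\notin\Vars(B)$. By \cref{ub:resolution-degree}, this is the kind of resolvent the lemma needs. Summing the two failure probabilities and taking the union bound over $T$ gives \eqref{ub:covering-probability}.

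The main obstacle is getting exponents with constant strictly larger than $2\ln 2$ per $n$, rather than the $n/k$ scale. I expect the delicate points to be the following three.
\begin{enumerate}
\item Confirming that $(1-\ln(k/32)/k)^{k-1}\ge 32/k\cdot(1-o(1))$, so that $c_0$ genuinely exceeds what is needed.
\item Controlling the overlap terms in $q$ when several anchored clauses share pivots or variables. A partner clause must avoid $\Vars(A)$ for the specific $C$, and dense overlaps could reduce the count. I would handle this by keeping only a subfamily of anchored clauses with distinct pivots, of size $\Omega(n)$ by pigeonhole, since pivots take at most $n$ values.
\item Allowing the pivot $z$ to lie in $T$ while the resolvent still has the right width.
\end{enumerate}
If the constant from the first point turns out to be marginal, I would fall back on Janson's inequality applied to all resolvent pairs directly, using that each clause participates in only $O(1)$ expected resolvents inside $T$.
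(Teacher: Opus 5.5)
\textbf{Gap: distinct pivots.} Your overall architecture matches the paper's. You use two batches, first-batch clauses with $k-1$ variables in $T$ and one pivot, second-batch partners with the opposite pivot literal and remaining variables in $T\setminus\Vars(A)$, and a union bound over $T$. The gap is in how you obtain enough usable first-batch clauses. The relevant quantity is not the total number $N_1$ of anchored clauses but the number of \emph{distinct} pivots. Partner events for anchored clauses with the same pivot and the same sign essentially coincide, each having probability about $p/2$ with $p=\binom r{k-1}/\binom nk$. So $\mu_z$ such clauses at pivot $z$ contribute about $\mu_z p/2-\binom{\mu_z}2p/2$ to your Bonferroni bound, which is negative once $\mu_z\ge4$.

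Your repair does not work. You propose extracting $\Omega(n)$ anchored clauses with distinct pivots ``by pigeonhole, since pivots take at most $n$ values.'' Pigeonhole bounds the size of the largest class from below, not the number of nonempty classes. The event $N_1\ge c_0n/2$ is fully consistent with all anchored clauses sharing $O(1)$ pivots. In that case each second-batch clause succeeds with probability $O(p)$, and since $pM=O(1)$ the bound $(1-q)^{M/2}$ is only a constant. Your Chernoff bound on $N_1$ gives no control over this scenario.

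\textbf{The missing step.} You need a direct estimate, from the first batch, on how many outside variables are \emph{represented}. This is exactly the step the paper's proof hinges on. Fix a set $S$ of $\lceil (n-r)/2\rceil$ variables outside $T$. A single clause has some $z\in S$ as its unique outside variable with probability $|S|p$, since these events are disjoint. Hence
\[
\Pr[\text{no } z\in S \text{ is represented}]\le(1-|S|p)^{M/2}\le e^{-12(n-r)},
\]
using $Mp/2=32+o(1)\ge 24$. A union bound over at most $2^{n-r}$ choices of $S$ then shows that at least half the outside variables are represented, except with probability $2^{n-r}e^{-12(n-r)}$. Keep one clause per represented pivot. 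A partner clause has exactly one variable outside $T$, so partner events for different pivots are disjoint. No Bonferroni correction is needed: each second-batch clause succeeds with probability at least $(n-r)p/8$, giving failure probability at most $e^{-3(n-r)}$.

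\textbf{Smaller issues.}
\begin{itemize}
\item Your displayed mean for $N_1$ carries a spurious factor $k/n$; as written it evaluates to $\Theta(k)$. The correct mean is $\frac M2\binom r{k-1}(n-r)/\binom nk\approx 32(n-r)$, which does give the $\Theta(n)$ you claim.
\item Pivots inside $T$ should simply be discarded. They contribute only $O(n/k)$ clauses, and restricting to $z\notin T$ is what makes the disjointness argument work.
\item The Janson fallback might sidestep the issue, but as stated it is undeveloped. The events are not of the standard monotone-subset form without extra work such as Poissonization.
\end{itemize}
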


\begin{proof}
Split the \(M\) independent clauses into two batches of \(m=M/2\) clauses. Fix an \(r\)-element variable set \(T\) and write \(q\coloneqq n-r\ge n/2\).
For any fixed variable \(z\notin T\), the probability that a sampled width-$k$ clause has \(z\) as its only variable outside \(T\), with either sign, is
\begin{equation}
p=\frac{\binom r{k-1}}{\binom nk}\eqperiod
\end{equation}
The parameters in \eqref{ub:parameters} give
\[
mp=k\left(\frac{2r}{n}\right)^{k-1}(1+o(1))
   =32+o(1)\ge24
\]
for sufficiently large \(n\).
Say that \(z\) is \emph{represented} if the first batch contains such a clause. With very high probability, at least half of the \(q\) outside variables are represented. Indeed, otherwise some set of \(\lceil q/2\rceil\) outside variables is entirely unrepresented. For any fixed such set, a sampled clause represents one of its variables with probability \(\lceil q/2\rceil p\), since these possibilities are disjoint. A union bound therefore gives
\[
\Pr[\text{fewer than }q/2\text{ variables are represented}]
 \le 2^q(1-qp/2)^m
 \le 2^q e^{-12q}.
\]
Now condition on the first batch representing at least \(q/2\) variables. For each represented \(z\), choose one clause and write it as
\[
\ell_z\vee A_z,
\]
where \(\ell_z\) is a literal on \(z\) and all variables of \(A_z\) lie in \(T\).
A fresh clause from the second batch matches this chosen clause if it has the opposite literal on \(z\), its other variables lie in \(T\), and those variables avoid \(A_z\). Its probability of doing so is
\[
\frac p2\,
\frac{\binom{r-k+1}{k-1}}{\binom r{k-1}}
\ge \frac p4.
\]
For the inequality, a uniform \((k-1)\)-subset of \(T\) intersects the fixed support of \(A_z\) with probability at most \((k-1)^2/r=o(1)\).
Matching possibilities for different \(z\) are disjoint, because a matching clause has exactly one variable outside \(T\). Thus each second-batch clause matches some chosen clause with probability at least \(qp/8\). By independence, the probability that none does is at most
\[
(1-qp/8)^m\le e^{-3q}.
\]
Every match produces a resolvent in \(\Resol(F)\) supported entirely inside \(T\).
Combining the two failure probabilities, we obtain
\[
\Pr[\text{no resolvent is supported inside }T]
 \le 2^q e^{-12q}+e^{-3q}
 \le 2e^{-3q}
 \le 2e^{-3n/2} \eqperiod
\]
Finally, a union bound over the at most \(2^n\) choices of \(T\) yields
\[
\Pr[\Resol(F)\text{ does not cover the }r\text{-element sets}]
 \le 2^{n+1}e^{-3n/2}
 \le 2^{-n}
\]
for sufficiently large \(n\).
\end{proof}

\subsection{Proof of the Upper Bound}

We now have all required components to prove \cref{thm:upper-bound}.

\begin{proof}[Proof of \cref{thm:upper-bound}]
A fixed Boolean assignment satisfies a random $k$-clause with
probability $1-2^{-k}$. Hence, as in \cref{sec:random},
\begin{equation}
\Pr[F\text{ is satisfiable}]
\le2^n(1-2^{-k})^M
\le\exp\bigl(-(1-\ln2)n\bigr) \eqperiod
\label{ub:unsatisfiability}
\end{equation}
Intersect the unsatisfiability event with the support-covering event
from \cref{ub:random-covering}, and set
$$
G=F\cup\Resol(F)\eqperiod
$$
Then $G$ is unsatisfiable, its clauses have width at most $2k-2\le r$,
and it covers every $r$-element variable set. By
Lemma~\ref{ub:covering}, $G$ has a polynomial calculus refutation of
degree at most $r+1$ over every field.

By \cref{ub:resolution-degree}, every additional clause polynomial
in $G\setminus F$ is derivable from $F$ in degree at most $2k-1$.
Prepending these derivations to a refutation of $G$ gives a refutation
of $F$ of degree at most
$$
\max\{r+1,2k-1\}=r+1.
$$
The resulting degree bound therefore holds with probability at least
\begin{equation}
1-\exp\bigl(-(1-\ln2)n\bigr)-2^{-n}
=1-o_n(1).
\label{ub:total-probability}
\end{equation}
Finally, the definition of $r$ gives
$$
\begin{aligned}
r+1
&\le\frac n2-\frac{n\ln k}{2k}
       +\frac{n\ln32}{2k}+2\\
&=\frac n2-\frac{n\ln k}{2k}+O\!\left(\frac nk\right),
\end{aligned}
$$
which proves \eqref{ub:asymptotic-bound} and hence Theorem~1.3.
\end{proof}

To conclude this section, we observe that the weaker upper bound~${r \leq n/2-\Omega(n/k)}$ can be obtained by considering the formula $F$ itself instead of its set of resolvents and taking a straightforward union bound. This bound still suffices to rule out using random $k$-CNF formulas for Pang's tradeoff for $\ACzero[p]$-Frege proofs of depth greater than $2$. We provide the stronger result in order to not give false hope if lower-order improvements in the tradeoff parameters appear in the future. It seems plausible that considering the set of higher-width resolution consequences as in, e.g., PPSZ-style SAT algorithms~\cite{PPSZ05Improved} can further improve the bound on~$r$, but due to the nearly matching lower bound in \cref{thm:main} we do not explore this direction here. %

\section{Concluding Remarks}
\label{sec:concluding-remarks}
We prove that, with high probability, polynomial calculus requires degree $n/2 - O(n/\sqrt{\log{n}})$ to refute random $(\log{n})$-CNF formulas, while they can be refuted in degree $n/2 - \Omega(n \log(\log{n})/\log{n})$.
Through Pang's tradeoff, this implies that depth-$2$ $\ACzero[p]$-Frege requires size $\exp(\Omega_p(\log^{1.5}{n}))$ to refute these formulas, but that they cannot yield lower bounds for higher depths without improving the tradeoff parameters.

 The most pressing open problem related to this work is, of course, to prove a polynomial calculus degree lower bound of $n/2-o(n/\log^{C}(n))$ for a $\poly(n)$-size, $\polylog(n)$-width formula and every fixed $C>0$, since this yields superpolynomial lower bounds for $\ACzero[p]$-Frege. However, the upper bound in \cref{thm:upper-bound} seems to necessitate substantially different ideas from here to prove such a bound. One source of such ideas could be Beck and Impagliazzo~\cite{BI13StrongETH} and follow-up work of Bonacina and Talebanfard~\cite{BT16Improving}, which establish much stronger bounds for resolution width---on the order of $(1-\epsilon_k)n$ for $k$-CNF formulas, where $\epsilon_k \to 0$ as $k\to \infty$. Their hard instances are a certain encoding of sparse expanding systems of linear equations over finite fields. It seems, however, that straightforward adaptations of their techniques together with the machinery developed for binomial ideals (which cover linear equations) in \cite{BGIP01LinearGaps,BI10Random} lose too much in parameters to attain the needed degree lower bound. 

A related question, raised also in \cite{Pang26SizeDegree}, is whether the parameters in Pang's result are tight. The arguments in \cref{subsec:sd-common-multiplier} establishing the key kernel-growth estimate underpinning the tradeoff seem difficult to generically improve in any quantitatively significant manner. As discussed in \cref{rmk:sd-refined-errors}, a formula-specific approach to improve parameters is instead to find a formula $F$ for which quotients of the form $h_\dthresh(k+\Delta)/h_\dthresh(k)$ grow slowly enough, where $h_\dthresh(k)$ is the dimension of the space of degree-at-most-$k$ polynomials modulo the degree-$\dthresh$ consequences of $F$. Despite some efforts, we have not managed to obtain any useful results this way, but it still seems worth considering.

\section*{Acknowledgements}
I am grateful to Yassine Ghannane and Shuo Pang for helpful discussions about this and related works. In particular, I thank Shuo for generously sharing an early version of~\cite{Pang26SizeDegree} and clarifying some points in \cite[Section~8]{Pang26SizeDegree}. Thanks also to Lund University for giving their employees access to ChatGPT-$6$ Pro upon request. Part of this work was carried out during a visit to KTH Royal Institute of Technology; their hospitality is gratefully acknowledged. This work was partially supported by the \emph{Wallenberg AI, Autonomous Systems and Software Program (WASP)}.

\appendix

\section{From \texorpdfstring{$\ACzero[p]$}{AC0[p]}-Frege Size to Polynomial Calculus Degree}
\label{sec:sd}

In this section, we extend the size-degree tradeoff in \cite[Section~8]{Pang26SizeDegree} to an arbitrary set of inference rules. This can be inferred from the proof in \cite{Pang26SizeDegree}, but we provide a full proof here for completeness and convenience for the reader. Along the way, we provide more direct proofs of some technical arguments in \cite[Section~3]{Pang26SizeDegree}. The overall proof structure is, however, exactly the same as Pang's and we make no claim of originality regarding this. 

The two concrete size-degree relations we obtain are the following. 

\begin{theorem}[Fixed-depth tradeoff]
\label{thm:sd-fixed-depth}
Fix a prime~$p$ and an integer $\ell\ge2$. Let $F_n$ be a family of
unsatisfiable CNF formulas on $n$ variables of width $w=w(n)$.
Suppose that, for some $g=g(n)\ge0$,
\begin{equation}
 \sdPC(F_n)>\frac n2-g,
 \qquad
 g+w=o\left(\frac{n}{(\log n)^{\ell-2}}\right).
 \label{eq:sd-fixed-depth-hypothesis}
\end{equation}
Then every depth-$\ell$ $\sdAC$-Frege refutation of~$F_n$ requires size
$n^{\omega_n(1)}$. %
\end{theorem}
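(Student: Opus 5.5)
We follow Pang's common-multiplier strategy and argue by contradiction. Suppose $F_n$ has a depth-$\ell$ $\sdAC$-Frege refutation $\pi$ of size $S=n^{O(1)}$. The plan has three stages: approximate every proof line by a low-degree polynomial over $\sdFp$; find a single common multiplier $f$ of modest degree that is not derivable from $F_n$ in degree $\dthresh$; and multiply the approximated refutation by $f$ to obtain a degree-$\dthresh$ derivation of $f$.

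\textbf{Stage 1: approximation.} To each line $L$ of $\pi$ we assign a Razborov--Smolensky approximation $\App_L$, built bottom-up through the gates.
\begin{itemize}
\item $\MOD_{p,a}$ gates are replaced exactly by $1-(\sum_i y_i-a)^{p-1}$.
\item $\wedge/\vee$ gates are replaced by random $\sdFp$-linear combinations raised to the power $p-1$, using $t$ independent combinations per gate.
\end{itemize}
With $t=O(\log S)$, each depth-$\ell$ line has degree $d_\ell=O_p(t)^{\ell-1}\cdot w$, or more precisely $O(\log S)^{\ell-2}$ times a quantity linear in the bottom fan-in. The bottom layer consists of literals or modular gates of literals, which keeps the count at exponent $\ell-2$ rather than $\ell$. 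Each rule instance in $\rules$ is a fixed finite scheme, so it becomes an approximate inference $\App_{L_1},\dots,\App_{L_s}\widetilde\vdash \App_L$. Its error polynomial $E$ has degree $O(d_\ell)$ and is supported where some gate approximation inside the instance errs. Input clauses are approximated by their own clause polynomials, so they contribute no error.

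\textbf{Stage 2: common multiplier.} This is the main obstacle. We need a polynomial $f$ with the following properties:
\begin{itemize}
\item $fE\in\PC_{\dthresh}$ for every error polynomial $E$;
\item $f\notin\PC_{\dthresh}$;
\item $\deg f+O(d_\ell)\le\dthresh$, with $\dthresh=n/2-g$.
\end{itemize}
We will follow Pang's kernel-growth estimate. Consider the space $\B_{\le k}$ modulo $\PC_{\dthresh}(k)$, and let $h_\dthresh(k)$ denote its dimension. Each error $E$ vanishes on all but a $p^{-t}$-fraction of points in an appropriate averaged sense. Using the probabilistic choice of approximations, we then show that the subspace of multipliers $u$ of degree $\le k$ with $uE\in\PC_{\dthresh}$ has codimension at most about $h_\dthresh(k+\deg E)\,p^{-t}$ per error. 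Summing over the $S$ errors, a nonzero surviving multiplier exists once
\[
S\,p^{-t}\,h_\dthresh(k+\Delta)<h_\dthresh(k),\qquad \Delta=O(d_\ell).
\]
The degree hypothesis $\sdPC(F_n)>n/2-g$ makes $1\notin\PC_\dthresh$. We must also show the ratio $h_\dthresh(k+\Delta)/h_\dthresh(k)$ is at most $n^{O(\Delta)}$, and in fact small enough near $k\approx n/2-g-\Delta$. I expect to bound this ratio by comparing to binomial coefficients $\binom{n}{\le k}$: near $n/2$ the ratio $\binom{n}{\le k+\Delta}/\binom{n}{\le k}$ is $\exp(O(\Delta(n/2-k)/n))$. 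That ratio is $n^{O(1)}$ precisely when $(g+\Delta)\Delta=O(n\log n)$, which is where the hypothesis $g+w=o(n/(\log n)^{\ell-2})$ enters, since $\Delta\approx w\log^{\ell-2}S$. Choosing $t=C\log S$ with $C$ large then beats $S$ together with the ratio. The surviving multiplier $f$ has degree $k$ and is nonzero modulo $\PC_\dthresh$.

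\textbf{Stage 3: contradiction.} We induct along $\pi$ to show $f\cdot(1-\App_L)\in\PC_\dthresh$ for every line $L$, after multilinearizing via \eqref{eq:sd-multilinearization} and using \eqref{eq:sd-frobenius-identity} to handle $(p-1)$-th powers.
\begin{itemize}
\item For axioms, this is direct.
\item For inferences, the soundness identity expresses $1-\App_L$ as a combination of the premises' $1-\App_{L_i}$ and $E$ in degree $O(d_\ell)$. Absorption \eqref{eq:sd-absorption} closes the induction, since $k+O(d_\ell)\le\dthresh$.
\end{itemize}
The final line $\bot$ has $\App_\bot=0$, giving $f\in\PC_\dthresh$, a contradiction. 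Hence $S=n^{\omega(1)}$. The delicate points are the degree bookkeeping in Stage 1, so that $\Delta$ carries exactly $(\log n)^{\ell-2}$, and the ratio estimate in Stage 2; these are where the exponent $\ell-2$ must be matched.
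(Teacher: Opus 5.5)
\textbf{There is a genuine gap: your argument can only reach exponent $\ell-1$, not $\ell-2$.} It starts in your Stage~1 degree bookkeeping. The bottom AND/OR gates of a depth-$\ell$ line have unbounded fan-in, because $w$ bounds only the input clauses. So they must be RS-approximated as well. With literals of degree $1$ and $t_i\ge\log_p S$ samples at each of the $\ell$ layers (which your estimate forces, since each factor $p^{-t_i}$ must beat $S$), a depth-$i$ approximator has degree $\Delta_i=(p-1)t_i\Delta_{i-1}=\Theta(\log^i n)$ when $S=n^{\Theta(1)}$. It is not $w\log^{\ell-2}S$. Even granting your $\Delta$, the condition $(g+\Delta)\Delta=O(n\log n)$ does not follow from $g+w=o(n/\log^{\ell-2}n)$; take $g=w=n/\log^{\ell-1}n$.

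With the correct degrees, your Stage~2 estimate breaks. You bound the codimension by $p^{-t}h(k+\Delta)$, where $\Delta=(p-1)td$ is the \emph{output} degree of a gate with inputs of degree $d$. The best available bound on the ratio is $h(k+\Delta)/h(k)\le N_n(k+\Delta)/N_n(k)$, which is of order $\exp(\Theta(\Delta s/n))$ with $s=n/2-k\ge g$. This gives a per-gate error of $\exp\bigl(-t[\ln p-\Theta((p-1)ds/n)]\bigr)$. The penalty is linear in $t$, so more samples help only if $ds/n$ is below a small constant. At the top layer $d=\Delta_{\ell-1}=\Theta(\log^{\ell-1}n)$, so you need $g=O(n/\log^{\ell-1}n)$. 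For $g=n/(\log n)^{\ell-3/2}$, which the hypothesis allows, your bound exceeds $1$ for every $t$.

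\textbf{The missing idea is the form of the kernel-growth estimate in \cref{lem:sd-kernel-growth}.} It reads $\sdE[\rankshort_{\sdpolyp_t,K}]\le\rho_K(d)^t h(K)$, with $\rho_K(d)=1-(1-1/p)N_n(K-d)/N_n(K)$. Each new factor $1-q^{p-1}$ kills $qV_{K-d}$ modulo $\PC_\dthresh$, because $q^p-q$ is derivable from the Boolean axioms. The pencil/averaging argument then shows this removes, in expectation, at least $(1-1/p)\rankshort_{\sdpolyp,K-d}\ge(1-1/p)\tfrac{N_n(K-d)}{N_n(K)}\rankshort_{\sdpolyp,K}$ of the current rank. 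The comparison is with the \emph{smaller} degree $K-d$ and depends only on the input degree $d$, not on $t$. So the loss is paid in the number of samples rather than in the error.

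Choosing $t_i=\lceil\lambda/\eta_K(\Delta_{i-1})\rceil$ with $s=g+w+\sqrt n$ then works as follows:
\begin{itemize}
\item For layers $i\le\ell-1$, $s\Delta_{i-1}/n=o(1)$, so $t_i=O(\log n)$ and $\Delta_i=O(\log^i n)$.
\item At the top, $s\Delta_{\ell-1}/n=o(\log n)$ gives $\eta_K(\Delta_{\ell-1})\ge n^{-o(1)}$. Hence $t_\ell$ and $\Delta_\ell$ are $n^{o(1)}$, which fits inside $\dthresh-K=w+\sqrt n$.
\end{itemize}
That is exactly where $\ell-2$ comes from. Your $p^{-t}$-times-ratio estimate cannot exploit it.

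\textbf{Two further issues.}
\begin{itemize}
\item The kernel estimate applies only to the structured gate-error families $e_{A,j}=g_{A,j}R_A$, where the $q_\nu$ are random combinations of those same $g_{A,j}$. It does not apply to arbitrary inference errors $E$. You need \cref{lem:sd-inference-errors} to derive each $E$ from the $e_{A,j}$ in degree $B\Delta$. This includes zero-premise axiom errors $1-\App_L$, so your claim that the axiom case is ``direct'' is not justified.
\item Approximating input clauses exactly by $p_C$ conflicts with consistent approximation of shared syntactic subformulas. The paper instead RS-approximates them and derives $1-\App_C$ from $p_C$ and the $e_{C,i}$ in degree $w+\Delta$.
\end{itemize}
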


We can obtain slightly better parameters at depth $2$. 
\begin{theorem}[Depth-$2$ tradeoff]
\label{thm:sd-depth-two}
Fix a prime~$p$ and a constant $a>0$. Let
$\omega_n(1)\le k\le O(\log n)$, and let $F$ be an unsatisfiable
$n$-variable CNF formula of width at most~$k$. If
\begin{equation}
 \sdPC(F)>\frac n2-a\frac n{\sqrt{k}}\eqcomma
 \label{eq:sd-depth-two-hypothesis}
\end{equation}
then every depth-two $\sdAC$-Frege refutation of~$F$ requires size
\begin{equation}
 S\ge \exp\bigl(\Omega_{p,a}(\sqrt{k}\log n)\bigr)\eqperiod
 \label{eq:sd-depth-two-bound}
\end{equation}
\end{theorem}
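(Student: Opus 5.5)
The plan is to follow Pang's common-multiplier strategy~\cite[Section~8]{Pang26SizeDegree}, specialised to depth two. Fix a depth-two $\sdAC$-Frege refutation $\pi$ of $F$ of size $S$.

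\textbf{Step 1: approximate each line.} Replace every line $L$ by (the negation of) a Razborov--Smolensky approximation $\App_L$ over $\sdFp$. Depth two lets us build these approximations in two layers:
\begin{itemize}[label=$\cdot$]
  \item a bottom $\sdMOD$ gate is an exact polynomial $q^{p-1}$ of degree at most $(p-1)$ times the degree of its input linear form;
  \item an AND/OR gate of fan-in at most $S$ is approximated by $t$ random $\sdFp$-combinations raised to the $(p-1)$st power.
\end{itemize}
Choosing $t=\Theta(\log S)$ makes every line, and every rule instance in the fixed finite set $\rules$, simultaneously exact outside an error set of measure at most $1/\poly(S)$. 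With this choice every $\App_L$ has degree $d=O_p(\log S)$, while the input clauses of width $k$ stay exact.

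\textbf{Step 2: error polynomials and the contradiction.} Each inference $L_1,\dots,L_s\vdash L$ becomes an approximate inference with an error polynomial $E$ of degree $O(d)$. The key step is a kernel-growth estimate. Let $\dthresh=\sdPC(F)-1>n/2-an/\sqrt{k}$ and work in the quotient of polynomials of degree at most $K$ modulo $\PC_{\dthresh}$. I would show, by a dimension count on the degree-$\dthresh$ consequences, that there is a polynomial $f$ with the following properties:
\begin{itemize}[label=$\cdot$]
  \item $\deg f\le \dthresh-O(d)$;
  \item $f\notin\PC_{\dthresh}$;
  \item $fE\in\PC_{\dthresh}$ for every one of the at most $S$ error polynomials.
\end{itemize}
Each condition $fE\in\PC_{\dthresh}$ is a linear constraint on $f$. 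The number of candidate multipliers near degree $n/2$ is of order $\binom{n}{n/2-\Delta}$, and imposing $S$ constraints each of codimension about $h_{\dthresh}(\deg f+O(d))/h_{\dthresh}(\deg f)$ still leaves room for such an $f$ provided $S\cdot(\text{growth ratio})^{O(d)}$ stays below the available dimension.

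Given $f$, multiply every line of the approximated proof by $f$. Using~\eqref{eq:sd-absorption}, \eqref{eq:sd-multilinearization} and the Frobenius identity~\eqref{eq:sd-frobenius-identity} to handle the $(p-1)$st powers, this yields a degree-$\dthresh$ derivation of $f$ from $F$, contradicting $f\notin\PC_{\dthresh}$.

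\textbf{Step 3: parameter check (main obstacle).} Binomial ratios near the middle layer satisfy
\[
\binom{n}{n/2-\Delta+d}\Big/\binom{n}{n/2-\Delta}\approx e^{2d\Delta/n},
\]
so with $\Delta=an/\sqrt{k}$ each additional degree costs a factor $e^{O(a/\sqrt{k})}$. The argument closes once $S\,e^{O(d\,a/\sqrt{k})}$ is dominated by the available dimension. That dimension is bounded using the structure of $\PC_{\dthresh}$ for $n$-variable $F$: inside the window of width $O(\Delta)$ around the middle layer, the multipliers gained per unit degree shrink only like $e^{-O(\Delta k/n)}$, so roughly $\sqrt{k}$ degrees of slack per unit $\log n$ are available.

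Solving the resulting inequality with $d=O_p(\log S)$ gives $\log S=\Omega_{p,a}(\sqrt{k}\log n)$, which is~\eqref{eq:sd-depth-two-bound}. I expect proving this kernel-growth/dimension estimate cleanly, uniformly over the unknown degree-$\dthresh$ consequences, to be the hard part. I would first try the generic argument: pick the degree level where the quotient dimension $h_{\dthresh}$ grows slowest, found by pigeonhole over the $\Theta(\Delta)$ levels between $\dthresh-\Delta$ and $\dthresh$.
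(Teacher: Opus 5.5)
Your high-level plan is Pang's common-multiplier strategy, which is also what the paper uses (it derives the statement from \cref{thm:sd-tradeoff}). However, the step that produces $f$ is missing, and the mechanism you sketch in its place does not work.

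The set $\{f\in V_K: fE\in\PC_{\dthresh}\}$ has codimension equal to the rank of $f\mapsto[fE]$ on $V_K/\PC_{\dthresh}(K)$. For a general low-degree $E$ this rank can be all of $h(K)$ (e.g.\ $E=1$). The smallness that matters is this rank, not the measure of the error set on the Boolean cube as in your Step~1, and ratios of $h$ across degrees say nothing about it for an individual $E$. So ``$S$ constraints of small codimension'' is unsupported.

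The paper obtains $f$ from two ingredients absent from your proposal.
\begin{itemize}
\item \cref{lem:sd-inference-errors}: every inference error is derivable in degree $B\Delta$ from the gate errors $e_{A,j}=g_{A,j}R_A$ with $R_A=\prod_{\nu\le t}(1-q_{A,\nu}^{p-1})$. It therefore suffices to handle these structured families.
\item The kernel-growth \cref{lem:sd-kernel-growth}, which uses the \emph{randomness} of the $q_{A,\nu}$. Each factor $1-q^{p-1}$ puts $qV_{K-d}$ into the kernel because $q^p-q$ is derivable from the Boolean axioms; this, not the final multiplication by $f$, is where the Frobenius identity matters. The pencil argument shows that a random $q$ gives $u\mapsto qu$ expected rank at least $(1-\tfrac1p)\mathrm{rk}_{f,K-d}$ modulo the current kernel. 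Monotonicity of $\mathrm{rk}_{f,k}/N_n(k)$ (\cref{lem:sd-hilbert-growth}) then turns this into $\mathbb{E}[\mathrm{rk}]\le\rho_K(d)^t h(K)$.
\end{itemize}
Only then does subadditivity of rank over the at most $S$ gates yield $f$, for some fixing of the coins, once $S\max_i\rho_K(\Delta_{i-1})^{t_i}<1$. No pigeonhole over degree levels of $h$ is needed.

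This mechanism also dictates the parameters, and your choice $t=\Theta(\log S)$ with $d=O_p(\log S)$ is not right. Depth two permits two AND/OR layers, so approximator degrees are $\Delta_2=(p-1)^2t_1t_2$. Moreover, the contraction per factor at a gate with input degree $d$ is only $\rho_K(d)=1-(1-\tfrac1p)N_n(K-d)/N_n(K)$. With $n/2-K=\Theta(n/\sqrt{k})$:
\begin{itemize}
\item At the bottom layer ($d=1$), $\rho_K(1)$ is bounded away from $1$, giving $\Delta_1=O_p(\lambda)$ for $\lambda=\ln(2S)$.
\item At the top layer it is $1-(1-\tfrac1p)e^{-\Theta(\lambda/\sqrt{k})}$. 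With this contraction estimate one needs $t_2\approx\lambda e^{C\lambda/\sqrt{k}}$, hence $\Delta_2\approx\lambda^2e^{C\lambda/\sqrt{k}}$.
\end{itemize}
A uniform $t=\Theta(\lambda)$ does not make $\rho_K(\Delta_1)^{t}\le 1/(2S)$ in the regime $\lambda\gg\sqrt{k}$.

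The bound $\ln S=\Omega_{p,a}(\sqrt{k}\log n)$ then comes from the degree budget \eqref{eq:sd-budget}. With $\dthresh=n/2-an/\sqrt{k}$ and $K=n/2-(a+2)n/\sqrt{k}$, one needs $\max\{w+\Delta_2,B\Delta_2\}\le\dthresh-K=2n/\sqrt{k}$. This holds when $\lambda\le c\sqrt{k}\ln n$ for a small $c=c(p,a)$. Your Step~3 reaches an inequality of the right shape, but through a comparison of $S\,e^{O(da/\sqrt{k})}$ with an unspecified ``available dimension'' that does not correspond to a valid argument.
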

Theorems~\ref{thm:sd-fixed-depth} and \ref{thm:sd-depth-two} are straightforward corollaries of a more technical statement, which we record in \cref{thm:sd-tradeoff} after introducing some notation. Recall that
$\sdBool=\{x_j^2-x_j:j\in[n]\}$ and $\polyspace{ \dthresh} = \F_p[x_1,\ldots, x_n]_{\leq \dthresh}.$ For a CNF formula~$F$ and a fixed cutoff~$D$, write
$
 Q(F)=\{p_C:C\in F\}\cup\sdBool$.  
 Recall that $\PC_{\dthresh}$ denotes the space
of polynomials derivable from $Q(F)$ in polynomial calculus degree at
most~$D$. Let
\begin{equation}
 h(k)=\dim \polyspace{k}
 - \dim{\PC_{\dthresh}\cap \polyspace{k}}
 \qquad(0\le k\le D)\eqperiod
 \label{eq:sd-hk}
\end{equation}
The same cutoff~$D$ is used at every argument of~$h$. For integers $0\le j\le n$, let
\begin{equation}
 N_n(j)=\sum_{u=0}^{j}\binom nu,\quad
 \eta_k(d)=\left(1-\frac1p\right)\frac{N_n(k-d)}{N_n(k)}
 \quad(0\le d\le k \le n), \quad \rho_k(d) \coloneqq (1-\eta_k(d))\eqperiod
 \label{eq:sd-eta}
\end{equation}
At this point, these quantities are rather mysterious. Briefly, $h(k)$ and $N_n(j)$ measure dimensions of polynomial vector spaces involved in the proof, and $\rho$ is an ``error-reduction factor'' obtained by raising the degree of the Razborov--Smolensky polynomials in \cref{def:razborov-smolensky-approximation} that we use to approximate the lines in an $\ACzero[p]$-Frege proof. Their roles will become clear later on; see \eqref{eq:growth-bound}. 
\begin{theorem}[Size--degree relation]
\label{thm:sd-tradeoff}
Fix a prime $p$ and a set $\rules$ of Frege rules defining $\ACzero[p]$\nobreakdash-Frege. There is a constant $B= B_{\rules, p}$ with the following
property. Let $F$ be an unsatisfiable width-$w$ CNF formula on $n$ variables, and let $D\ge0$ be an integer such that $\sdPC(F)>D$.
Let $\ell\ge1$ and let $t_1,\ldots,t_\ell$ be positive integers. Define
\begin{equation}
 \Delta_0=1\eqcomma\qquad
 \Delta_i=(p-1)t_i\Delta_{i-1}\quad(1\le i\le\ell)\eqperiod
 \label{eq:sd-degrees}
\end{equation}
Let~$K$ be an integer satisfying
\begin{equation}
 \Delta_{\ell-1}\le K\le n\eqcomma\qquad
 K+\max\{w+\Delta_\ell,\,B\Delta_\ell\}\le D\eqperiod
 \label{eq:sd-budget}
\end{equation}
For $1\le i\le\ell$, let
\begin{equation}
\epsilon_i=
 \rho_K(\Delta_{i-1})^{t_i} \eqperiod
 \label{eq:sd-refined-errors}
\end{equation}
Every depth-$\ell$ $\sdAC$-Frege refutation of~$F$ satisfies
\begin{equation}
 \sum_{i=1}^{\ell}M_i\varepsilon_i\ge1 \eqcomma
 \label{eq:sd-hilbert-rank-sum}
\end{equation}
where $M_i$ is the number of distinct
subformulas of depth~$i$ whose output gate is a conjunction or disjunction. In particular, every such refutation of
size~$S$ satisfies
\begin{equation}
 \ln S\ge
 \min_{1\le i\le\ell}
 -t_i\ln(\rho_K(\Delta_{i-1}))%
 \eqperiod
 \label{eq:sd-size-bound}
\end{equation}
\end{theorem}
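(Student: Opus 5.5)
Following the common-multiplier strategy of Braun and Pang, I will approximate each line by a Razborov--Smolensky polynomial, bound the total inference error, and then find a low-degree multiplier that converts an approximate refutation into a degree-$D$ derivation of something not derivable. The contradiction will come from $\sdPC(F)>D$.

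\textbf{Approximations.} Fix a depth-$\ell$ refutation. For each subformula $L$, build recursively a polynomial $\App_L$ over $\sdFp$:
\begin{itemize}
\item literals are themselves;
\item negation gives $1-\App$;
\item a $\sdMOD_{p,a}$ gate gives $1-(\sum\App_{L_j}-a)^{p-1}$;
\item an $\OR$ gate at depth $i$ gives $1-\prod_{r=1}^{t_i}\bigl(1-(\sum_j c_{r,j}\App_{L_j})^{p-1}\bigr)$ with random $c_{r,j}\in\sdFp$. Here $\AND$ is handled dually.
\end{itemize}
Maximal blocks count as one gate, and modular gates use degree $p-1$. Hence depth-$i$ approximations have degree at most $\Delta_i$, matching \eqref{eq:sd-degrees}; I expect $B$ to absorb the constant-depth overhead of rule schemes.

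\textbf{Measuring error.} Each rule instance $L_1,\ldots,L_s\vdash L$ in the fixed finite set $\rules$ gets an error polynomial $E=\prod \App_{L_j}\cdot(1-\App_L)$, multilinearized, of degree $O(\Delta_\ell)$. The errors are measured not pointwise on the cube but in the quotient $V_K/(\PC_D\cap V_K)$. The count I aim for is the following key estimate. For a single $\OR$ gate with $t_i$ independent repetitions, consider the subspace of degree-$\le K$ multipliers $f$ with $f\cdot(\text{gate error})\in\PC_D$. Its codimension, normalized by $h(K)$, should shrink by the factor $\rho_K(\Delta_{i-1})$ per repetition, giving $\epsilon_i$ in expectation. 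The mechanism is as follows: for any fixed nonzero class, a random linear combination of the inputs is nonzero with probability $1-1/p$. The ratio $N_n(K-d)/N_n(K)$ lower-bounds the fraction of multipliers whose product with a degree-$d$ polynomial survives.

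\textbf{Combining.} Averaging over the randomness gives $\sum_i M_i\epsilon_i<1$. Taking expectation over the $c$'s, there is a choice where the good multiplier spaces for all $\OR$/$\AND$ subformulas intersect in a nonzero-codimension complement. I then pick $f$ of degree $\le K$ such that:
\begin{itemize}
\item $f\cdot E\in\PC_D$ for every inference;
\item $f\notin\PC_D$, which I expect to obtain from $h(K)\ge1$ together with a dimension count.
\end{itemize}
I then replay the proof multiplied by $f$: input clauses give $f\,p_C$ of degree $\le K+w\le D$, and each inference step is obtained from previous ones modulo $fE$. All degrees stay within $K+B\Delta_\ell\le D$ by \eqref{eq:sd-budget}, so the final line $\bot$ yields $f\in\PC_D$, a contradiction. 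This establishes \eqref{eq:sd-hilbert-rank-sum}, and \eqref{eq:sd-size-bound} follows since $M_i\le S$.

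The main obstacle is the kernel-growth estimate: showing that a single random repetition shrinks the bad multiplier subspace by $\rho_K(\Delta_{i-1})$ relative to $h(K)$. To handle it, I would first try a leading-monomial argument in the degree-lex order on the quotient by $\PC_D$, using multilinearization \eqref{eq:sd-multilinearization} and the Frobenius identity \eqref{eq:sd-frobenius-identity} to control $(\cdot)^{p-1}$ terms modulo Boolean axioms.
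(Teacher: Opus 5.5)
Your outline follows the paper's architecture: RS approximators, a per-gate rank contraction by $\rho_K(\Delta_{i-1})$ per repetition, expected total rank below $h(K)$, and a replay multiplied by $f$. However, the step connecting gate errors to inference errors is missing. The contraction is a statement about the family $e_{A,j}=g_{A,j}R_A$ of a single gate $A$. There $R_A=\prod_\nu(1-q_{A,\nu}^{p-1})$ is built from random combinations of the very $g_{A,j}$ it multiplies, which is why the count runs over the $M_i$ gates. Your replay instead needs $fE\in\PC_D$ for each inference error $E=\bigl(\prod_j\App_{L_j}\bigr)(1-\App_L)$. That polynomial has no such structure, so kernel growth says nothing about it directly.

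The paper closes this with \cref{lem:sd-inference-errors}: each $E$ is derivable from the $e_{A,j}$ and $\sdBool$ in degree $B\Delta$, and this lemma is where $B$ comes from. Its proof needs several ingredients:
\begin{itemize}
\item the certificate $1-R_A=\sum_j r_{A,j}g_{A,j}$ with $\deg r_{A,j}\le\Delta$;
\item a structural induction through the rule scheme deriving $\App_{\sigma(G)}-G(a)$, which is nontrivial because maximal blocks are flattened, so $\App_{A\vee B}\neq\App_A+\App_B-\App_A\App_B$ in general;
\item soundness of the rule, to derive $E^*$ from $\sdBool$, followed by a telescoping identity;
\item a separate argument for the modular recursion axioms.
\end{itemize}
The same issue affects input clauses. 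The replayed line is $f(1-\App_C)$, not $fp_C$, and these differ once the width reaches $t_1$. You need $1-\App_C=(1-\App_C)p_C+\sum_i\bigl(\prod_{j<i}(1-\ell_j)\bigr)e_{C,i}$. Also, since the inputs to a depth-$i$ gate are themselves random, the per-gate bound must be applied conditionally on the lower-depth choices.

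Second, the mechanism you suggest for the kernel-growth estimate, which you rightly flag as the crux, does not suffice. The fact that a random combination of fixed classes, not all zero, is nonzero with probability $1-1/p$ bounds the expected \emph{size} of the kernel of $u\mapsto([\phi g_jqu])_j$, not its dimension. Here $\phi$ is the partial product $\prod_{\nu<j}(1-q_\nu^{p-1})$. That size bound yields no rank bound proportional to $\dim(U/Z)$. In fact, the averaging inequality you need, $\expectation_\alpha\operatorname{rank}\bigl(\sum_i\alpha_iT_i\bigr)\ge(1-\frac1p)\dim\bigl(U/\bigcap_i\ker T_i\bigr)$, is false for general linear maps. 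For the two coordinate functionals on $\sdFp^2$ the left side is $1-p^{-2}$ and the right side is $2(1-\frac1p)$, which is larger.

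The paper proves the inequality only for multiplication maps modulo $\PC_D$, and the rest of the estimate also uses that structure:
\begin{itemize}
\item along a pencil $q+cr$, the kernels $L_c/J$ for $c\in\sdFp\cup\{\infty\}$ form a direct sum, shown via $\Psi_a(X,Y)=Y(1-(X+aY)^{p-1})$ and $r^p-r\in\PC_D$, followed by induction on the number of inputs;
\item the common kernel of $u\mapsto([\phi g_jg_iu])_j$ equals $\ker\mult_{\phi,K-d}$, via $g_i^p-g_i$;
\item the enlargement $\ker\mult_{\phi,K}+qV_{K-d}\subseteq\ker\mult_{\phi(1-q^{p-1}),K}$ holds via $q-q^p$, and the added rank is identified with the rank above;
\item $\rankshort_{\phi,k}/N_n(k)$ is monotone by \cref{lem:sd-hilbert-growth}, and this is where $N_n(K-d)/N_n(K)$ actually enters.
\end{itemize}
A leading-monomial argument in degree-lex order does not evidently give any of these.
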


We derive Theorems~\ref{thm:sd-fixed-depth} and \ref{thm:sd-depth-two} from \cref{thm:sd-tradeoff} in
Section~\ref{subsec:sd-parameters}. Until then, the goal is to prove \cref{thm:sd-tradeoff}. A key role in the proof will be played by the classical notion of \emph{Razborov--Smolensky (RS) approximation} of Boolean circuits. 
\begin{definition}[RS approximation~\cite{Razborov87LowerBounds,Smolensky87Algebraic}]\label{def:razborov-smolensky-approximation}
Let $p$ be a prime and $C$ a depth-$\ell$ formula with $\AND, \OR, \NOT$, and $\MOD_{p, c}$ gates, where $c$ can be any element of $\F_p$. For each $i\in [\ell]$, fix a positive integer $t_i$. The \emph{RS approximation $\App_{C}$} of $C$ is the following probabilistic polynomial defined inductively, bottom-up.\footnote{We have to be a bit careful and retain the syntactic formula tree with binary $\AND$ and $\OR$ connectives in the definition, since we later want to consider the set of all RS approximators of all syntactic subformulas in an $\ACzero[p]$-Frege proof. However, we deal with maximal consecutive blocks of the same connective simultaneously as expected.} 
\begin{itemize}
\item Constants, input gates, and negation are exact:
\begin{equation}
 \App_\bot=0,\qquad \App_\top=1,\qquad \App_{x_j}=x_j,
 \qquad \App_{\neg A}=1-\App_A\eqperiod
 \label{eq:sd-exact-basic}
\end{equation}
\item Modular connectives are also exact: for $A = \MOD_{p, b}(A_1, \ldots, A_m)$, define
\begin{equation}
 \App_{A}
 =1-\Big(\sum_{j=1}^m \App_{A_j}-b\Big)^{p-1} \eqperiod
 \label{eq:sd-mod-approx}
\end{equation}
\item For a maximal block of $\OR$ connectives~$A=\bigvee_{j=1}^mA_j$ at depth $i$, if $m< t_i$ then define the exact evaluation $\App_{A} = 1-\prod_j(1-\App_{A_j})$ in terms of the $\App_{A_j}$'s. Otherwise, for $\nu \in [t_i]$ define $q_{A, \nu} = \sum_{j} \alpha_{j, \nu} \App_{A_j}$ for uniformly random $\alpha_{j, \nu} \in \F_p$ and
\begin{equation}
  \App_A = 1 - \prod_{\nu=1}^{t_i}(1-q_{A,\nu}^{p-1}) \eqperiod
 \label{eq:sd-gate-OR}
\end{equation}
\item Similarly to above, for a maximal block of $\AND$ connectives~$A=\bigwedge_{j=1}^mA_j$ at depth $i$, if $m< t_i$ then define the exact evaluation $\App_{A} = \prod_j\App_{A_j}$ in terms of the $\App_{A_j}$'s. Otherwise, for $\nu \in [t_i]$ define $r_{A, \nu} = \sum_{j} \alpha_{j, \nu}(1-\App_{A_j})$ for uniformly random $\alpha_{j, \nu} \in \F_p$ and
\begin{equation}
  \App_A = \prod_{\nu=1}^{t_i}(1-r_{A,\nu}^{p-1}) \eqperiod
 \label{eq:sd-gate-AND}
\end{equation}
\end{itemize}
The approximation of $C$ is the approximation of its output gate. Hence, the polynomial $\App_{C}$ is an approximation of $C$ of degree at most $(p-1)^{\ell} \prod_{i\in [\ell]} t_i$. 
\end{definition}

\subsection{Technical Overview}
\label{sec:sd-technical-overview}

With the key definition of RS approximation in hand, we start by a (rather detailed) overview of the proof of \cref{thm:sd-tradeoff}.

\paragraph{Step 1: Approximating the Frege proof.}  Given a formula $F$, we fix a small $\ACzero[p]$-Frege refutation $\refstd = (C_1, \ldots, C_\tau=\bot)$ of $F$. We enumerate all syntactic subformulas of the proof lines and define their RS approximators consistently across $\refstd$, ensuring that whenever a subformula appears in two distinct lines, both occurrences get the same approximator, and also that $\App_{\lnot{A}} = 1-\App_A$. We then replace each $C_i$ by $p_{C_i} \coloneqq 1-\App_{C_i}$ to obtain a sequence $p(\refstd) = (p_{C_1}, \ldots, p_{C_\tau} = 1)$ of low-degree polynomials that approximate (the negation of) each~$C_i$. 

Our goal is to use $p(\refstd)$ and the polynomial calculus degree lower bound for refuting $F$ to obtain a contradiction. The most straightforward way to do this is to turn $p(\refstd)$ into a polynomial calculus refutation of $F$. While we do not do exactly that, it is instructive to consider what would be needed. 

The main obstacle is that in each line $L$ of $\refstd$, the polynomial $\App_{L}$ is an approximation of~$L$, so logical inferences in $\refstd$ cannot be expected to be directly simulated in $p(\refstd)$. There is a useful way to measure the extent to which such an inference fails. %
Suppose that, using a rule $\Rule$ (binary, for simplicity), we infer in $\refstd$ that
\begin{equation}
  \label{eq:example-inference}
  \Rule:\frac{A \quad B}{C} \eqperiod
\end{equation}
Upon replacing each circuit by its approximator, we instead get the approximate inference 
\begin{equation}\label{eq:approximate-modus-ponens}
  \frac{\App_{A}\quad \App_{B}}{\App_C} \eqperiod
\end{equation}
When is \eqref{eq:approximate-modus-ponens} invalid? Precisely when the premises are true but the conclusion is false, so the failure indicator is the polynomial $E= \App_{A}\App_B(1-\App_C)$. Algebraically, we can express this approximate inference through noting that
\begin{equation}
1 = 1-\App_A + \App_A(1-\App_B) + \App_A\App_B
\end{equation}
and multiplying by $p_C = 1-\App_C$ to get
\begin{equation}
  \begin{aligned}
  \label{eq:algebraic-inference}
  p_C 
  &= (1-\App_C)(1-\App_A) + \App_A(1-\App_B)(1-\App_C) \\
  &\quad+ \App_A\App_B(1-\App_C) \\
  &= (1-\App_C)(1-\App_A) + \App_A(1-\App_B)(1-\App_C) + E \\
  &= p_Cp_A + (1-p_A)p_Bp_C + E
  \end{aligned}
\end{equation}
Hence, we can derive the first two terms to the right in \eqref{eq:algebraic-inference} in low-degree polynomial calculus from the premises $p_A, p_B$ in $p(\refstd)$. What is not clear at this point is how to derive the error $E$ in low degree, and this is the next point to deal with.

\paragraph{Step 2: Removing error terms through a common multiplier.} Deriving $E$ in low degree seems like too much to hope for. However, using the \emph{common-multiplier method} as employed in~\cite{Pang26SizeDegree}, we do not need to. Instead, if we simulate all inferences in $\refstd$ in the form~\eqref{eq:algebraic-inference} and $F$ requires (very) high polynomial calculus degree to refute, then since there are not too many error polynomials in total, it turns out that we can pick $\dthresh$ large enough that there exists a low-degree $f\not\in \PC_{\dthresh}$ such that, nevertheless, $fE \in \PC_{\dthresh}$ for each inference error appearing in the proof. Thus, multiplying each line in $p(\refstd)$ by $f$ and noting that each line in $p(\refstd)$ has low degree by construction, each $f$-multiplied identity~\eqref{eq:algebraic-inference} becomes a legitimate low-degree polynomial calculus derivation
\begin{equation}
  \label{eq:f-mult-algebraic-inference}
  fp_C = p_C \cdot fp_A + (1-p_A)p_C \cdot fp_B+ fE \eqperiod
\end{equation}
By filling in these derivations for each line in $\refstd$, the sequence $fp(\refstd)$ becomes a low-degree polynomial calculus derivation of $f$---a contradiction. The remaining question is now what property of the error polynomials lets us conclude that such an $f$ exists.

\paragraph{Step 3: Deriving a structured error set.} So far, we have used nothing about the RS approximations of each proof line except that they are low-degree and that $1$ is retained exactly by the approximation. %
To see what makes the inference errors special, let us consider a concrete example where the inference rule in~\eqref{eq:example-inference} is modus ponens and we have an inference
\begin{equation}
  \label{eq:example-inference-modus-ponens}
  \frac{A \quad B \coloneqq (A \to C)}{C} \eqperiod
\end{equation}
This example is syntactically lighter than the general case but captures all the main ideas. Writing $A \to C = \lnot{A} \lor C$, we see that if $\App_{\lnot{A}\lor C} = 1-(1-\App_{C})\App_{A}$, then even though $\App_{A}$ and $\App_{B}$ are not necessarily correct representations of the circuits $A$ and $B$, we have $E\equiv 0$ and the inference is valid. The case that can go wrong is when the final gate of $C$ is an $\OR$ gate, so that $C= \bigvee C_j$ and syntactically $A \to C = \lnot{A} \lor \bigvee_j C_j$. In this case, the RS approximation $\App_{A \to C}$ might be quite different from $1-(1-\App_{C})\App_{A}$, since it will be sampled according to~\eqref{eq:sd-gate-OR} using the inputs $1-\App_A$ and $\{\App_{C_j}\}$. 

Now suppose that $E = 1$ on some input, which is to say that $\App_{A} = 1$, $\App_{A\to C} = 1$, and $\App_{C} = 0$. By \eqref{eq:sd-gate-OR} and the fact that $\App_{\lnot{A}} = 1-\App_{A}$, we must have for some $j$ that $\App_{C_j} = 1$ on that input. But then, since $\App_C=0$, there must be an error in the approximator of $C$, and we know that it occurred when approximating the output gate. We can write this error as a polynomial $E_j = \App_{C_j}(1-\App_C)$, and our argument shows that $E=1 \implies \exists j: E_j = 1$; contrapositively, this shows that $E$ is in the ideal generated by the $E_j$'s. In fact, due to the structure of the RS approximation, we can draw the stronger conclusion that $E$ is derivable in low-degree polynomial calculus from the $E_j$'s and the Boolean axioms. To see this, recall from~\eqref{eq:sd-gate-OR} that if $A \to C$ has depth $i$, then
\begin{equation}
    \label{eq:error-poly-derivation}
  \App_{A\to C} = 1 - \prod_{\nu=1}^{t_i}(1-q_{A\to C,\nu}^{p-1})
\end{equation}
where each $q_{A\to C,\nu}$ is a random linear combination of the polynomials $\App_{\lnot{A}} = 1-\App_A$ and $\{\App_{C_j}\}$. Since the constant term in \eqref{eq:error-poly-derivation} vanishes, each term in $\App_{A\to C}$ (if there are any) is a product of a nonempty subset of \smash{$\{q_{A\to C,\nu}^{p-1}: \nu \in [t_i]\}$}, and thus we can write
\begin{equation}
  \label{eq:form-of-approximator}
  \App_{A\to C} = r_0 (1-\App_{A}) + \sum_j r_j \App_{C_j}
\end{equation}
where each $r_i$ has degree at most $\Delta_i$. Therefore, since $E = \App_{A}\App_{A\to C}(1-\App_C)$, we have
  \begin{equation}
    \label{eq:derivation-error-polynomial}
  E 
  = r_0 (1-\App_{A})\App_{A}(1-\App_C) + \sum_j r_j \App_{A} \App_{C_j}(1-\App_C) \eqperiod 
  \end{equation} 
The first term on the right in \eqref{eq:derivation-error-polynomial} is derivable in low degree from the Boolean axioms since $\App_{A}$ and $\App_B$ are low-degree and Boolean-valued, and the second is a low-degree polynomial in the $E_j$'s as desired.

This argument generalizes to arbitrary syntactic inference rules without any conceptual changes; see~\cref{subsec:sd-simulation}. For the final part of the proof, we want to use the structure of the error families $E_j$ to locate the common multiplier.

\paragraph{Step 4: Finding the common multiplier through the error sets.}
We have shown that each inference error $E$ can be derived in low degree from the family of polynomials $\{E_j = \App_{C_j}(1-\App_{C})\}$. This is a rather special family, since $(1-\App_{C}) = \prod_{\nu=1}^{t}(1-q_{\nu}^{p-1})$, where each $q_{\nu}$ is an independent random linear combination of the $\App_{C_j}$'s. We can therefore think of the set $\{E_j\}$ as constructed by starting from the $\App_{C_j}$ and then repeatedly sampling random linear combinations and multiplying all $\App_{C_j}$ by the resulting $\prod_{\nu=1}^{t}(1-q_{\nu}^{p-1})$. 

To abstract finding a common multiplier in the setting above, given a degree threshold $\dthresh$ and a collection $G = \{g_1, \ldots, g_s\}$ of polynomials of degree at most $d \coloneqq (D-k)/(1 + t(p-1))$, define a random polynomial $p_j = \prod_{\nu=1}^{j}(1-q_{\nu}^{p-1})$ where each $q_\nu$ is a random linear combination of the $g_i$'s. Set $p_0 = 1$. Recall that $\polyspace{{\dthresh}}$ denotes the vector space of polynomials of degree at most~$\dthresh$. For $1\leq j \leq t$, sample each $q_j$ conditioned on $p_{j-1}$ and define maps
\begin{equation}
    \mult_{j, G} \colon \polyspace{ k} \to (\polyspace{ \dthresh} / \PC_{ D})^s; \quad \mult_{j, G}(f) = (f p_j g_i)_{i=1}^s \eqperiod
\end{equation}
Then the kernel of each $\mult_{j, G}$ consists of those degree-$\leq k$ polynomials $f$ such that each $f p_j g_i$ is a degree-$\leq \dthresh$ consequence of the input axioms, and since $\PC_{ D}$ is closed under low-degree multiplication we have that $\ker \mult_{j, G} \subseteq \ker \mult_{j+1, G}$ for all $j$. We want to show that after $t$ iterations, there exists $f\not\in \PC_{ \dthresh}$ in the intersection of the kernels of $\mult_{t, G}$ for each error family $G$ appearing in the proof. Equivalently, we want to show that
\begin{equation}
\expectation \Big[\dim{\bigcap_G \ker{\mult_{t, G}} / (\polyspace{ k} \intersection \PC_{ \dthresh})}\Big] > 0 \eqcomma
\label{eq:prob-estimate}
\end{equation}
where the expectation is over all choices of $q$'s for all families. To show \eqref{eq:prob-estimate}, note that if there are $S$ families in total, then with $h(k)$ as in \eqref{eq:sd-hk}, rank-nullity yields
\begin{equation}
  \label{eq:sum-estimate}
\begin{aligned}
  &\dim\Big(\bigcap_G \ker{\mult_{t, G}} / (\polyspace{ k} \intersection \PC_{ \dthresh})\Big)%
  \geq
  \sum_G \dim{\ker{\mult_{t, G}} / (\polyspace{ k} \intersection \PC_{ \dthresh})} - (S-1) h(k) \eqperiod
\end{aligned}
\end{equation}
The key lemma in this step is the kernel-growth estimate
\begin{equation}
    \label{eq:growth-bound}
  \expectation_{q_1, \ldots q_t} [\dim{(\ker{\mult_{t, G} / (\polyspace{ k} \intersection \PC_{ \dthresh})})}] \geq (1-\rho_k(d)^t)h(k) \eqcomma 
\end{equation}
where $\rho_k(d)$ is from \eqref{eq:sd-eta}. Indeed, taking expectations in \eqref{eq:sum-estimate} and applying \eqref{eq:growth-bound}, we get
\begin{equation}
  \expectation \Big[\dim{\bigcap_G \ker{\mult_{t, G}} / (\polyspace{ k} \intersection \PC_{ \dthresh})}\Big] \geq (1-S\rho_k(d)^t)h(k)
\end{equation} 
and the proof is complete once we pick parameters so that $S\rho_k(d)^t < 1$. 

\paragraph{Proof Roadmap.} As we have seen above, the two technical lemmas we need to establish are the kernel-growth estimate in \eqref{eq:growth-bound} and that we can derive the inference errors from approximation errors. We establish these lemmas in Sections~\ref{subsec:sd-common-multiplier} and \ref{subsec:sd-simulation}, respectively. We then carry out Steps 1 to 4 with explicit parameters in Section~\ref{subsec:putting-things-together}.

\subsection{Kernel Growth of Error Sets}
\label{subsec:sd-common-multiplier}

Recall that $\sdCD$ denotes the set of polynomials derivable in degree at most $\dthresh$ and $\PC_{\dthresh}(k)=\sdCD\cap V_k$. Hence, the quantity~$h(k)$ from~\eqref{eq:sd-hk} equals $\dim(V_k/\PC_{\dthresh}(k))$.
For $u\in V_D$, write $[u]$ for its equivalence class in
$V_D/\sdCD$.

Fix a family $G=(g_1,\ldots,g_s)$ of degree-at-most-$d$
polynomials. For a polynomial $\sdpolyp$ of degree at most~$a$ and $a+k+d\le D$,
define
\begin{equation}
 \begin{aligned}
 \mult_{\sdpolyp,k}:\polyspace{k}&\to(\polyspace{\dthresh}/\sdCD)^s,
 &\mult_{\sdpolyp,k}(u)&=([\sdpolyp g_iu])_{i=1}^s,\\
 Z_\sdpolyp(k)&=\ker \mult_{\sdpolyp,k},
 &\rankshort_{\sdpolyp,k}&\coloneqq \sdRank \mult_{\sdpolyp,k}=\dim(V_k/\ker \mult_{\sdpolyp,k}).
 \end{aligned}
 \label{eq:sd-error-hilbert}
\end{equation}
By closure under multiplication, $ \PC_{\dthresh}(k)\subseteq \ker \mult_{\sdpolyp,k}$ and therefore
\begin{equation}
 \rankshort_{\sdpolyp, k}\le h(k)\eqperiod
 \label{eq:sd-error-ambient}
\end{equation}

Finally, we need an estimate on the growth rate of $\rankshort_{\sdpolyp, k}$ with respect to $k$, which follows from comparing binomial coefficients.
\begin{lemma}[\cite{Pang26SizeDegree}, Lemma~4.1]
\label{lem:sd-hilbert-growth}
If $a+K+d\le D$ and $K\le n$, then $\rankshort_{\sdpolyp, k}/N_n(k)$ is nonincreasing
for $0\le k\le K$. In particular, for $d\le K$,
\begin{equation}
 \rankshort_{\sdpolyp, K-d}\ge\frac{N_n(K-d)}{N_n(K)}\rankshort_{\sdpolyp, K} \eqperiod
 \label{eq:sd-hilbert-growth}
\end{equation}
Also $h(k)/N_n(k)$ is nonincreasing for $0\le k\le\min\{D,n\}$.
\end{lemma}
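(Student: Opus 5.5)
The plan is to prove that $k\mapsto \rankshort_{\sdpolyp,k}/N_n(k)$ is nonincreasing by a double-counting argument over monomials. The inequality \eqref{eq:sd-hilbert-growth} then follows by chaining from $K-d$ to $K$. The statement for $h(k)$ is the special case of the same argument with the multiplication map replaced by the quotient map $V_k\to V_D/\sdCD$, i.e.\ $f=1$ and $G=(1)$; this is legitimate because the only structure we use is that the kernel of $u\mapsto [\sdpolyp g_i u]$ is an ideal-like subspace. Throughout we work with multilinear polynomials, using \eqref{eq:sd-multilinearization}, so that $V_k$ modulo the Boolean axioms has a basis of squarefree monomials of degree at most $k$, of dimension $N_n(k)$.

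The key structural fact is \emph{closure of the kernel under variable multiplication}. If $u\in Z_\sdpolyp(k)$ and $k+1\le K$, then $x_ju\in Z_\sdpolyp(k+1)$. This holds because $\sdpolyp g_i x_j u = x_j(\sdpolyp g_i u)$ has degree at most $a+d+k+1\le D$, and $\sdCD$ is closed under multiplication by a variable within the degree budget, by \eqref{eq:sd-absorption}. Consequently the subspace $Z(k)\subseteq V_k$ behaves like the degree-$\le k$ part of an ideal in $\B$, and its quotient $V_k/Z(k)$ is a ``downward closed'' object. To compare consecutive levels, I would fix the degree-lexicographic order and pass to leading monomials. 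Let $\mathcal N_k$ be the set of squarefree monomials of degree $\le k$ that are \emph{not} leading monomials of elements of $Z(k)$; then $|\mathcal N_k|=\rankshort_{\sdpolyp,k}$ by Gaussian elimination. Closure under multiplication shows that if $m\in\mathcal N_{k+1}$ and $m'$ is a divisor of $m$ of degree $\le k$, then $m'\in\mathcal N_k$. Indeed, if $m'=\LM(z)$ with $z\in Z(k)$, then multiplying by $m/m'$, which is squarefree and disjoint from $m'$, gives an element of $Z(\deg m)\subseteq Z(k+1)$ with leading monomial $m$, using that degree-lex respects multiplication after multilinearization when the variables are new. One must check that multilinearization does not create a larger monomial; it does not, since $m/m'$ is disjoint from the support of $\LM(z)$ and the other terms only decrease or collapse in degree. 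Also $\mathcal N_{k+1}\cap V_k=\mathcal N_k$.

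Given this downward-closed (order-ideal) property, I would conclude by a Kruskal--Katona-free averaging. Count pairs $(m',m)$ with $m\in\mathcal N_{k+1}$, $\deg m=j$, and $m'$ obtained by deleting one variable. Each such $m$ has $j$ such children, all in $\mathcal N_k$, and each degree-$(j-1)$ monomial has $n-j+1$ parents. Hence the degree-$j$ count in $\mathcal N_{k+1}$ is at most $\frac{n-j+1}{j}$ times the degree-$(j-1)$ count, which equals $\binom nj/\binom n{j-1}$. So the layer densities $|\mathcal N^{(j)}|/\binom nj$ are nonincreasing in $j$. Summing layers, a sequence of densities that is nonincreasing has weighted average over $[0,k+1]$ at most its weighted average over $[0,k]$, which gives $\rankshort_{k+1}/N_n(k+1)\le \rankshort_k/N_n(k)$. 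I expect the main care point to be the leading-monomial/multilinearization compatibility in the second step, especially verifying that multiplying by disjoint variables preserves leading monomials; the averaging step is routine.
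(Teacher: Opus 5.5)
Your proposal is correct. The paper gives no proof of its own here: it only cites Pang's Lemma~4.1 and remarks that the lemma follows by comparing binomial coefficients. Your argument is a complete proof along those lines: the standard monomials of the truncated kernel form a down-set, the double count gives nonincreasing layer densities $|\mathcal N^{(j)}|/\binom nj$ via $\binom nj/\binom n{j-1}=(n-j+1)/j$, and a weighted-average step finishes. The $h(k)$ case is correctly recovered by taking $\sdpolyp=1$, $G=(1)$, $a=d=0$.

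One phrase to tighten is in the leading-monomial check. Terms of $z$ of the same degree as $m'$ that are disjoint from $m/m'$ do not drop in degree. They stay below $m$ only because lexicographic order is multiplicative. You invoke that fact, but then summarize it slightly inaccurately as ``the other terms only decrease or collapse in degree.''
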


We now establish the kernel-growth estimate in \eqref{eq:growth-bound}. It turns out to be more convenient to express it in terms of the rank of the involved multiplication maps, but by rank-nullity we can easily convert between the two formulations. The arguments in Steps 1 and 2 below are different from those in \cite{Pang26SizeDegree}, but reach the same conclusion. %

\begin{lemma}[Kernel growth]\label{lem:sd-kernel-growth}\label{thm:sd-one-layer}
Let $G=(g_1,\ldots,g_s)\subseteq \polyspace{d}$, where $1\le d\le K\le n$,
and let $t\ge1$. Set
\[
 e=(p-1)d,\qquad \Delta=te,\qquad
 \sdpolyp_t=\prod_{j=1}^{t}(1-q_j^{p-1}),
 \qquad q_j=\sum_{i=1}^{s}\alpha_{j,i}g_i,
\]
where the coefficients $\alpha_{j,i}$ are independent and uniform in
$\sdFp$. If $K+\Delta+d\le \dthresh$, then
\begin{equation}\label{eq:sd-kernel-growth-short}
 \sdE[\rankshort_{\sdpolyp_t,K}]
 \le
    \rho_K(d)^{t}h(K) \eqperiod%
\end{equation}
Equivalently, the expected dimension of
$\ker\mult_{\sdpolyp_t,K}/\PC_{\dthresh}(K)$ is at least $h(K)$
minus the right-hand side.
\end{lemma}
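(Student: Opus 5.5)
We argue by induction on $t$, proving a one-step contraction
\[
\sdE\bigl[\rankshort_{\sdpolyp_{j},K}\,\big|\,q_1,\ldots,q_{j-1}\bigr]\le \rho_K(d)\,\rankshort_{\sdpolyp_{j-1},K}
\]
for every fixed $\sdpolyp_{j-1}$ of degree at most $(j-1)e$. Taking expectations and iterating from $\rankshort_{\sdpolyp_0,K}\le h(K)$ (by \eqref{eq:sd-error-ambient} with $\sdpolyp_0=1$) then gives \eqref{eq:sd-kernel-growth-short}. The budget $K+\Delta+d\le\dthresh$ ensures that all products $\sdpolyp_j g_i u$ with $\deg u\le K$ live in $V_\dthresh$. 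By closure of $\sdCD$ under multiplication by polynomials of low degree, the kernels grow: $Z_{\sdpolyp_{j-1}}(K)\subseteq Z_{\sdpolyp_j}(K)$.

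For the one-step bound, fix $\sdpolyp=\sdpolyp_{j-1}$ and write $q=q_j$. Using the Boolean axioms and \eqref{eq:sd-frobenius-identity}, we observe that $q^{p-1}$ is idempotent modulo $\sdCD$ in the appropriate degree range. Hence $\sdpolyp(1-q^{p-1})g_iu\equiv 0$ whenever $q\,\sdpolyp g_i u\equiv 0$, since $q^{p-1}\sdpolyp g_iu = q^{p-2}\cdot q\,\sdpolyp g_iu$. The key claim is that every $u\in V_{K-d}$ with $\sdpolyp\, q\, u\equiv 0$ has $u g_i$ in the new kernel for all $i$. This is too weak directly, so instead we look at the subspace $Y=\{u\in V_{K-d}: u\,g_i\in Z_{\sdpolyp}(K)\text{ relations}\}$. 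Concretely, the plan is to consider the linear map $\Phi:V_{K-d}\to (V_\dthresh/\sdCD)^{s}$, $u\mapsto([\sdpolyp g_i u])_i$, of rank $\rankshort_{\sdpolyp,K-d}$.

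For a uniformly random $\alpha$, the map $u\mapsto[\sdpolyp\,q\,u]=\sum_i\alpha_i[\sdpolyp g_iu]$ is a uniformly random linear combination of the coordinates of $\Phi$. Every $u\notin\ker\Phi$ is therefore killed with probability exactly $1/p$, so the expected rank of $u\mapsto[\sdpolyp q u]$ on $V_{K-d}$ is at least $(1-1/p)\rankshort_{\sdpolyp,K-d}$. We convert this into new kernel elements at level $K$ via the identity
\[
\sdpolyp(1-q^{p-1})\,g_i\,(q u)\equiv \sdpolyp(q-q^{p})g_iu\equiv 0 .
\]
Thus each $w=qu$ lies in $Z_{\sdpolyp_j}(K)$, and $\deg(qu)\le K$ since $\deg q\le d$. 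The task is then to show that the $w$'s produced this way contribute, modulo $Z_{\sdpolyp}(K)$, at least as many dimensions as the rank of $u\mapsto[\sdpolyp q u]$ restricted to a complement of $\ker\Phi$. If $qu\in Z_\sdpolyp(K)$, then $\sdpolyp g_i q u\equiv0$ for all $i$. Summing with weights $\alpha_i$ gives $\sdpolyp q^2u\equiv0$, and multiplying by $q^{p-3}$ and using idempotence yields $\sdpolyp q u\equiv 0$ (for $p=2$ this holds directly). So $u\mapsto [qu]\in V_K/Z_\sdpolyp(K)$ has kernel contained in $\ker(u\mapsto[\sdpolyp qu])$, and the new kernel gains at least $\sdE\operatorname{rank}\ge(1-1/p)\rankshort_{\sdpolyp,K-d}$ dimensions.

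Finally, Lemma~\ref{lem:sd-hilbert-growth} converts this to $(1-1/p)\frac{N_n(K-d)}{N_n(K)}\rankshort_{\sdpolyp,K}=\eta_K(d)\rankshort_{\sdpolyp,K}$. We conclude $\sdE[\rankshort_{\sdpolyp_j,K}]\le(1-\eta_K(d))\rankshort_{\sdpolyp,K}$, and induction completes the proof. We expect the main obstacle to be the degree bookkeeping. The idempotence manipulations multiply by $q^{p-2}$ and similar factors, which must stay within degree $\dthresh$, and the application of Lemma~\ref{lem:sd-hilbert-growth} requires its hypothesis $a+K+d\le\dthresh$ with $a=\deg\sdpolyp_{j-1}$. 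If the extra powers of $q$ overflow the budget, we would instead use only the products $\sdpolyp(1-q^{p-1})g_iqu$, which have degree at most $je+d+K\le\dthresh$, and avoid squaring $q$ by arguing kernel containment through $q^{p}\equiv q$ alone.
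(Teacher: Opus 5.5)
Your outer structure is the same as Step~3 of the paper's proof:
\begin{itemize}
\item the one-step contraction, iterated over $t$;
\item the enlargement $Z_f(K)+q\polyspace{K-d}\subseteq Z_{f(1-q^{p-1})}(K)$, obtained from $q^p\equiv q$;
\item the comparison of $u\mapsto qu+Z_f(K)$ with $u\mapsto[fqu]$ (your Frobenius argument for that kernel containment is fine up to bookkeeping: multiply $fq^2u$ by $q^{p-2}$ and use $q^p\equiv q$);
\item the final appeal to Lemma~\ref{lem:sd-hilbert-growth}.
\end{itemize}
The gap is the step that carries all the content, namely
\[
\sdE_\alpha\,\operatorname{rank}\bigl(u\mapsto[fqu]\bigr)\ \ge\ \Bigl(1-\frac1p\Bigr)\rankshort_{f,K-d}.
\]
Your justification is that each $u\notin\ker\Phi$ is killed with probability (at most, not exactly) $1/p$. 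A bound on the probability that a fixed vector vanishes says nothing about the expected dimension of the kernel. The kernel is a subspace, and the vanishing events for different $u$ are strongly correlated. In fact the inference fails in pure linear algebra. Let $W$ be one-dimensional over $\F_p$ and let $\Phi\colon\F_p^s\to W^s$ be an isomorphism. Then every nonzero $u$ is killed by $w\mapsto\sum_i\alpha_iw_i$ with probability exactly $1/p$, yet the composite has rank at most $1$, far below $(1-1/p)s$. So any valid proof of this step must use the multiplicative structure of $\polyspace{\dthresh}/\sdCD$, and your argument never does.

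That structure is what Steps~1--2 of the paper supply. Take $q,r\in\polyspace{d}$ and $\operatorname{mult}_g(u)=([h_jgu])_j$ as in Step~1. The polynomial $\Psi_a(X,Y)=Y\bigl(1-(X+aY)^{p-1}\bigr)$ satisfies $[h_j\Psi_a(q,r)u]=\delta_{a,c}[h_jru]$ for every $u\in\ker\operatorname{mult}_{q+cr}$. This uses only closure of $\sdCD$ under low-degree multiplication and $r^p-r\in\sdCD$. Put $J=\ker\operatorname{mult}_q\cap\ker\operatorname{mult}_r$. The identity shows that the spaces $\ker\operatorname{mult}_{q+cr}/J$ for $c\in\F_p$, together with $\ker\operatorname{mult}_r/J$, form a direct sum in $U/J$. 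This gives \eqref{eq:sd-kernel-pencil-dimension}. An induction on $s$, viewing $\sum_{i<s}\alpha_ig_i+\alpha_sg_s$ as a uniformly random point on a line, then gives \eqref{eq:sd-rank-averaging-short}.

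Apply this on $U=\polyspace{K-d}$ with the single multiplier $h_1=f$; the degree condition $\deg f+(K-d)+pd\le\dthresh$ holds. Since $\bigcap_i\ker(u\mapsto[fg_iu])=\ker\Phi$, this gives exactly your displayed claim. With it inserted, the rest of your proof goes through. Your single-coordinate bookkeeping is then a legitimate substitute for the paper's common-kernel argument with $g_i^p-g_i$.

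Separately, your discarded remark that $qfg_iu\equiv0$ forces $f(1-q^{p-1})g_iu\equiv0$ is false; it gives $f(1-q^{p-1})g_iu\equiv fg_iu$. Nothing later depends on it.
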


\begin{proof}
We use $\mult_{\sdpolyp,k}$ and $\rankshort_{\sdpolyp,k}$ from
\eqref{eq:sd-error-hilbert}; in particular,
$\rankshort_{\sdpolyp,k}\le h(k)$ by \eqref{eq:sd-error-ambient}.
The proof proceeds in three steps. The first two concern general properties of kernels of multiplication maps as in \eqref{eq:sd-error-hilbert}. 

In Step~1, we study combinatorial lines in $V_d$ of the form $\ell = \{q + c r: c\in \F_p\}$ for polynomials $q, r$ and prove that the only elements in the shared kernels of the maps $\{\mult_{q+cr}: c \in \F_p\}$ are the obvious ones, namely those in $\ker{\mult_q}\intersection \ker{\mult_r}$. This lets us bound the sum $\sum_{c\in \F_p}{\dim{\ker{\mult_{q+cr}}}}$ in terms of the dimensions of the domain $U$, the set $\ker{\mult_r}$, and of the joint kernel $\ker{\mult_q}\intersection \ker{\mult_r}$. 

In Step~2, we interpret a random linear combination $\sum_{i \in [s]}\alpha_i g_i$ as a line with the first $s-1$ terms as the base and the last term as the direction. The estimates in Step $1$ then easily let us establish a lower bound on the expected rank of $\mult_{\Sigma_{i \in [s]}\alpha_i g_i}$ by induction on $s$. 

In the final step, we apply this bound to the multiplication maps of interest. Roughly speaking, the idea is that for a random linear combination $q=\sum_i \alpha_i g_i$, the expected rank of $\mult_{q}$ is noticeably large by the bound in Step~2. Hence, so is the set of low-degree $q$-multiples. When we multiply by $(1-q^{p-1})$, we send all low-degree $q$-multiples $qu$ to $0$ modulo $\PC_\dthresh$, since $q-q^p$ is derivable from the Boolean axioms. Thus, we are able to show that the kernel grows considerably when we do so.

\paragraph{Step 1: Separating kernels.}
In this step, we prove the following general claim. Let $U\subseteq\polyspace{k}$ be a linear subspace and let
$h_1,\ldots,h_m$ have degree at most $b$, where $b+k+pd\le\dthresh$.
For $g\in\polyspace{d}$, define
\[
 \mult_g\colon U\to(\polyspace{\dthresh}/\sdCD)^m,
 \qquad \mult_g(u)=([h_j gu])_{j=1}^m.
\]
Fix $q,r\in\polyspace{d}$, and put
\[
 L_c=\ker \mult_{q+cr}\quad(c\in\sdFp),\qquad
 L_\infty=\ker \mult_r,\qquad J=\ker \mult_q\cap\ker \mult_r.
\]
We claim that the spaces $L_c/J$, including $L_\infty/J$, have direct
sum in $U/J$. To show this, we define the polynomial
$\Psi_a(X,Y)=Y(1-(X+aY)^{p-1})$. We want to prove that $\Psi_a$ isolates the summand indexed by $a$.
Indeed, writing $\delta_{a,c}=\mathbbm{1}_{a=c}$, we have
\begin{equation}\label{eq:sd-kernel-separator}
 \Psi_a(X,Y)-\delta_{a,c}Y
 =(X+cY)A_{a,c}(X,Y)+(1-\delta_{a,c})(Y-Y^p),
\end{equation}
for some polynomial $A_{a, c}$ with $\deg A_{a,c}\le p-1$.
To check divisibility by $X+cY$, substitute $X=-cY$ and use
$(a-c)^{p-1}=1-\delta_{a,c}$.
For $u_c\in L_c$, substitute $X=q$, $Y=r$ and multiply by
$h_\alpha u_c$. The first term on the right in \eqref{eq:sd-kernel-separator} is then $h_\alpha (q+cr)u_c A_{a, c}(q, r)$, which belongs to $\sdCD$ by definition of $L_c$ and closure of $\sdCD$ under low-degree multiplication. The second term on the right is also in $\sdCD$ since
$r^p-r$ is derivable from the Boolean axioms. Thus
\begin{equation}\label{eq:sd-kernel-separation}
 [h_j\Psi_a(q,r)u_c]=\delta_{a,c}[h_j ru_c] \eqperiod
\end{equation}
Every polynomial involved in this argument has degree at most $b+k+pd$, which is at most $\dthresh$ by assumption. Note also that the argument covers $c=0$. 

Now let $u_c\in L_c$ for $c\in \F_p$, and
$\sum_{c\in\sdFp}u_c+u_\infty\in J$.
Multiply this relation by $h_\alpha\Psi_a(q,r)$ and consider the equivalence class modulo $\sdCD$.
The $u_\infty$ and $J$ terms vanish because $\Psi_a(q,r)$ is divisible
by $r$, while \eqref{eq:sd-kernel-separation} isolates $h_\alpha ru_a$.
Hence $\mult_r(u_a)=0$. Together with the assumption that $\mult_{q+ar}(u_a)=0$, this gives
$u_a\in J$. This holds for every $a$, and the original relation then
gives that $u_\infty\in J$. The direct-sum claim follows. Taking dimensions, we conclude in particular that
\begin{equation}\label{eq:sd-kernel-pencil-dimension}
 \sum_{c\in\sdFp}\dim L_c
 \le \dim U-\dim L_\infty+p\dim J\eqcomma
\end{equation}
which is the estimate we need for the next, inductive step.

\paragraph{Step 2: Averaging over a random linear combination.}
With the same maps and degree assumptions, we claim that
\begin{equation}\label{eq:sd-rank-averaging-short}
 \sdE_{\alpha}\sdRank \mult_{\sum_i\alpha_i g_i}
 \ge\left(1-\frac1p\right)
 \dim\Big(U/\bigcap_i\ker \mult_{g_i}\Big).
\end{equation}
Let $Z=\bigcap_i\ker \mult_{g_i}$. By rank-nullity, it suffices to bound the expected kernel dimension from above
by $\dim U/p+(1-1/p)\dim Z$. We prove the claim by induction on the number $s$ of polynomials, where the case $s=0$ is trivial. For the induction step, put
\[
 r=g_s\eqcomma\qquad q=\sum_{i<s}\alpha_i g_i\eqcomma\qquad U_0=\ker \mult_r\eqperiod
\]
For each fixed $q$, divide \eqref{eq:sd-kernel-pencil-dimension} by $p$ to get
\begin{equation}
 \sdE_{c\in\sdFp}\dim\ker \mult_{q+cr}
 \le\frac{\dim U-\dim U_0}{p}
       +\dim(U_0\cap\ker \mult_q)\eqperiod
    \label{eq:upper-bound-induction-step}
\end{equation}
The first $s-1$ maps restricted to $U_0$ have common kernel $Z$.
The induction hypothesis applied to this subspace gives
\begin{equation}
 \sdE_q\dim(U_0\cap\ker \mult_q)
 \le\frac{\dim U_0}{p}+\left(1-\frac1p\right)\dim Z \eqperiod
 \label{eq:induction-hypothesis}
\end{equation}
Substituting \eqref{eq:induction-hypothesis} into \eqref{eq:upper-bound-induction-step} cancels the $\dim U_0/p$ terms and
proves \eqref{eq:sd-rank-averaging-short}.

\paragraph{Step 3: The final bound.}
Let $\deg\sdpolyp\le a$, suppose $a+K+pd\le \dthresh$,
and let $q$ be a random linear combination
of $G$. Apply \eqref{eq:sd-rank-averaging-short} on $\polyspace{K-d}$ with
$h_\alpha=\sdpolyp g_\alpha$.
The degree requirement to do so is $(a+d)+(K-d)+pd\le\dthresh$, which is satisfied.

We now want a more transparent right side in this application of \eqref{eq:sd-rank-averaging-short}. For this, we show that the common kernel of the maps
$u\mapsto([\sdpolyp g_j g_i u])_j$ is exactly
$\ker\mult_{\sdpolyp,K-d}$. One inclusion follows by multiplying its
defining relations by $g_i$. Conversely, consider $u$ in the common kernel. For $j = i$ we then have in particular that
$\sdpolyp g_i^2u\in\sdCD$. Multiply by $g_i^{p-2}$ to get
$\sdpolyp g_i^pu\in\sdCD$. Since we can derive $g_i^p-g_i$ over $\F_p$ from the Boolean axioms by \eqref{eq:sd-frobenius-identity}, it follows that
$\sdpolyp g_iu\in\sdCD$ for all $i$. All these operations fit the degree budget.
Consequently, rank-nullity and \eqref{eq:sd-rank-averaging-short} give
\begin{equation}\label{eq:sd-selected-rank-short}
 \sdE_q[\sdRank\bigl(u\mapsto([\sdpolyp g_j qu])_j\bigr)]
 \ge\left(1-\frac1p\right)\rankshort_{\sdpolyp,K-d}\eqperiod
\end{equation}

Put $\sdpolyp'=\sdpolyp(1-q^{p-1})$. The kernel then increases as follows:
\begin{equation}\label{eq:sd-kernel-enlargement-short}
 \ker\mult_{\sdpolyp,K}+q\polyspace{K-d}
 \subseteq\ker\mult_{\sdpolyp',K}.
\end{equation}
The inclusion of the first summand follows from closure of $\PC_\dthresh$ under low-degree multiplication. For the second, note that
$\sdpolyp'g_i(qu)=\sdpolyp g_iu(q-q^p)$, so the claim follows from \eqref{eq:sd-frobenius-identity} since $\PC_D$ contains the Boolean axioms.
Both calculations have degree at most $a+K+pd$.
Taking dimensions in \eqref{eq:sd-kernel-enlargement-short} gives
\begin{equation}
  \label{eq:weird-map}
 \rankshort_{\sdpolyp',K}\le\rankshort_{\sdpolyp,K}
 -\dim\frac{\ker\mult_{\sdpolyp,K}+q\polyspace{K-d}}
                 {\ker\mult_{\sdpolyp,K}} \eqperiod
\end{equation}
The subtracted dimension on the right in \eqref{eq:weird-map} is the rank of the map
$T_q: u\mapsto qu+\ker\mult_{\sdpolyp,K}$ on $\polyspace{K-d}$. We now show that this quantity is precisely the rank of the map in the left side of~\eqref{eq:sd-selected-rank-short}. It suffices to show that the two maps have the same kernel; the claim then follows from rank-nullity since they also have the same domain. Indeed,
\[
T_q(u)=0
\iff qu\in\ker\mult_{\sdpolyp,K}
\iff [\sdpolyp g_j qu]=0\quad\text{for every }j \eqcomma
\]
as desired. 
Taking expectations and using \eqref{eq:sd-hilbert-growth} therefore gives
\begin{equation}\label{eq:sd-contraction-short}
 \sdE_q [\rankshort_{\sdpolyp',K}]
 \le \rankshort_{\sdpolyp,K}-\left(1-\frac1p\right)\rankshort_{\sdpolyp,K-d}
 \le\rho_K(d)\rankshort_{\sdpolyp,K} \eqcomma
\end{equation}
where $\rho_K(d)$ is as in %
\eqref{eq:sd-eta}.

To finish the proof, write $\sdpolyp_0=1$ and
$\sdpolyp_j=\prod_{\nu=1}^j(1-q_\nu^{p-1})$, so
$\deg\sdpolyp_j\le je$. Condition on $q_1,\ldots,q_{j-1}$ and apply
\eqref{eq:sd-contraction-short}:
\begin{equation}
 \sdE[\rankshort_{\sdpolyp_j,K}\mid q_1,\ldots,q_{j-1}]
 \le\rho_K(d)\rankshort_{\sdpolyp_{j-1},K} \eqperiod
\end{equation}
The degree requirement to apply \eqref{eq:sd-contraction-short} is $K+(j-1)e+pd\le K+(t-1)e +pd = K+\Delta+d\le\dthresh$, which is satisfied by assumption.
Taking expectations and iterating gives
$\sdE[\rankshort_{\sdpolyp_t,K}]\le\rho_K(d)^t\rankshort_{1,K}
\le\rho_K(d)^t h(K)$, as desired.\qedhere

\end{proof}

\begin{remark}[Potential improvement in \eqref{eq:sd-kernel-growth-short}]
  \label{rmk:sd-refined-errors}
  The factor $\rho_K(d)^{t}$ in~\eqref{eq:sd-kernel-growth-short} is the same as in the
size--degree relation in \cite[Section~8]{Pang26SizeDegree}. Under the slightly stronger assumption \(K+\Delta+pd\le D\), a variation of the argument for obtaining~\eqref{eq:sd-contraction-short} can replace the final estimate by
\begin{equation}
  \expectation [\rankshort_{\sdpolyp_t, K}] \leq p^{-t} h(K+\Delta)
\end{equation}
which can improve the estimates when the actual function~$h$ associated to $F$ grows slowly. However, it seems challenging to estimate $h$ well enough for the difference to matter. The only polynomial calculus degree lower bound this author is aware of from which one can read off an exact value for $h$ is Razborov's lower bound for variants of the pigeonhole principle~\cite{Razborov98LowerBound,IPS99LowerBounds}. Unfortunately, the parameters there seem insufficient for a useful improvement over $\rho_K(d)^{t}h(K)$ (which is also insufficient to extract a meaningful bound from the tradeoff in this case). 
\end{remark}

\subsection{Deriving Inference Errors from Approximation Errors}
\label{subsec:sd-simulation}
The next technical result we need is to derive the inference errors in the proof overview from errors in the RS approximators of the proof lines. The proof is very similar to its counterpart in \cite{Pang26SizeDegree} and builds on ideas from~\cite[Section~6]{BIKPRS97ProofCplx}. The formal statement is \cref{lem:sd-inference-errors}. For every
conjunction or disjunction $A$ with inputs $A_1,\ldots,A_m$, we define the shorthands
\begin{equation}
    \label{eq:shorthands}
 (g_{A,j},R_A)=
 \begin{cases}
 (\App_{A_j},1-\App_A),&A=\bigvee_jA_j,\\
 (1-\App_{A_j},\App_A),&A=\bigwedge_jA_j,
 \end{cases}
 \qquad e_{A,j}=g_{A,j}R_A.
\end{equation}
\begin{lemma}
\label{lem:sd-inference-errors}
Fix a set $\rules$ of Boolean inference rules and modular recursion axioms defining
$\sdAC$\nobreakdash-Frege. There is a constant $B = B(\rules, p)$ with the following
property. Assign consistent RS approximators $\App_A$ to all syntactic
rooted subformulas of an $\ACzero[p]$-Frege proof, with $\deg \App_A\le\Delta$. 
For each non-input inference
$L_1,\ldots,L_r \vdash L$ (including zero-premise axioms), its inference error
polynomial
\[
 E=\Bigl(\prod_{j=1}^{r}\App_{L_j}\Bigr)(1-\App_L)
\]
has a polynomial calculus derivation from the $e_{A,j}$ and the Boolean
axioms in degree at most~$B\Delta$. 

Moreover, if the input formula of the proof has width at most $w$, then for each input clause $C$ its approximation $1-\App_C$ is derivable from the polynomial representation of $C$ and its approximation errors in degree at most $w + \Delta$. 
\end{lemma}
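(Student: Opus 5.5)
We would prove \cref{lem:sd-inference-errors} by a "semantic implies syntactic" argument localized to the constant-size skeleton of each rule instance. Fix a rule scheme in $\rules$ (or a modular recursion axiom) with schematic variables $\Phi_1,\ldots,\Phi_v$, and an instance obtained by substituting formulas $A_1,\ldots,A_v$. Each premise $L_j$ and the conclusion $L$ is then a fixed constant-size formula tree over the $A_u$, except for one subtlety: when the top connective of some $A_u$ is $\vee$ (resp.\ $\wedge$) and it sits directly under a $\vee$ (resp.\ $\wedge$) of the skeleton, the maximal block merges. So the skeleton gates get approximated by the RS rule applied to the merged input list, not to the skeleton inputs. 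We handle this by defining, for every skeleton gate $A$, the family of "children" $A_1,\dots,A_m$ of its maximal block. Some of these children are the skeleton's immediate inputs, and some are inputs of the $A_u$'s.

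The core step is to show that every skeleton node's approximator is determined, up to the errors $e_{A,j}$ and low-degree Boolean manipulation, by the ideal values of its skeleton children. Concretely, for a skeleton $\vee$-block $A$ with children $A_j$, we prove two facts. First, $g_{A,j}R_A = e_{A,j}$ is an error generator by definition, which gives the implication ``child true $\Rightarrow$ $\App_A$ true'' modulo the errors. Second, as in \eqref{eq:form-of-approximator}, since \eqref{eq:sd-gate-OR} has vanishing constant term, $\App_A=\sum_j r_j \App_{A_j}$ with $\deg r_j\le\Delta$. This gives ``$\App_A$ true $\Rightarrow$ some child true'' exactly, as a polynomial identity. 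The case of exact small-fan-in evaluation and the case of $\wedge$ are dual. $\MOD$ gates and negations are exact by \eqref{eq:sd-exact-basic}--\eqref{eq:sd-mod-approx}. A merged block $A_u\vee\cdots$ expands into children from both levels, and the two facts still let us replace the truth of the sub-block $A_u$ by the disjunction of its children's truths.

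Next, we treat the $\App_{A_u}$ (and the merged children) as Boolean-valued atoms $y_1,\ldots,y_c$, with $c$ bounded by a constant depending on $\rules$. We then consider the polynomial $E'$ obtained by writing $E$ with each skeleton gate replaced by its exact Boolean polynomial in these atoms. Soundness of the rule means that $E'$ vanishes on $\{0,1\}^c$. So $E'$ is a polynomial of constant degree $c'$ in the atoms, and it is derivable from the atom Boolean axioms $y_i^2-y_i$. Composing with $y_i=\App_{A_i}$ turns these axioms into $\App^2-\App$, which are derivable from $\sdBool$ in degree $2\Delta$ by \eqref{eq:sd-multilinearization}, since $\App$ is Boolean on the cube. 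The derivation has degree $O(c'\Delta)$. Then $E-E'$ is written, gate by gate via a telescoping hybrid, as a combination of terms of the form (bounded-degree polynomial in approximators)$\cdot e_{A,j}$ plus multiples of the exact identities above. This gives a derivation of degree at most $B\Delta$ with $B$ depending only on the maximal skeleton size in $\rules$ and on $p$ (the $p-1$ powers in $\MOD$ gates). The modular recursion axioms are handled identically; the only thing to check is that the $\MOD$ gate approximations are exact functions of their arguments' approximations.

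For the input-clause part, $C=\bigvee_j \ell_j$ is a single $\vee$-block of literals, which are approximated exactly. Then $1-\App_C = \prod_j(1-\ell_j) + \sum_j \bigl(\prod_{i<j}(1-\ell_i)\bigr)\ell_j(1-\App_C)$. The first term is $p_C$, and the $j$-th summand is a degree-$\le w$ multiple of $e_{C,j}$, giving degree $w+\Delta$. The main obstacle we expect is the bookkeeping of merged maximal blocks across the skeleton/substitution boundary. That is where exactness fails and where the telescoping must be arranged so that every discrepancy is a multiple of some $e_{A,j}$ of a gate actually occurring in the proof.
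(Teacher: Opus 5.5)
Your plan is essentially the paper's proof. The paper also evaluates the constant-size rule skeleton exactly on the atoms $a_X=\App_{\sigma(X)}$, and uses soundness to see that the resulting polynomial $E^*$ vanishes on the Boolean cube. It then bridges $E-E^*$ by a structural induction. The local step at a skeleton disjunction compares $z=\App_{A\vee B}$ with $u+v-uv$ using exactly your two facts: the error generators of $A$, $B$, $A\vee B$, and the certificate $1-R_A=\sum_j r_{A,j}g_{A,j}$. These enter through $z-u-v+uv=z(1-u)(1-v)-u(1-z)-(1-u)v(1-z)$, and this is how merged blocks are handled. You derive the vanishing polynomial in atom variables and then substitute; this is a harmless detour, since the paper simply notes that $E^*$, as a polynomial in $x$, vanishes on $\{0,1\}^n$.

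Three slips need fixing.

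(1) The atoms must be only the $\App_{\sigma(X)}$. If merged children were extra atoms, one of two things goes wrong. Either a skeleton gate's exact polynomial is an OR over a block of unbounded fan-in, so $E'$ has unbounded degree. Or $\App_{A_u}$ and its children are unrelated atoms, and then $E'$ need not vanish; try $\Phi\vdash\Phi\vee\Psi$ with $\sigma(\Phi)$ a disjunction. Merged children belong only in the local comparison.

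(2) The modular recursion axioms contain the unbounded list $\Gamma$, so $c$ is not bounded there. Your argument only needs $E'$ to have bounded degree ($O(p)$) in the atoms, so it survives. Alternatively, argue as the paper does: use the atoms $b,u,v,w$ and the relations $b=0\Rightarrow u=v$, $b=1\Rightarrow u=w$. These relations hold on the $x$-cube but fail for independent Boolean atoms.

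(3) Your input-clause identity is off by $p_C\App_C$. Multiplying $1=\prod_j(1-\ell_j)+\sum_j\bigl(\prod_{i<j}(1-\ell_i)\bigr)\ell_j$ by $1-\App_C$ makes the first term $(1-\App_C)p_C$, not $p_C$. This still gives degree $w+\Delta$.
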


\begin{proof}
The proof is conceptually the same as for the modus ponens example in Step~3 of the proof overview: by soundness of the inference rule, an inference error must be caused by an error in approximating the premises or conclusion, and there is in fact a low-degree algebraic derivation of the inference error from the approximation errors. However, the proof is a bit more cumbersome than there since we need some algebraic manipulations to express the corresponding notions as for~\eqref{eq:example-inference-modus-ponens} for a general inference rule. We deal with the three kinds of inferences in order: applications of Frege rules; modular recursion axioms; and finally input inferences. 

\paragraph{Frege rules.} Suppose that the inference $L_1,\ldots,L_r \vdash L$ was made using the Frege rule 
\begin{equation}
  \label{eq:rule-formal}
  \Rule \colon \frac{F_1, \ldots, F_r}{F} \eqperiod
\end{equation}
Here, as in \eqref{eq:example-inference-modus-ponens}, each $F_i$ and $F$ is a propositional formula in the de Morgan language expressed using a shared set of
Boolean metavariables, which, we emphasize, only depend on $\Rule$ and in particular are not the variables of the $L_i$. The inference $L_1, \ldots, L_r \vdash L$ is obtained by a substitution $\sigma(F_i)=L_i, \sigma(F)=L$, where each metavariable is assigned the corresponding syntactic subformula of the $L_i$ or $L$ by~$\sigma$. 

For each metavariable $X$ of the $F_i$ and $F$, we assign $a_X = \App_{\sigma(X)}$. For each subformula $G$ of the $F_i$ or $F$, let $G(a)$ be its exact
Boolean evaluation on the $a_X$'s, expressed as a polynomial using that
\[
 u\wedge v=uv,\qquad
 u\vee v=u+v-uv,\qquad
 \neg u=1-u.
\]
Our goal is now to derive the polynomial
\begin{equation}
    \label{eq:approximation-errors}
 D_G=\App_{\sigma(G)}-G(a)
\end{equation}
from approximation errors and the Boolean axioms, working upward through $G$. We then derive $E$ from $D_{F_i}$ and $D_F$ by using that the polynomial 
\begin{equation}
    \label{eq:sound-rule-polynomial}
 E^*=\Big(\prod_{i=1}^rF_i(a)\Big)(1-F(a))
\end{equation}
vanishes on the Boolean cube by soundness of $\Rule$. 

Going forward, we often use the fact that any degree-$d$ polynomial vanishing on the
Boolean cube is derivable from the Boolean axioms in degree $d$. In particular, every RS approximator is Boolean-valued
on the Boolean cube, so $\App_A(1-\App_A)$ is derivable in degree at most
$2\Delta$ for every subformula $A$.

We first show how to proceed upwards toward \eqref{eq:approximation-errors} in a subformula through a single binary connective. For this, we use the identity 
\begin{equation}
 1-R_A=\sum_j r_{A,j}g_{A,j}\eqcomma \qquad \text{for some $r_{A, j}$ with }\deg r_{A,j}\le\Delta \eqperiod
 \label{eq:sd-short-gate-certificate} 
\end{equation}
which holds in the randomized cases of \eqref{eq:sd-gate-OR}\nobreakdash--\eqref{eq:sd-gate-AND} by the same argument as in \eqref{eq:form-of-approximator}, while the same identity holds in the exact cases by telescoping. Now, consider syntactic formulas $A,B$. We first consider $A\lor B$, where we want to derive
\begin{equation}
    \label{eq:difference-eq-OR}
    \App_{A\lor B} - (\App_A + \App_B -\App_A\App_B) \eqperiod
\end{equation}
(Note that this polynomial can be nonzero since we group maximal blocks of the same connective when approximating $\OR$ and $\AND$ gates.) Write 
\[
 u=\App_A,\qquad v=\App_B,\qquad z=\App_{A\vee B}.
\]
If $A = \bigvee_{i\in I} A_i$, then let $g_i$, $i\in I$ be the approximators of each $A_i$. If $A$ does not start with OR, use the singleton list consisting of
$u$. Define $h_j$, $j\in J$, similarly for $B$. The inputs of
$A\vee B$ after flattening binary $\OR$ gates are precisely these two lists, with occurrences retained.
Equation~\eqref{eq:sd-short-gate-certificate} gives
\begin{equation}
 u=\sum_{i\in I}s_i g_i\eqcomma\quad
 v=\sum_{j\in J}t_j h_j\eqcomma \quad
 z=\sum_{i\in I}r_i g_i+\sum_{j\in J}\rho_jh_j\eqcomma
 \label{eq:three-supports}
\end{equation}
with all coefficients of degree at most $\Delta$. (In a singleton case,
the corresponding coefficient is simply~$1$.) We can therefore write
\[
 u(1-z)=\sum_{i\in I}s_i[g_i(1-z)] \eqcomma
 \qquad
 v(1-z)=\sum_{j\in J}t_j[h_j(1-z)]\eqcomma
\]
and moreover 
\[
 z(1-u)(1-v)
 =\sum_{i\in I}r_i(1-v)[g_i(1-u)]
 +\sum_{j\in J}\rho_j(1-u)[h_j(1-v)] \eqperiod
\]
Each bracket is an approximation error as in \eqref{eq:shorthands} associated to $A$, $B$, or $A\lor B$. (In a singleton
case, it is instead $u(1-u)$ or $v(1-v)$, which follow from the Boolean axioms.)
Thus all three polynomials
\[
 u(1-z),\qquad v(1-z),\qquad z(1-u)(1-v)
\]
are derivable in degree at most $4\Delta$ from approximation errors and the Boolean axioms. The elementary identity
\[
 z-u-v+uv
 =z(1-u)(1-v)-u(1-z)-(1-u)v(1-z)
\]
then derives
\[
 \App_{A\vee B}-(\App_A+\App_B-\App_A\App_B) \eqcomma
\]
as desired. The $\AND$ case is entirely analogous: replacing $u, v$, and $z$ by $(1-u), (1-v)$, and $(1-z)$ in the argument above shows that we can derive
\begin{equation}
 \App_{A\wedge B}-\App_A\App_B
 \label{eq:local-and}
\end{equation}
in the same degree.

We now derive \eqref{eq:approximation-errors} by structural induction. For a metavariable $X$, we have
$D_X=0$ by definition. Constants are exact, and negation gives
 $D_{\neg H}=-D_H$. 

Suppose $G=H\vee K$. Put
\[
 u=\App_{\sigma(H)},\quad
 v=\App_{\sigma(K)},\quad
 z=\App_{\sigma(G)},\qquad
 U=H(a),\quad V=K(a).
\]
The argument above shows that we can derive $z-u-v+uv$ from approximation errors and the Boolean axioms in degree $O_{\Rule}(\Delta)$. By induction, we have already
derived $u-U=D_H$ and $v-V=D_K$. Hence
\[
 D_G
 =z-(U+V-UV)
 =z-u-v+uv+(1-v)D_H+(1-U)D_K \eqcomma
\]
which is then also derivable. For $G=H\wedge K$, 
the same conclusion follows from \eqref{eq:local-and} and
\[
 D_G=z-UV=(z-uv)+vD_H+UD_K \eqcomma
\]
completing the induction. In particular, the polynomials
\begin{equation}
    \label{eq:formula-differences}
 D_{F_i}=\App_{L_i}-F_i(a),
 \qquad
 D_F=\App_L-F(a)
\end{equation}
are derivable in degree $O_{\Rule}(\Delta)$. 
 
Let us now derive $E$ from \eqref{eq:formula-differences}. Since each $a_X$ is Boolean-valued, the polynomial $E^*$ in \eqref{eq:sound-rule-polynomial}
vanishes on the Boolean cube by soundness of $\Rule$: no consistent Boolean assignment to the inputs of $F_i$ and $F$
can satisfy every premise but falsify the conclusion. Thus $E^*$ is derivable from the Boolean axioms in degree $O_{\Rule}(\Delta)$. By the telescoping identity
\begin{equation}
 E-E^*
 =(1-\App_L)\sum_{i=1}^r
 \Big(\prod_{h<i}\App_{L_h}\Big)
 D_{F_i}
 \Big(\prod_{h>i}F_h(a)\Big)\\
 -\Big(\prod_{i=1}^rF_i(a)\Big)D_F
 \label{eq:inference-telescope}
\end{equation}
and the degree bounds in \eqref{eq:formula-differences}, we conclude that $E$ is derivable from approximation errors and Boolean axioms in degree $O_{\Rule}(\Delta)$. Note that the same argument includes
zero-premise Frege axioms, in which case the inference error is $1-\App_L$.

\paragraph{Modular recursion axioms.} We now derive the modular recursion axioms using more or less the same reasoning as for Frege rules. 
As in \cref{sec:AC0pFrege}, write
\[
 U=\MOD_{p,a}(\Gamma,B)\eqcomma\quad
 V=\MOD_{p,a}(\Gamma)\eqcomma\quad
 W=\MOD_{p,a-1}(\Gamma)\eqperiod 
\]
In clausal form, the modular recursion axioms then take the form 
\begin{equation}
    \label{eq:clausal-form}
 B\vee\neg U\vee V \eqcomma
 \quad B\vee U\vee\neg V \eqcomma
 \quad \neg B\vee\neg U\vee W \eqcomma
 \quad \neg B\vee U\vee\neg W \eqperiod
\end{equation}
Define
\[
 b=\App_B \eqcomma
 \quad u=\App_U \eqcomma
 \quad v=\App_V \eqcomma
 \quad w=\App_W \eqcomma
 \quad s=\sum_{A\in\Gamma}\App_A\eqperiod
\]
The exact RS construction \eqref{eq:sd-mod-approx} on modular gates gives
\[
 \begin{aligned}
 u&=1-(s+b-a)^{p-1}\eqcomma\\
 v&=1-(s-a)^{p-1}\eqcomma\\
 w&=1-(s-(a-1))^{p-1} \eqperiod
 \end{aligned}
\]
On the Boolean cube, we therefore have
\[
 b=0\implies u=v,\qquad b=1\implies u=w \eqperiod
\]
Hence, the exact evaluation $C(b,u,v,w)$ for each clause $C$ in \eqref{eq:clausal-form} equals $1$ on the Boolean cube.

The same structural induction as for \eqref{eq:formula-differences} derives
\begin{equation}
 \App_C-C(b,u,v,w) 
\end{equation}
in degree $O_{p}(\Delta)$ for each $C$, while $C(b,u,v,w)-1$ follows from the Boolean axioms in degree at most
$3\Delta$. Combining these derives $1-\App_C$, which is the inference error of
this zero-premise axiom. The empty-input modular axioms $\sdMOD_{p,0}(\emptyset)$ and
$\neg\sdMOD_{p,a}(\emptyset)$ for $a\ne0$ are exact and hence have no inference error. 

\paragraph{Input axioms.} 
Consider an input clause $C=\ell_1 \lor \ldots \lor \ell_m$ for literals $\ell_i$ and $m\leq w$. Its polynomial representation is $p_C = \prod_{i=1}^m (1-\ell_i)$, and its approximation errors are $e_{C, i}=\ell_i(1-\App_C)$ (if $m=1$, the approximation is exact and $p_C = 1-\App_C$). We want to derive $1-\App_C$ from $p_C$ and the $e_{C, j}$. We have the identity 
\begin{equation}
    \label{eq:identity-clause1}
1=
\prod_{i=1}^m(1-\ell_i)
+
\sum_{i=1}^m
\Big(\prod_{j<i}(1-\ell_j)\Big)\ell_i \eqperiod
\end{equation}
Multiplying by $1-\App_C$ we see that
\begin{equation}
    \label{eq:identity-clause2}
\begin{aligned}
    1-\App_C
    &= (1-\App_C)\prod_{i=1}^m(1-\ell_i)
+
\sum_{i=1}^m
\Big(\prod_{j<i}(1-\ell_j)\Big)\ell_i(1-\App_C) \\
& = (1-\App_C)p_C + \sum_{i=1}^m
\Big(\prod_{j<i}(1-\ell_j)\Big)e_{C, i} 
\eqcomma
\end{aligned}
\end{equation}
so we can derive $(1-\App_C)$ in degree at most $w + \Delta$ as desired.

By picking $B$ large enough to cover both the Frege rule and modular recursion cases, the lemma follows. \qedhere

\end{proof}

\subsection{Putting Things Together}
\label{subsec:putting-things-together}

With the technical lemmas established, we can now carry out the steps in the proof overview with explicit parameters.

\begin{proof}[Proof of Theorem~\ref{thm:sd-tradeoff}]
Suppose $\refstd=(L_1,\ldots,L_\tau=\bot)$ violates
\eqref{eq:sd-hilbert-rank-sum}. Fix the degrees in~\eqref{eq:sd-degrees} and sample the
RS approximators consistently across all syntactic subformulas of all proof lines. For each conjunction or disjunction $A$ among the subformulas, consider
the error map
\[
 \mult_A:\polyspace{K}/\PC_{\dthresh}(K)\to(\polyspace{\dthresh}/\sdCD)^{m_A} \;;
 \qquad [f]\mapsto([fe_{A,j}])_j \eqcomma
\]
where $e_{A, j}$ is as in \eqref{eq:shorthands}. 
For $i\in [\ell]$, condition on the choices of approximators below depth $i$. The inputs $g_{A,j}$ at a
depth-$i$ gate are now fixed and have degree at most
$d=\Delta_{i-1}$. Since $pd\le2\Delta_i$ and the $\Delta_i$ are
nondecreasing, \eqref{eq:sd-budget} implies
\[
 K+\Delta_i+p\Delta_{i-1}
 \le K+3\Delta_\ell\le \dthresh.
\]
Lemma~\ref{lem:sd-kernel-growth} therefore gives
$\sdE[\sdRank \mult_A\mid\text{lower depths}]\le h(K)\varepsilon_i$.
As in the approximation definition, gates of fan-in smaller than $t_i$
are represented exactly. Their error maps trivially have rank zero, so the same
bound holds. Taking expectations and summing over $i\in [\ell]$,
\[
 \sdE\sum_A\sdRank \mult_A
 \le h(K)\sum_iM_i\varepsilon_i<h(K) \eqperiod
\]
Here $h(K)>0$ because $1\notin\sdCD$. By subadditivity of rank, we have for some complete
choice of approximators that the joint error map $f\mapsto (\mult_A(f))_{\text{subformulas }A}$ has rank below $h(K)$. Thus, by rank-nullity there exists a nonzero class $[f]$ in the joint kernel $\bigcap_{{\text{subformulas }A}}\ker{\mult_A}$ with a representative $f \in \polyspace{K}$.
Explicitly, this means that there exists a degree-at-most-$K$ polynomial $f$ with
\begin{equation}\label{eq:sd-common-errors}
 f\notin\sdCD,\qquad fe_{A,j}\in\sdCD\quad\text{for every }A,j \eqperiod
\end{equation}

Put $p_L=1-\App_L$. For each line $L$ of $\refstd$, we now derive $fp_L\in\sdCD$. 
Input clauses follow from the last assertion of
\cref{lem:sd-inference-errors}, using $K+w+\Delta_\ell\le \dthresh$.
For an inference 
\[
\frac{L_1,\ldots,L_r}{L}
\]
we have, by similar manipulations as in \eqref{eq:algebraic-inference}, the telescoping identity
\begin{equation}\label{eq:sd-multiplied-inference}
 fp_L=fE+
 \sum_{j=1}^r
 \Big(p_L\prod_{v<j}\App_{L_v}\Big)fp_{L_j}
\end{equation}
 where $E=(\prod_j \App_{L_j})p_L$ is the inference error. \cref{lem:sd-inference-errors} and
\eqref{eq:sd-common-errors} give a derivation of $fE$
in degree at most $K+B\Delta_\ell$, and each remaining term is also derivable in this degree for large enough $B$. Hence, we can then derive the conclusion from the already derived premises $fp_{L_j}$.
By Lemma~\ref{lem:sd-inference-errors}, the polynomial on 
the right in \eqref{eq:sd-multiplied-inference} has degree at most
$K+B\Delta_\ell$, where $B$ is the constant in the lemma statement. Zero-premise axioms have $p_L=E$ and are covered by the same argument.

Since the last line in $\refstd$ has $p_{\bot} = 1 -\App_\bot = 1$ we get that
$f\in\sdCD$, contradicting \eqref{eq:sd-common-errors}.
This proves \eqref{eq:sd-hilbert-rank-sum}, and since $\sum_iM_i\le S$,
it also gives $S\max_i\varepsilon_i\ge1$, which is
\eqref{eq:sd-size-bound}.
\end{proof}

\subsection{Setting Parameters}
\label{subsec:sd-parameters}

Let us now derive Theorems~\ref{thm:sd-fixed-depth} and \ref{thm:sd-depth-two} from \cref{thm:sd-tradeoff}. Fix a putative size~$S$, put $\lambda=\ln(2S)$, and choose each $t_i$ and $\Delta_i$ by
\begin{equation}
 t_i=\left\lceil\frac{\lambda}{\eta_K(\Delta_{i-1})}\right\rceil
 \qquad \Delta_i=(p-1)t_i\Delta_{i-1}\eqperiod
 \label{eq:sd-factor-choice}
\end{equation}
With these parameters, 
\[
\rho_K(\Delta_{i-1})^{t_i} = (1-\eta_K (\Delta_{i-1}))^{t_i}\le e^{-\lambda}=1/(2S)\eqcomma 
\]
so $\sum_iM_i\varepsilon_i\le1/2$.
Theorem~\ref{thm:sd-tradeoff} excludes such a refutation whenever
$\Delta_{\ell-1}\le K\le n$ and
$K+\max\{w+\Delta_\ell,B\Delta_\ell\}\le \dthresh$.

For the asymptotic consequences, put $s=n/2-K$. We use
\begin{equation}\label{eq:sd-binomial-ratio}
 \frac{N_n(K-d)}{N_n(K)}
 \ge\exp\!\left(-C\frac{d(s+d+\sqrt n)}n\right),
 \qquad \frac n4\le K-d\le K\le\frac n2,
\end{equation}
for some absolute constant $C$. To verify \eqref{eq:sd-binomial-ratio}, for $n/4\le j\le n/2$ set
$u=n/2-j$ and $m=n/(16(u+\sqrt n))$. Comparing consecutive binomial
coefficients gives $\binom n{j-v}\ge c\binom nj$ for $0\le v\le m$.
Thus $\binom nj/N_n(j)\le C(u+\sqrt n)/n$; this ratio is also bounded
away from one. Sum
$-\ln(N_n(j-1)/N_n(j))$ over $K-d<j\le K$ to obtain
\eqref{eq:sd-binomial-ratio}. 
Substituting in
\eqref{eq:sd-factor-choice} yields the recurrence
\begin{equation}\label{eq:sd-degree-recurrence}
 \Delta_i\le C_p\lambda\Delta_{i-1}
 \exp\!\left(C\frac{\Delta_{i-1}
 (s+\Delta_{i-1}+\sqrt n)}n\right).
\end{equation}

\begin{proof}[Proof of Theorem~\ref{thm:sd-fixed-depth}]
Fix any constant $A>0$ and suppose $S\le n^A$. Choose
\[
 \dthresh=\frac n2-g,\qquad K=\dthresh-w-\sqrt n,
 \qquad\lambda=\ln(2S)=O_A(\log n).
\]
Then $s=g+w+\sqrt n=o(n/(\log n)^{\ell-2})$.
Inductively, \eqref{eq:sd-degree-recurrence} gives
$\Delta_i=O_{p,\ell,A}((\log n)^i)$ for $i\le\ell-1$:
at these steps $s\Delta_{i-1}/n=o(1)$, and the other exponent terms
also tend to zero. At the last step,
$s\Delta_{\ell-1}/n=o(\log n)$, so
\[
 \Delta_\ell=O_{p,\ell,A}((\log n)^\ell)e^{o(\log n)}=n^{o(1)}.
\]
All uses of \eqref{eq:sd-binomial-ratio} have
$K-\Delta_{i-1}\ge n/4$ for large $n$.
Moreover, $B\Delta_\ell=o(\sqrt n)$, so the fact that
$\dthresh-K=w+\sqrt n$ verifies \eqref{eq:sd-budget}. This contradicts
Theorem~\ref{thm:sd-tradeoff}. Since $A$ was arbitrary, the result follows.
\end{proof}

\begin{proof}[Proof of Theorem~\ref{thm:sd-depth-two}]
Choose
\[
 \dthresh=\frac n2-a\frac n{\sqrt k},\qquad
 K=\frac n2-(a+2)\frac n{\sqrt k}.
\]
If $S<\exp(c\sqrt k\ln n)$, then
$\lambda\le2c\sqrt k\ln n$ for large $n$.
The first step has $\eta_K(1)=\Omega_p(1)$, so
$\Delta_1=O_p(\lambda)$. The next step gives
\[
 \Delta_2\le C_p\lambda^2
 \exp\!\left(C_{p,a}\frac\lambda{\sqrt k}
      +C_p\frac{\lambda^2}n+C_p\frac\lambda{\sqrt n}\right).
\]
Choose $c=c(p,a)>0$ so that the first exponent term is at most
$\tfrac14\ln n$; the other two tend to zero. Consequently,
\[
 \Delta_2=O_{p,a}(k(\log n)^2n^{1/4})=o(n/\sqrt k) \eqperiod
\]
Also $w\le k=o(n/\sqrt k)$ and $\Delta_1\le K$.
Since $\dthresh-K=2n/\sqrt k$, the degree budget holds, again contradicting
Theorem~\ref{thm:sd-tradeoff}.
\end{proof}

\bibliographystyle{alpha}

\addcontentsline{toc}{section}{References}
{\small
\bibliography{../refArticles, ../refBooks, ../refLocal}
}

\end{document}